\newtheorem{theorem}{Theorem}[section]
\newtheorem{lemma}[theorem]{Lemma}
\newtheorem{meta-theorem}[theorem]{Meta-Theorem}
\newtheorem{corollary}[theorem]{Corollary}
\newtheorem{definition}[theorem]{Definition}
\newcommand{\FullOrShort}{full}
  \newcommand{\fullOnly}[1]{#1}
  \newcommand{\shortOnly}[1]{}
    \newcommand{\fullOnly}[1]{}
    \newcommand{\shortOnly}[1]{#1}
\definecolor{darkgreen}{rgb}{0,0.5,0}
\definecolor{dkblue}{rgb}{0,0,0.6}
\crefname{theorem}{Theorem}{Theorems}
\Crefname{lemma}{Lemma}{Lemmas}
\Crefname{observation}{Observation}{Observations}
\Crefname{equation}{}{}
\algnewcommand\algorithmicswitch{\textbf{switch}}
\algnewcommand\algorithmiccase{\textbf{case}}
\newcommand{\eps}{\varepsilon}
\newcommand{\poly}{{\rm poly}}
\providecommand{\E}{{\rm \mathbb{E}}}
\providecommand{\poly}{{\rm poly}}
\newcommand{\set}[1]{\left\{#1\right\}}
\renewcommand{\paragraph}[1]{\medskip\noindent {\bf #1}}
\newcommand{\CONGEST}{$\mathsf{CONGEST}$\xspace}
\newcommand{\LOCAL}{$\mathsf{LOCAL}$\xspace}
\renewcommand{\tilde}{\widetilde}
\newcommand{\red}{\mathsf{R}}
\newcommand{\blue}{\mathsf{B}}
\newcommand{\dred}{d_{\red}}
\newcommand{\dblue}{d_{\blue}}
\newcommand{\typeO}{\mathsf{O}}
\newcommand{\typeC}{\mathsf{C}}
\newcommand{\TSO}{\ensuremath{\mathsf{T_{SO}}}\xspace}
\newcommand{\TBO}{\ensuremath{\mathsf{T_{BO}}}\xspace}
\newcommand{\TLLL}{\ensuremath{\mathsf{T_{LLL}}}\xspace}
\renewcommand{\vec}[1]{\boldsymbol{#1}}
\DeclarePairedDelimiter{\abs}{\lvert}{\rvert}
\let\oldabs\abs
\def\abs{\@ifstar{\oldabs}{\oldabs*}}
\date{}
\title{\huge On the Complexity of Distributed Edge Coloring and Orientation Problems}
\author{
    \begin{minipage}{0.25\textwidth}
        \centering
        Sebastian Brandt\\
        \small CISPA Helmholtz Center for Information Security\\
         \small Saarbrücken, Germany\\
        \small \href{mailto:brandt@cispa.de}{\color{black}brandt@cispa.de}
    \end{minipage}
    \and
    \begin{minipage}{0.23\textwidth}
        \centering
        Fabian Kuhn\\
        \small University of Freiburg\\
        \small Freiburg, Germany\\
        \small \href{mailto:kuhn@cs.uni-freiburg.de}{\color{black}kuhn@cs.uni-freiburg.de}
    \end{minipage}
    \and
    \begin{minipage}{0.33\textwidth}
        \centering
        Zahra Parsaeian\\
        \small University of Freiburg\\
        \small Freiburg, Germany\\
        \small \href{mailto:zahra.parsaeian@cs.uni-freiburg.de}{\color{black}zahra.parsaeian@cs.uni-freiburg.de}
    \end{minipage}
}
\date{}
\begin{document}
\maketitle

\setcounter{page}{0}
\thispagestyle{empty}
\begin{abstract}
	Understanding the role of randomness when solving locally checkable labeling (LCL) problems in the LOCAL model has been one of the top priorities in the research on distributed graph algorithms in recent years. For LCL problems in bounded-degree graphs, it is known that randomness cannot help more than polynomially, except in one case: if the deterministic complexity of an LCL problem is in $\Omega(\log n)$ and its randomized complexity is in $o(\log n)$, then the randomized complexity is guaranteed to be $\poly(\log \log n)$. However, the fundamental question of \emph{which} problems with a deterministic complexity of $\Omega(\log n)$ can be solved exponentially faster using randomization still remains wide open.
	
	We make a step towards answering this question by studying a simple, but natural class of LCL problems: so-called degree splitting problems. These problems come in two varieties: coloring problems where the edges of a graph have to be colored with $2$ colors and orientation problems where each edge needs to be oriented. For $\Delta$-regular graphs (where $\Delta=O(1)$), we obtain the following results.
	
	\begin{itemize}
		\item We gave an exact characterization of the randomized complexity of all problems where the edges need to be colored with two colors, say red and blue, and which have a deterministic complexity of $O(\log n)$.
		\item For edge orientation problems, we give a partial characterization of the problems that have a randomized complexity of $\Omega(\log n)$ and the problems that have a randomized complexity of $\poly\log\log n$.
	\end{itemize}
	
	While our results are cleanest to state for the $\Delta$-regular case, all our algorithms naturally generalize to nodes of any degree $d<\Delta$ in general graphs of maximum degree $\Delta$.
\end{abstract}
\newpage

\section{Introduction and Related Work}
\label{sec:intro}

Understanding the complexity of locally checkable labeling (LCL)
problems is at the core of the area of distributed graph
algorithms. LCLs were introduced by Naor and Stockmeyer in
\cite{naor95} as labelings of bounded-degree graphs with labels from a
finite set of labels such that the correctness of a labeling can be
verified by checking a constant neighborhood of each node of the
graph. Classic examples for LCL problems are various coloring problems or the problem of computing a maximal independent set (MIS).

\subparagraph{Deterministic LCL Complexity Landscape.} Since the publication of \cite{naor95}, researchers have made astonishing progress in understanding the distributed complexity of solving LCL problems in the \LOCAL model\footnote{In the
	\LOCAL model~\cite{peleg00}, the nodes of a graph can exchange messages of
	arbitrary size over the edges of the graph in synchronous
	rounds. For a formal definition, we refer to
        \Cref{sec:model}.} and in related models. Apart from studying
      the complexity of specific problems (see, e.g.,
      \cite{BalliuHOS19,BrandtFHKLRSU16,Ghaffari2020ImprovedDD,Linial1992,KuhnMW16,GhaffariKuhn21,fraigniaud16local,Kuhn2009,LenzenW11,BarenboimE11}
      for a small selection), there also is substantial work on
      establishing the landscape of possible LCL complexities
      \cite{BalliuBOS18,BalliuHKLOS18,Balliu0OS20,chang16,ChangP17,FOCS18-derand,stoc17_complexity,Chang20,GrunauR022}
      and also towards understanding the task of deciding if a given
      LCL problem belongs to a certain complexity
      class~\cite{Balliu0CORS19,Chang20,BrandtHKLOPRSU17,binarylabelings,naor95,Balliu0COSS22}. For
      general bounded-degree graphs, we know that the deterministic
      \LOCAL model complexity of LCL problems is either $O(1)$, in the
      range $\Omega(\log\log^* n)$--$O(\log^* n)$, or in the range
      $\Omega(\log n)$--$O(n)$. By using techniques of \cite{naor95},
      Chang and Pettie~\cite{ChangP17} observed that there cannot be any LCL problem with a (deterministic or randomized) complexity in the range $\omega(1)$--$o(\log\log^* n)$. Further, Chang, Kopelowitz, and Pettie~\cite{chang16} proved that there cannot be any deterministic complexities between $\omega(\log^* n)$ and $o(\log n)$. It has further been shown in \cite{BalliuBOS18,BalliuHKLOS18} that the complexity landscape in the range $\Omega(\log n)$--$O(n)$ is dense, i.e., there are no further gaps. In bounded-degree trees, the landscape is even more sparse. In this case, the possible deterministic LCL complexities are $\Theta(1)$, $\Theta(\log^* n)$, $\Theta(\log n)$, and $\Theta(n^{1/k})$ for every integer $k\geq 1$~\cite{ChangP17,GrunauR022,Balliu0COSS22}. In bounded-degree trees, it is further known that for problems of complexity $\Omega(\log n)$, determining the exact asymptotic complexity of a given problem is decidable~\cite{Chang20}.

\subparagraph{Randomized LCL Complexities.} Understanding the role and power of randomness in the context of distributed LCL problems has been a major objective of the research in the area (see, e.g., \cite{stoc17_complexity,chang16,ChangP17,FOCS18-derand,Barenboim2013,Balliu0OS20,GhaffariK19} for papers that explicitly deal with this topic). As many LCL problems require some kind of local symmetry breaking, the use of randomness in distributed LCL algorithms is often quite natural. Recently, a series of papers~\cite{stoc17_complexity,FOCS18-derand,GGHIR23,GGR2020,GhaffariGrunauFOCS24,Rozhon2020} established generic derandomization techniques that allow to turn any $T$-round randomized \LOCAL algorithm for an LCL problem into a $T\cdot\tilde{O}(\log^2 n)$-round deterministic \LOCAL algorithm.\footnote{We use $\tilde{O}(x)$ to denote expressions that are upper bounded by $x\cdot \poly\log x$.} Using randomization can therefore not gain more than a factor of $\tilde{O}(\log^2 n)$. It is further known that for LCL problems of complexity $O(\log^* n)$, randomization does not help. There is however one important class of problems for which randomization provably helps exponentially. It was shown in \cite{BrandtFHKLRSU16,GhaffariS17,chang16} that computing a sinkless orientation of the edges of a bounded-degree graph requires $\Theta(\log n)$ rounds deterministically and $\Theta(\log\log n)$ rounds with randomization. Interestingly, this is a more general phenomenon. As shown in \cite{ChangP17}, on bounded-degree graphs, every randomized LCL algorithm with a round complexity of $o(\log n)$ can be sped up to run in the time required to solve an instance of the constructive Lov\'asz Local Lemma (LLL) problem with a polynomial LLL criterion (cf.\ \Cref{sec:model}). The randomized LLL round complexity is currently known to be $\tilde{O}(\log^4\log n)$. We therefore know that any randomized $o(\log n)$-round LCL algorithm can be sped up to run in $\tilde{O}(\log^4\log n)$ rounds. It is further known that any randomized $o(\log\log n)$-round LCL algorithm can be sped up to run in $O(\log^* n)$ rounds deterministically~\cite{chang16}. We therefore know that there is a class of problems that can be solved by using LLL algorithms that have polylogarithmic deterministic complexity and that have randomized complexity $\tilde{O}(\log^4\log n)$. In bounded-degree trees, it is even known that the complexity of the constructive LLL problem is exactly  $\Theta(\log\log n)$. Hence, in bounded-degree trees, some of the problems that can be solved in $\Theta(\log n)$ rounds deterministically can be solved in $\Theta(\log\log n)$ rounds with randomization and in the case of bounded-degree trees, these also are the only problems for which randomization helps. While we know that the randomized LLL complexity class exists, we do not yet have a thorough understanding of which problems belong to this family, not even for simple families of LCL problems in bounded-degree trees, raising the following important question.

\vspace{0.2cm}
\begin{mdframed}[innertopmargin=8pt,innerbottommargin=5pt,innermargin=5pt]
	\center
	\emph{Which} LCL problems (on bounded-degree trees or graphs) with a deterministic complexity of $\Omega(\log n)$ can be solved in sublogarithmic (and therefore in $O(\log \log n)$, resp.\ $O(\poly(\log \log n))$) rounds using randomization?
\end{mdframed}

It seems likely that providing a full characterization (i.e., designing an algorithm that decides for every LCL problem whether it can be solved with randomization in $o(\log n)$ rounds or requires $\Omega(\log n)$ rounds), if possible at all, will require a structured research program as such questions are notoriously hard.\footnote{For instance, it is not even known whether it is decidable whether a given LCL problem can be solved in constant time on bounded-degree trees.}
We make first progress towards this goal by providing such characterizations for some of the most fundamental problem classes amongst LCL problems exhibiting deterministic complexities of $\Omega(\log n)$ rounds.
Note that a variety of classic LCL problems, such as $(\Delta + 1)$-coloring, $(2\Delta - 1)$-edge coloring, maximal matching or maximal independent set, can be solved in $O(\log^* n)$ rounds on constant-degree graphs~\cite{goldberg1987parallel}, and therefore do not fall into the class of $\Omega(\log n)$-round problems.
Instead, the typical problems considered in the literature that fall in the deterministic $\Omega(\log n)$-regime are \emph{splitting problems} (which form the heart of classic divide-and-conquer approaches for solving other LCLs), which encompass both orientation and coloring problems.
In their standard form, splitting problems fall into the class of so-called \emph{binary labeling problems}---a class that has been instrumental in the context of complexity classification, including (efficient) decidability, in the deterministic setting~\cite{binarylabelings}.
Even restricted to binary labeling problems, the task of deciding between sublogarithmic and (at least) logarithmic randomized complexity is a major open problem on the path towards understanding randomness for LCL problems.

\subparagraph{Binary Labeling Problems.} In an attempt towards exactly characterizing the distributed complexities of all problems in a large class of LCL problems, the authors of \cite{binarylabelings} considered the class of binary labeling problems in the edge labeling formalism~\cite{Brandt19}. In this formalism, labels have to be assigned to the edges of a two-colored bipartite graph. For each side of the bipartite graph, the valid configurations are given by a set of allowed multisets of edge labels. The standard way to use this framework in general graphs is to consider the bipartite graph between nodes and edges. Each of the original nodes of a graph $G=(V,E)$ is colored white and in the middle of each edge $e\in E$, we place an additional black node. Each edge $e=\set{u,v}\in E$ is then split into two half-edges $(u,e)$ and $(v,e)$. The labels are placed on the half-edges and the validity of a labeling is determined by the allowed white node configurations and the allowed black edge configurations. The binary labeling problems are all problems that can be expressed with only two different labels (we call them red ($\red$) and blue ($\blue$) in the following) in the edge labeling formalism. In the graph setting with labels on half-edges, binary labeling problems correspond to edge orientation and edge $2$-coloring problems. If the allowed edge configurations are $\set{\red,\red}$ and $\set{\blue,\blue}$, we get problems where the edges of a graph need to be colored red and blue. If the only allowed edge configuration is $\set{\red,\blue}$, we get problems where all edges need to be oriented and if the allowed edge configurations are $\set{\red,\red}$ and $\set{\red,\blue}$, we get problems where we need to compute a partial orientation of the edges. All other combinations are symmetric or they lead to trivial cases. 

In \cite{binarylabelings}, the deterministic complexity of all binary labeling problems in the general edge labeling formalism in biregular bounded degree trees\footnote{More concretely, for two constants $d$ and $\delta$, it is assumed that the bipartite graph is properly colored with colors white and black and that all white nodes have degree $d$ and all black nodes have degree $\delta$. Note that when labeling half-edges in $\Delta$-regular general graphs, we have $d=\Delta$ and $\delta=2$.} was exactly characterized. We thus have an exact characterization of the deterministic complexity of all LCLs in regular bounded-degree trees that either require to color the edges with two colors or to (partially) orient the edges. In \cite{binarylabelings}, the authors also already prove some initial results on the randomized complexity of binary labeling problems.
However, as pointed out above, classifying the randomized complexity of those problems is still largely open and our results can be seen as a step towards closing this gap. We in particular give an exact characterization of the randomized complexity of solving arbitrary edge $2$-coloring problems in regular bounded-degree trees. 

\subparagraph{Degree Splitting Problems.} A special and important case of binary labelings are balanced oriented and unoriented degree splittings. Unoriented degree splittings are colorings of the edges with two colors such that every node $v$ is incident to approximately $\deg(v)/2$ edges of each color. An oriented degree splitting is an orientation of the edges so that for every node, indegree and outdegree are approximately equal. Degree splitting problems are not only natural to study, but also serve as key tools in other distributed tasks. Such problems lie at the heart of distributed divide-and-conquer algorithms and variants of splitting have in particular enabled efficient edge-coloring algorithms. In \cite{GhaffariS17, Ghaffari2020ImprovedDD}, splitting was used to obtain the first deterministic polylog-time edge coloring algorithm that uses only by a $(1 + o(1))$-factor more colors than the commonly considered $2\Delta - 1$ colors (thereby solving a long-standing open problem from the 90s). In \cite{GhaffariKMU18}, splitting is even used to obtain edge colorings with only $(1 + o(1))\Delta$ colors. Also, recent work on hypergraph sinkless orientation~\cite{BMNSU25} (which is a weak form of degree splitting) improved edge coloring and already inspired further follow-ups~\cite{JM25,JMS25}. Since our results provide stronger splitting guarantees, we expect them to similarly facilitate future advances, especially for randomized algorithms where previous progress has been limited. The distributed computation of degree splittings was in particular studied in \cite{GhaffariS17,Ghaffari2020ImprovedDD}.  In \cite{Ghaffari2020ImprovedDD} it was shown that one can compute both versions of degree splitting with constant discrepancy in time $\tilde{O}(\Delta)\cdot\log n$ deterministically and in time $\tilde{O}(\Delta)\cdot\log\log n$ randomized. The discrepancy is the maximum absolute difference between the two kinds of edges at any node. For oriented degree splitting, it was even shown that for odd-degree nodes, a discrepancy of $1$, and for even-degree nodes, a discrepancy of $2$ can be achieved. We improve the algorithm of \cite{Ghaffari2020ImprovedDD} and we show that also for unoriented degree splitting, the same discrepancy as for oriented degree splitting can be achieved.

\subsection{Our Contributions}\label{sec:contributions}
All our algorithms are based on existing subroutines for the following
three problems: \emph{sinkless orientation}, \emph{balanced
  orientation}, and \emph{constructive LLL}. The three problems have
different randomized and deterministic complexities and also different
complexities on general graphs and on trees.
As all our algorithms are deterministic reductions to three subroutines, it will be convenient to express all the algorithm complexities as functions of the complexities of the three subroutines. We therefore introduce the following notation. We use \TSO (for sinkless orientation), \TBO (for balanced orientation), and \TLLL (for constructive LLL) to denote the time complexities of the three subroutines in $n$-node graphs in the \LOCAL model. We formally define the three problems in \Cref{sec:model}. Below, we just give the concrete time bounds that the three shortcuts stand for. In the case of \TBO and \TLLL, the bounds depend on a degree parameter $d\geq 1$. For the randomized algorithms, the time bounds hold with high
probability. The time bounds for general graphs follow from
\cite{ChungPS14,FOCS18-derand,GGHIR23}, the randomized bound for
general graphs in addition requires the algorithm of
\cite{Davies23}. The time bounds for tree-like graphs follows from
\cite{ChangHLPU20}. Tree-like graphs in particular include graphs of
the form $T^{O(1)}$ if $T$ is a tree. Where not specified, the bounds
hold on general graphs.
\begin{eqnarray*}
	\TSO & = & O(\log n)\quad\text{for deterministic and}\quad O(\log\log n)\quad\text{for randomized algorithms}\\
	\TBO(d) & = & d\cdot\log d\cdot \log^{1.71}\log d \cdot \TSO\\
	\TLLL(d) & = & \TSO\quad\text{on tree-like graphs}\\
	\TLLL(d) & = & \tilde{O}(\log^4 n)\quad\text{for deterministic algorithms on general graphs}\\
	\TLLL(d) & = & O\left(\frac{d}{\log d}\right) + \tilde{O}(\log^4 \log n)\quad\text{for randomized algorithms on general graphs}
\end{eqnarray*}

We will express the round complexities of all our algorithms as a deterministic function of the above complexities. To get the concrete bound for a specific case (trees or general graphs / deterministic or randomized algorithms), one just has to plug in the concrete bounds from above. For randomized algorithms, the resulting bound then also holds with high probability.

\subparagraph{Generalized Balanced Degree Splitting.}
As a first result, we generalize and also strengthen the main degree splitting result of \cite{Ghaffari2020ImprovedDD}. For this result, we use the edge labeling formalism for graphs (where we label half-edges), but in addition, to be able to express our generalization we also assume that each edge of a given graph $G=(V,E)$ has a binary input type. Concretely, we assume that each edge is either of type $\typeC$ (coloring edge) or of type $\typeO$ (orientation edge). A valid labeling of the half-edges with labels $\red$ and $\blue$ (red-blue coloring of the half-edges) must satisfy the following. For every edge $e$ of type $\typeC$, either both half-edges get assigned $\red$ or both half-edges get assigned $\blue$. For every edge $e$ of type $\typeO$, one half-edge must get label $\red$ and the other half-edge must get label $\blue$. For a given labeling of the half edges and a node $v\in V$, we use $\dred(v)$ and $\dblue(v)$ to denote the number of incident half-edges of $v$ that are labeled $\red$ and $\blue$ (colored red and blue), respectively.

\begin{theorem}\label{thm:balancedsplitting}
	Let $G=(V,E)$ be an $n$-node multigraph without selfloops, let $\Delta$ be the maximum degree of $G$, and assume that every edge $e\in E$ is either of type $\typeC$ or of type $\typeO$. Then, there is an $\TBO(\Delta)$-round algorithm to compute a red-blue coloring of the half-edges of $G$ such that for every node $v\in V$ of degree $d(v)$, it holds that
	\begin{equation}\label{eq:balancedsplitting1}
		\dred(v), \dblue(v) \in \left[\frac{d(v)}{2} - 1, \frac{d(v)}{2} + 1\right].
	\end{equation}
	If all edges are of type $\typeC$, we can even guarantee that
	\begin{equation}\label{eq:balancedsplitting2} 
		\dred(v) \in \set{\left\lfloor\frac{d(v)}{2}\right\rfloor, \left\lfloor\frac{d(v)}{2}\right\rfloor+1}
		\quad\text{and}\quad
		\dblue(v) \in \set{\left\lceil\frac{d(v)}{2}\right\rceil-1, \left\lceil\frac{d(v)}{2}\right\rceil}.
	\end{equation}
\end{theorem}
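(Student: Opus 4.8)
The plan is to reduce the generalized balanced degree splitting problem to the balanced orientation subroutine on an auxiliary multigraph. First, I would build an auxiliary multigraph $H$ from $G$ as follows. Every edge of type $\typeO$ is kept as is; an orientation of such an edge in $H$ will directly translate into assigning $\red$ to the half-edge at the tail and $\blue$ to the half-edge at the head (or some fixed convention). For every edge $e=\{u,v\}$ of type $\typeC$, I would like the two half-edges to receive the \emph{same} label, so a single orientation bit does not suffice; instead I would replace $e$ by a small gadget that forces the "both red or both blue" behavior while still being expressible through a balanced orientation. The natural choice is to split $e$ at its black midpoint node (which has degree $2$) and think of the two half-edges as two parallel "pendant" edges; orienting them consistently is enforced by the degree-$2$ black node, whose balanced-orientation constraint (indegree $=$ outdegree $=1$) says exactly that one half-edge points into the black node and the other points out, i.e. reading the labels off along the edge from $u$ to $v$ gives the same color at both ends. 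I would then run the balanced orientation algorithm on $H$; this takes $\TBO(\Delta)$ rounds since $H$ has the same asymptotic maximum degree as $G$ and the midpoint nodes have degree $2$, and $H$ can be simulated on $G$ with $O(1)$ overhead.

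The second step is to verify that the discrepancy guarantee of balanced orientation translates to \eqref{eq:balancedsplitting1}. For a node $v$ of degree $d(v)$ in $G$, its half-edges in $H$ correspond either to endpoints of type-$\typeO$ edges or to type-$\typeC$ pendant edges at $v$; in both cases the label $\red$ vs $\blue$ on the half-edge at $v$ corresponds to whether the incident edge of $H$ is oriented out of $v$ or into $v$ (under the fixed convention). Hence $\dred(v)$ equals the outdegree of $v$ in $H$ and $\dblue(v)$ equals its indegree, and a discrepancy of at most $2$ from balanced orientation gives $\dred(v),\dblue(v)\in[d(v)/2-1,d(v)/2+1]$. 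For odd-degree $v$ the balanced orientation algorithm of \cite{Ghaffari2020ImprovedDD} even yields discrepancy $1$; this will be useful for the sharper bound.

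For the strengthened guarantee \eqref{eq:balancedsplitting2} when all edges are of type $\typeC$, the coarse discrepancy-$2$ bound is not enough: for even $d(v)$ we need $\dred(v)\in\{d(v)/2,d(v)/2+1\}$, i.e. $\dred(v)$ is never allowed to be $d(v)/2-1$, so the error must be one-sided. The idea is to break symmetry by adding one auxiliary pendant half-edge at every even-degree node before applying balanced orientation: attach a new degree-$1$ "dummy" node $w_v$ to each even-degree $v$ via a new edge, and hardwire that edge to be oriented \emph{into} $v$ (equivalently, pre-color the half-edge at $v$ blue, the half-edge at $w_v$ red, and let the balanced-orientation subroutine treat $v$ as having odd degree $d(v)+1$ with that one orientation already fixed). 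Then the subroutine balances the remaining $d(v)$ real half-edges of $v$ together with the fixed incoming one into indegree and outdegree differing by at most $1$; since one unit of indegree is already committed, the $d(v)$ real half-edges get at least $d(v)/2$ reds, i.e. at most $d(v)/2$ blues, which is exactly \eqref{eq:balancedsplitting2} after accounting for parity. One must check that this modification keeps the algorithm's guarantees intact when some orientations are pre-imposed, and that the dummy nodes do not affect the round complexity; this bookkeeping — ensuring the balanced orientation subroutine of \cite{Ghaffari2020ImprovedDD} can be run with a prescribed partial orientation and still achieves discrepancy $1$ at the odd-degree nodes — is the main technical obstacle, and I expect it to require either invoking a black-box property of that subroutine or a short direct argument about how it handles forced edges.
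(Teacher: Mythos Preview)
Your reduction for \eqref{eq:balancedsplitting1} has a genuine error in the type-$\typeC$ gadget. Under your own convention (``$\red$ at $v$ iff the incident $H$-edge is oriented out of $v$''), consider a type-$\typeC$ edge $\{u,v\}$ subdivided at a midpoint $m$. If indegree $=$ outdegree $=1$ at $m$, then the path is either $u\to m\to v$ or $u\leftarrow m\leftarrow v$; in the first case the half-edge at $u$ is $\red$ (outgoing) and the half-edge at $v$ is $\blue$ (incoming), i.e.\ \emph{different} colors. The configurations that give the same color at $u$ and $v$ are $u\to m\leftarrow v$ (both $\red$) and $u\leftarrow m\to v$ (both $\blue$), which are precisely the ones with indegree $2$ or outdegree $2$ at $m$. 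So a balanced midpoint enforces type-$\typeO$ behavior, not type-$\typeC$. Moreover, even the premise is wrong: with the $\pm 1$ slack, a degree-$2$ node has outdegree in $[0,2]$, so balanced orientation imposes no constraint at all at $m$. The gadget therefore neither forces the right parity nor any parity.

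The paper does \emph{not} reduce to a single balanced orientation on an auxiliary graph. Instead it performs $O(\log\Delta)$ rounds of degree halving: at each step, compute an approximate balanced orientation, pair up outgoing edges at every node, and replace each pair $\{v,u\},\{v,w\}$ by a virtual edge $\{u,w\}$. The key is how types propagate under pairing: two edges of the \emph{same} type yield a virtual edge of type $\typeO$, two of \emph{different} types yield type $\typeC$. This rule is exactly what makes any valid half-edge coloring of the virtual edge lift back to a valid coloring of the two original edges with opposite colors at $v$. After reducing to maximum degree $\le 4$, the paper finishes with a separate lemma (a clustering plus sinkless-orientation argument) that handles the mixed-type base case directly. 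For \eqref{eq:balancedsplitting2}, the paper also does not pre-orient dummy pendants; it computes a maximal matching among even-degree nodes, colors matched edges $\red$, and observes that the residual even-degree nodes form an independent set whose edges can be paired into virtual type-$\typeO$ edges, reducing to the all-odd case where \eqref{eq:balancedsplitting1} already gives \eqref{eq:balancedsplitting2}.
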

\begin{proof}
	Follows from \Cref{lemma:balancedsplitting1,lemma:balancedsplitting2} in \Cref{sec:balancedsplitting}.
\end{proof}

The above theorem strengthens and generalizes the result of \cite{Ghaffari2020ImprovedDD} in the following way. In \cite{Ghaffari2020ImprovedDD} it is shown that a balanced edge orientation where every node $v$ has indegree and outdegree in $\left[\frac{d(v)}{2} - 1, \frac{d(v)}{2} + 1\right]$ can be computed in time $\TBO(\Delta)$. That is, \Cref{eq:balancedsplitting1} of \Cref{thm:balancedsplitting} holds if all edges are of type $\typeO$. The authors of \cite{Ghaffari2020ImprovedDD} further prove that if all edges are of type $\typeC$, then one can achieve a partition into red and blue edges such that for all $v\in V$, we have $\dred(v), \dblue(v) \in \left[\frac{d(v)}{2} - 2, \frac{d(v)}{2} + 2\right]$. Implicitly, the techniques in \cite{Ghaffari2020ImprovedDD} also imply that the same result holds when the edges can be of type $\typeC$ or type $\typeO$.

\begin{corollary}\label{cor:balancedsplitting}
	Let $G=(V,E)$ be an $n$-node multigraph without selfloops, let $\eps>0$ be a parameter, and assume that every edge $e\in E$ is either of type $\typeC$ or of type $\typeO$. Then, there is an $\TBO(1/\eps)$-round algorithm to compute a red-blue coloring of the half-edges of $G$ such that for every node $v\in V$ of degree $d(v)$, $\dred(v), \dblue(v) \in \left[(1-\eps)\cdot \frac{d(v)}{2} - 1, (1+\eps)\cdot \frac{d(v)}{2} + 1\right]$.
\end{corollary}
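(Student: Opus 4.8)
The plan is to reduce the $\eps$-relaxed splitting statement to the absolute-constant-discrepancy statement of \Cref{thm:balancedsplitting} by a standard degree-bucketing argument. The point is that \Cref{thm:balancedsplitting} gives an additive error of $1$ per node, which is already a multiplicative error of at most $\eps$ at every node of degree $d(v) \ge 2/\eps$; the only nodes that need extra care are the low-degree nodes, and there the additive error of $1$ is exactly what the statement allows anyway. So in fact I expect the corollary to follow almost immediately once we observe that the bound in \Cref{thm:balancedsplitting} is simultaneously additive and (for large degrees) multiplicative.

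Concretely, first apply \Cref{thm:balancedsplitting} to $G$ directly, obtaining in $\TBO(\Delta)$ rounds a red-blue coloring of the half-edges with $\dred(v),\dblue(v) \in [d(v)/2 - 1,\, d(v)/2 + 1]$ for every $v$. For a node $v$ with $d(v) \ge 2/\eps$ we have $1 \le \eps\, d(v)/2$, so $d(v)/2 + 1 \le (1+\eps)d(v)/2$ and $d(v)/2 - 1 \ge (1-\eps)d(v)/2 \ge (1-\eps)d(v)/2 - 1$; thus the claimed interval contains $[d(v)/2-1, d(v)/2+1]$ and the bound holds. For a node $v$ with $d(v) < 2/\eps$, the claimed interval $[(1-\eps)d(v)/2 - 1,\, (1+\eps)d(v)/2 + 1]$ still contains $[d(v)/2 - 1,\, d(v)/2 + 1]$ since $(1-\eps)d(v)/2 - 1 \le d(v)/2 - 1$ and $(1+\eps)d(v)/2 + 1 \ge d(v)/2 + 1$ for $\eps \ge 0$. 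Hence the same coloring works for \emph{all} nodes regardless of degree, and the running time is $\TBO(\Delta)$.

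The only subtlety is that the corollary advertises a running time of $\TBO(1/\eps)$ rather than $\TBO(\Delta)$, so the above one-shot application is not quite what is wanted when $\Delta \gg 1/\eps$. To get the $1/\eps$-dependence, I would instead run the procedure in $O(\log(\eps\Delta))$ recursive phases: in each phase, first use \Cref{cor:balancedsplitting} with the constant parameter (say $\eps_0 = 1/2$) applied to graphs of max degree $O(1/\eps)$—or more precisely, repeatedly split the current edge set in half using \Cref{thm:balancedsplitting} restricted to nodes whose current degree exceeds $2/\eps$—halving every large degree while incurring only additive error $1$ per split, until all degrees drop to $O(1/\eps)$, and then apply \Cref{thm:balancedsplitting} once more at the bottom level. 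Summing the geometrically decaying additive errors across the $O(\log(\eps\Delta))$ levels, together with a union/telescoping bound over how the discrepancies at a node accumulate, gives a total multiplicative distortion of $(1\pm\eps)$ plus an additive $O(1)$, and the total round complexity is $\TBO(1/\eps)$ since each level operates on a graph of degree $O(1/\eps)$ and the number of levels is absorbed into the $\tilde O$ and the $\log$-factors already present in $\TBO$.

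The main obstacle is bookkeeping the error accumulation correctly through the recursion: each halving step introduces an additive error that, relative to the shrinking degree at deeper levels, could blow up multiplicatively if one is careless about how the rounding interacts with odd degrees. The clean way to handle this is to track, at each node, the discrepancy $\dred(v) - \dblue(v)$ and show it stays bounded by a constant throughout (using the stronger type-$\typeC$ guarantee \Cref{eq:balancedsplitting2} where applicable, which pins the discrepancy to $\{-1,0,1\}$), so that the final relative error is governed entirely by the last-level degree $\Theta(1/\eps)$ and the constantly many accumulated additive errors—yielding exactly the stated $[(1-\eps)d(v)/2 - 1,\, (1+\eps)d(v)/2 + 1]$ window.
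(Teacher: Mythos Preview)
Your first two paragraphs are fine: applying \Cref{thm:balancedsplitting} once to $G$ does produce a coloring that lies inside the claimed interval for every node. The problem is entirely in the running time, as you already note. Your proposed fix in the third paragraph, however, does not achieve $\TBO(1/\eps)$. You write that ``each level operates on a graph of degree $O(1/\eps)$'', but this is false for the early levels: the very first application of \Cref{thm:balancedsplitting} is on a graph of maximum degree $\Delta$ and therefore already costs $\TBO(\Delta)$ rounds, regardless of which nodes you declare yourself to ``care about''. Halving the degree afterwards does not undo that cost. Moreover, the recursive halving does not obviously produce a single red/blue coloring at the end---it produces a partition into $\Theta(\eps\Delta)$ buckets, and you would still need to explain how to merge those back into two classes with the right per-node balance. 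The appeal to ``the $\log$-factors already present in $\TBO$'' to absorb the $O(\log(\eps\Delta))$ levels is also not justified, since $\TBO(d)$ scales linearly in $d$, not polylogarithmically.

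The paper's argument avoids all of this with a one-shot \emph{node-splitting} reduction rather than a recursive edge-splitting: each node $v$ with $d(v)\ge\lceil 2/\eps\rceil$ is split into at most $\lceil \eps d(v)/2\rceil$ virtual copies, each receiving between $\lceil 2/\eps\rceil$ and $2\lceil 2/\eps\rceil$ of $v$'s edges. The resulting virtual graph has maximum degree $O(1/\eps)$, so a single call to \Cref{thm:balancedsplitting} on it costs $\TBO(1/\eps)$; pulling the coloring back, each virtual copy contributes additive error at most $1$, and there are at most $\lceil \eps d(v)/2\rceil\le \eps d(v)/2+1$ copies, giving exactly the advertised window. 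This is the missing idea in your proposal.
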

\begin{proof}
	The corollary follows by splitting each node $v$ of $G$ into virtual nodes $v_1,\dots,v_k$ and partition the edges of $v$ among $v_1,\dots,v_k$ such that each virtual node gets between $\lceil 2/\eps\rceil$ and $2\cdot \lceil 2/\eps\rceil$ many edges  (or keep $v$ as it is if $d(v)<\lceil 2/\eps\rceil$). The number of virtual nodes per node is then at most $\lceil \eps d(v)/2\rceil$. We apply the algorithm of \Cref{thm:balancedsplitting} to the obtained virtual graph and use the obtained edge coloring on $G$. By  \Cref{eq:balancedsplitting1}, we then have
	\[
	\dred(v),\dblue(v) \leq \frac{d(v)}{2} + \left\lceil \eps \cdot \frac{d(v)}{2}\right\rceil \leq \left(1+\eps\right)\cdot \frac{d(v)}{2} + 1.\qedhere
	\]
\end{proof}

\subparagraph{Arbitrary Unoriented Red-Blue Splitting.}
As one of our main results, we generalize the second part of
\Cref{thm:balancedsplitting} (\Cref{eq:balancedsplitting2}) to
computing an arbitrary split into red and blue edges. We show that one
can efficiently compute a red-blue partitioning of the edges so that
the red degree (and consequently also the blue degree) of any node of
degree $\Delta$ only takes on one of the two values $y$ or $y+1$  (defined as the problem
$\Pi_{\Delta}(y)$ below). The red and blue
degrees of any node of degree $d<\Delta$ can take on one of three
different values, which depend on $d$ (defined as problem $\Pi(y)$ below). Concretely, we solve the following family of problems.

\begin{definition}[Arbitrary Red-Blue Coloring Problem]\label{def:arbitrary2splitting}
  Let $\Delta\geq 2$ and $y\in \set{0,\dots,\Delta-1}$ be two integer
  parameters. We define:
  \begin{itemize}
    \item{\rm\bf\boldmath Problem $\Pi_\Delta(y)$ for nodes of degree
    $\Delta$:} The problem $\Pi_{\Delta}(y)$ defines the set of
  $\red$/$\blue$-labelings of the edges $E$ of a graph $G=(V,E)$ of
  maximum degree $\Delta$ such that for all node $v\in V$ of degree
  $\Delta$, the number of incident red edges is $\dred(v)\in\set{y,
    y+1}$.
\item{\rm\bf\boldmath Problem $\Pi(y)$ for nodes of general degree:} 	
	The problem $\Pi(y)$ extends the definition and also requires non-trivial guarantees for nodes of degree $d<\Delta$. The problem $\Pi(y)$ defines the set of all $\red$/$\blue$-labelings of the edges such that a) $\Pi_\Delta(y)$ is solved and b) there exists vector $(x_1,\dots,x_{\Delta-1})$ such that for each $d\in \set{1,\dots,\Delta-1}$, we have
	\begin{itemize}
		\item Each node $v$ of degree $d$ has $\dred(v) \in \set{x_d - 1, x_d, x_d + 1}$.
		\item Let $\tau_d:=\frac{d}{2\Delta}$, $\tilde{x}_d:=\frac{d}{\Delta}\cdot\left(y+\frac{1}{2}\right)$, $\alpha_d:=\left\lfloor\tilde{x}_d\right\rfloor$, and $\beta_d:=\tilde{x}_d-\alpha_d$. If $\beta_d\leq \tau_d$, then $x_d=\alpha_d$, if $\beta_d\geq 1-\tau_d$, then $x_d=\alpha_d+1$, otherwise, $x_d\in\set{\alpha_d, \alpha_d+1}$.
                \end{itemize}
                \end{itemize}
\end{definition}

For nodes $v$ of degree $d$, the definition implies that the allowed red degree $\dred(v)$ is either $\lfloor (y+1/2)\cdot d/\Delta\rfloor \pm 1$ or $\lceil (y+1/2)\cdot d/\Delta\rceil \pm 1$ and if $(y+1/2)\cdot d/\Delta$ is close enough to an integer $x$, then $\dred(v)$ must be $x\pm 1$ for nodes of degree $d$. The next theorem shows that problem $\Pi(y)$ can be solved almost as efficiently as the balanced degree splittings of \Cref{thm:balancedsplitting}.

\begin{theorem}\label{thm:arbitrary2splitting}
	On $n$-node graphs with maximum degree $\Delta$, problem $\Pi(y)$ can be solved in $O(\log\Delta)\cdot \TBO(\Delta)$ rounds.
\end{theorem}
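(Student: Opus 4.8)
The plan is to solve $\Pi(y)$ by a recursion of depth $O(\log\Delta)$ in which every level performs a single balanced degree splitting via \Cref{thm:balancedsplitting}, specifically its type-$\typeC$ variant \Cref{eq:balancedsplitting2}. Given an $n$-node graph $G$ of maximum degree $\Delta$ and a parameter $y$, we first handle the two boundary cases: for $y=0$ we color all half-edges $\blue$, and for $y=\Delta-1$ we color all half-edges $\red$; in both cases one checks directly from \Cref{def:arbitrary2splitting} that the resulting labeling solves $\Pi(y)$. For $0<y<\Delta-1$ we apply the type-$\typeC$ balanced splitting to $G$, obtaining a partition $E=E_0\sqcup E_1$ with $|E_0\cap\delta(v)|\in\set{\floor{d/2},\floor{d/2}+1}$ for every node $v$ of degree $d$, and $|E_1\cap\delta(v)|=d-|E_0\cap\delta(v)|$. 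We then recursively solve $\Pi(\ceil{y/2})$ on $G[E_0]$ and $\Pi(\floor{y/2})$ on $G[E_1]$ and output the union of the two half-edge colorings. Since the maximum degree of each $G[E_i]$ is at most $\floor{\Delta/2}+1<\Delta$ for $\Delta\geq 3$, while $\Delta\le 2$ is covered by the boundary cases, the recursion bottoms out after $O(\log\Delta)$ levels.

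For the running time, observe that the subgraphs arising at a fixed recursion depth are pairwise edge-disjoint (they partition $E$), so the corresponding invocations of the balanced-splitting algorithm can be run simultaneously: each node executes the protocol of \Cref{thm:balancedsplitting} separately on each of the groups into which its incident edges have been split. A whole level therefore costs $\TBO(\Delta')$ rounds, where $\Delta'\leq\Delta$ is the largest maximum degree among the subgraphs at that level, which is at most $\TBO(\Delta)$ since $\TBO$ is nondecreasing. Summing over the $O(\log\Delta)$ levels gives the claimed $O(\log\Delta)\cdot\TBO(\Delta)$ bound.

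The real work is the correctness proof, which goes by induction on the recursion via a composition lemma: if the coloring returned for $G[E_0]$ solves $\Pi(\ceil{y/2})$ and the one for $G[E_1]$ solves $\Pi(\floor{y/2})$ (each relative to the maximum degree of the corresponding subgraph), then their union solves $\Pi(y)$ on $G$. For a node $v$ of degree $d$, writing $d_i:=|E_i\cap\delta(v)|$, we have $\dred(v)=\dred^{(0)}(v)+\dred^{(1)}(v)$, and the two summands are constrained by the inductive hypothesis in terms of $d_0,d_1$, the halved parameters, and the subgraph degrees. The quantities $\tilde{x}_d=\tfrac{d}{\Delta}(y+\tfrac12)$, $\alpha_d=\floor{\tilde{x}_d}$, $\beta_d=\tilde{x}_d-\alpha_d$, and the threshold $\tau_d=\tfrac{d}{2\Delta}$ in \Cref{def:arbitrary2splitting} are tailored precisely so that this composition closes up: the fractional ideals are additive up to an error of magnitude at most $\tau_d$ coming from splitting $d$ into $d_0+d_1$ and from the asymmetric rounding $\ceil{y/2},\floor{y/2}$, and the case distinction according to whether $\beta_d\le\tau_d$, $\beta_d\ge 1-\tau_d$, or neither is exactly the case distinction that decides whether this $\tau_d$-error can shift the rounded value. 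I expect this composition lemma to be the main obstacle. In particular, naively assuming only that ``$\Pi$ holds on each half'' is too weak, since the admissible windows of the two halves would combine into a window that is too wide and, over $\Omega(\log\Delta)$ levels, the slack would accumulate to $\Theta(\log\Delta)$ rather than staying $O(1)$. The fix will be to carry a stronger invariant that pins $\dred(v)$ down to a prescribed rounding of $\tilde{x}_d$, with freedom left only in the boundary regime where $\beta_d$ is near $\tau_d$ (or near $1-\tau_d$), and to verify that this strengthened statement is preserved by the recursive step across all cases of the parity of $y$ and of the two possible values of each $d_i$; the timing bound and the recursion layout above are routine by comparison.
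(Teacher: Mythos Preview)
Your approach diverges from the paper's in a way that creates a genuine gap. The paper does not split the whole graph and recurse on both halves; instead, given a solution to $\Pi(2y)$ (resp.\ $\Pi(2y\pm1)$), it applies the balanced splitting of \Cref{lemma:balancedsplitting2} \emph{only to the subgraph of current red edges} and recolors one half of them blue. The key is that a degree-$\Delta$ node then has red degree in $\{2y,2y+1\}$ before the step, and since $\lfloor 2y/2\rfloor=\lfloor(2y+1)/2\rfloor=y$, the two-value window collapses cleanly to $\{y,y+1\}$ after one application of \Cref{eq:balancedsplitting2}. Iterating this (together with the color-swap equivalence $\Pi(y)\equiv\Pi(\Delta-1-y)$) and a doubling argument on the set of reachable starting problems gives the $O(\log\Delta)$ bound.

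Your two-sided recursion does not preserve the two-value window, and the ``stronger invariant'' you allude to would have to do real work you have not supplied. Concretely, take $\Delta=4$, $y=1$, so every degree-$4$ node must satisfy $\dred(v)\in\{1,2\}$. Your split gives $d_0\in\{2,3\}$ and $d_1=4-d_0$; you recurse with $\Pi(\lceil 1/2\rceil)=\Pi(1)$ on $G[E_0]$ (whose maximum degree is $3$ as soon as some degree-$4$ node gets $d_0=3$) and $\Pi(0)$ on $G[E_1]$ (all blue, so $\dred^{(1)}=0$). Now consider a degree-$4$ node $v$ with $d_0=2$: in $G[E_0]$ it is a degree-$2$ node in a max-degree-$3$ instance, and \Cref{def:arbitrary2splitting} gives $\tilde{x}_2=\tfrac{2}{3}\cdot\tfrac{3}{2}=1$, hence $x_2=1$ and $\dred^{(0)}(v)\in\{0,1,2\}$. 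Thus $\dred(v)\in\{0,1,2\}$, violating $\Pi_\Delta(1)$. More generally, even if you strengthened the inductive hypothesis to a two-value window at \emph{every} degree, a degree-$\Delta$ node lands at two different degrees $d_0\in\{\lfloor\Delta/2\rfloor,\lfloor\Delta/2\rfloor+1\}$ in $G[E_0]$, and summing two independent two-value windows across the two halves still yields three values; the composition does not close without coupling the two recursive calls. The paper avoids this entirely by recursing on only one color class, so that the degree-$\Delta$ constraint is handled by a single application of \Cref{eq:balancedsplitting2} rather than by adding two recursive outputs.
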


To prove \Cref{thm:arbitrary2splitting}, we assume that we are given a red/blue-partition of the edges of
$G$ such that every node $v$ of degree $\Delta$ has either $k$ or
$k+1$ red edges (and consequently either $\Delta-k-1$ or $\Delta-k$
blue edges). We can then use \Cref{eq:balancedsplitting2} of
\Cref{thm:balancedsplitting} to further split the red part (or the
blue part) in a balanced way and to recolor half of the old red part
blue (or half of the old blue part red). This can be done in such a
way that for nodes $v$ of degree $\Delta$, $\dred(v)$ and $\dblue(v)$
can still only take on two consecutive values. We will see that when
starting from a uniform coloring, by repeating this $O(\log\Delta)$
times (and possibly switching the colors in the end), one can solve
$\Pi(y)$ for every $y\in \set{0,\dots,\Delta-1}$.

We remark that when solving $\Pi(y)$ in $\Delta$-regular graphs, the
set of red edges forms what is known as a $\set{y,y+1}$-factor of the
graph $G$~\cite{graphfactors_book}. It was proven by
Tutte~\cite{tutte78} that every $\Delta$-regular graph contains a
$\set{y,y+1}$-factor for every $y\in\set{0,\dots,\Delta}$. An easy way
to see this is as a consequence of Vizing's theorem~\cite{vizing64} that every graph
of maximum degree $\Delta$ can be properly edge-colored with
$\Delta+1$ colors. Given $(\Delta+1)$-edge coloring of a
$\Delta$-regular graph, clearly all the edges of colors
$\set{1,\dots,y+1}$ form a $\set{y,y+1}$-factor (since every node can
miss at most one of those colors). The recent distributed
$(\Delta+1)$-edge coloring algorithm of Bernshteyn~\cite{Bernshteyn22} therefore
implies that a $\set{y,y+1}$-factor can be computed in time
$\mathrm{poly}(\Delta\cdot\log n)$ in the \LOCAL model. Prior to the
current work, this is the only somewhat efficient distributed
algorithm to compute a $\set{y,y+1}$-factor of regular graphs that we
are aware of. Note that \Cref{thm:arbitrary2splitting} implies that a
$\set{y,y+1}$-factor of a $\Delta$-regular graph can be computed in
randomized time $\tilde{O}(\Delta)\cdot\log\log n$. 

\subparagraph{Unbalanced Orientations.}
In \cite{Ghaffari2020ImprovedDD}, it was shown that one can compute a \emph{balanced orientation}, i.e., an orientation in which every node $v$ has indegree and outdegree approximately $\deg(v)/2$ in $O(\log\log n)$ randomized rounds in bounded-degree graphs. Further, in \cite{binarylabelings}, it was shown that some unbalanced orientation problems require $\Omega(\log n)$ rounds even with randomization in regular bounded-degree trees. Concretely, they show that this holds for a sufficiently large constant $c>0$ and orientations in which all nodes either have outdegree $\leq \Delta/4-c\sqrt{\Delta}$ or indegree $\leq \Delta/4 - c\sqrt{\Delta}$. We generalize both these results. The first theorem gives a family of unbalanced edge orientation problems that can be solved in $O(\log\log n)$ randomized rounds in trees and in $\poly\log\log n$ randomized rounds in general bounded-degree graphs.

\begin{theorem}\label{thm:orientationsUpper}
	Let $G=(V,E)$ be an $n$-node graph with maximum degree $\Delta$ and let $\rho_{1},\rho_{2}\in[0,1/2]$ be two parameters for which $\rho_1+\rho_2\geq 1/2$. There is a $\TLLL(\Delta^2)$-round algorithm to compute an edge orientation in which every node $v\in V$ either has outdegree at most $\rho_1\cdot\deg(v) + O(\sqrt{\Delta\log\Delta})$ or indegree at most $\rho_2\cdot\deg(v)+O(\sqrt{\Delta\log\Delta})$.
\end{theorem}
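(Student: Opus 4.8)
The plan is to reduce the problem to a single invocation of the constructive LLL subroutine. The key structural observation is that the per-node constraint is a \emph{disjunction} --- a node is satisfied if its outdegree is small \emph{or} its indegree is small --- and the hypothesis $\rho_1+\rho_2\ge 1/2$ should let us decide, locally and at random, which of the two disjuncts each node aims for. Concretely, I would have each node $v$ independently join a set $S_1$ with probability $p_1$ and otherwise join $S_2:=V\setminus S_1$, where $p_1$ is any value in $[\max\{0,1-2\rho_2\},\ \min\{1,2\rho_1\}]$; this interval is nonempty exactly because $1-2\rho_2\le 2\rho_1$. Nodes in $S_1$ are the ones whose outdegree we will control, and nodes in $S_2$ the ones whose indegree we will control. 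Every edge joining $S_1$ and $S_2$ is oriented toward its $S_1$-endpoint: this is simultaneously favorable for both endpoints, since it adds nothing to the outdegree of the $S_1$-endpoint nor to the indegree of the $S_2$-endpoint. Every monochromatic edge $e$ (both endpoints in $S_1$, or both in $S_2$) is oriented toward a uniformly random endpoint $c_e$, independently across such edges.

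Next I would analyze the resulting random orientation node by node. Consider a node $v$ and condition on $v\in S_1$. The edges contributing to the outdegree of $v$ are precisely the (monochromatic) edges $\{v,w\}$ with $w\in S_1$ whose coin satisfies $c_{\{v,w\}}\ne v$; for a fixed neighbor $w$ this has probability $p_1/2\le\rho_1$, and these events are independent over the $\deg(v)$ neighbors of $v$. Hence, conditioned on $v\in S_1$, the outdegree of $v$ is distributed as $\mathrm{Bin}\big(\deg(v),\,p_1/2\big)$ with mean at most $\rho_1\deg(v)$, so a Chernoff bound gives that it exceeds $\rho_1\deg(v)+c\sqrt{\Delta\log\Delta}$ with probability at most $\Delta^{-\Omega(c^2)}$. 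By the symmetric argument, conditioned on $v\in S_2$ the indegree of $v$ is $\mathrm{Bin}\big(\deg(v),\,(1-p_1)/2\big)$ with mean at most $\rho_2\deg(v)$ and obeys the same tail bound. Let $A_v$ be the bad event ``$v\in S_1$ and the outdegree of $v$ exceeds $\rho_1\deg(v)+c\sqrt{\Delta\log\Delta}$, or $v\in S_2$ and the indegree of $v$ exceeds $\rho_2\deg(v)+c\sqrt{\Delta\log\Delta}$''; avoiding all the $A_v$ produces an orientation of the kind the theorem asks for.

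It remains to phrase the avoidance of all $A_v$ as a constructive-LLL instance and to invoke the subroutine. The independent random variables are the $|V|$ node-choices ($S_1$ vs.\ $S_2$) and the $|E|$ edge coins $c_e$; the event $A_v$ is determined by the choices of $v$ and its neighbors together with the coins of the edges incident to $v$, i.e.\ by $O(\Delta)$ variables. Two events $A_v$ and $A_u$ can share a variable only if $\dist(u,v)\le 2$, so the dependency degree is at most $\Delta^2$. Choosing the constant $c$ large enough makes $\Pr[A_v]\le\Delta^{-\Omega(c^2)}$ smaller than any prescribed inverse polynomial of the dependency degree, so the polynomial LLL criterion is met and the constructive-LLL subroutine returns a satisfying assignment of all variables in $\TLLL(\Delta^2)$ rounds. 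Decoding the orientation from that assignment (cross edges toward the $S_1$-side, monochromatic edges toward their coin) takes $O(1)$ further rounds, which yields the claimed $\TLLL(\Delta^2)$ bound; plugging in the values of \TLLL{} from \Cref{sec:contributions} gives $\poly\log\log n$ randomized rounds on general bounded-degree graphs and $O(\log\log n)$ on tree-like graphs.

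The main obstacle is this last step: one has to push $\Pr[A_v]$ below a fixed polynomial in the dependency degree $\Theta(\Delta^2)$ while keeping the additive slack as small as $O(\sqrt{\Delta\log\Delta})$, \emph{independent of $n$} --- and this is precisely what forces the $\sqrt{\log\Delta}$ factor in the slack; a naive union bound over all $n$ nodes would instead require slack $\Theta(\sqrt{\Delta\log n})$ and would not yield an algorithm running in $\TLLL(\Delta^2)$ rounds. The remaining points are routine: that the interval for $p_1$ is nonempty (this is where $\rho_1+\rho_2\ge 1/2$ enters), that the binomial-tail analysis survives the degenerate regimes where $S_1$ or $S_2$ is forced empty (e.g.\ $\rho_2=0$ forces $p_1=1$ and then necessarily $\rho_1=1/2$), and that sampling, the LLL instance, and the decoding are all local so that the overall complexity is indeed $\TLLL(\Delta^2)$.
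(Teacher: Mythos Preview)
Your proof is correct and follows the same overall framework as the paper: randomly partition the nodes into two classes (one targeting small outdegree, the other small indegree), orient the edges by local random choices, and phrase the resulting tail bounds as an LLL instance with dependency degree $O(\Delta^2)$.

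The one notable difference is in how cross edges are handled. The paper introduces an additional biased coin $Y_e$ with $\Pr(Y_e=1)=\nu:=\rho_1+\rho_2-\tfrac12$ and uses it to orient every $S_1$--$S_2$ edge, tuning the partition probability $\eta=\frac{1/2-\rho_2}{1-\rho_1-\rho_2}$ so that the expected out-/indegrees hit $\rho_1 d(v)$ and $\rho_2 d(v)$ \emph{exactly}. You instead orient every cross edge deterministically toward its $S_1$-endpoint and let the partition probability $p_1$ range over $[\max\{0,1-2\rho_2\},\min\{1,2\rho_1\}]$, which gives expectations $\le\rho_1 d(v)$ and $\le\rho_2 d(v)$. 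Since the theorem only asks for upper bounds on the relevant degree, your simplification is perfectly adequate; it also sidesteps the paper's degenerate case $\rho_1=\rho_2=\tfrac12$ (where $\eta$ is formally undefined) and makes the role of the hypothesis $\rho_1+\rho_2\ge\tfrac12$ more transparent, as it is precisely the condition that your interval for $p_1$ is nonempty. In fact, your construction coincides with the paper's in the boundary case $\rho_1+\rho_2=\tfrac12$ (where $\nu=0$).
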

We present a random construction of the orientation of the edges that
is based on independent random variables on the nodes and edges. We
show that the probability that a node does not satisfy the condition
required by the theorem
is polynomially small in $\Delta$, and
the orientation of each edge only depends on the random variables of
itself and its two endpoints. As a result, the problem of computing
such an orientation can be expressed as an LLL instance with
dependency degree $O(\Delta^2)$ and a polynomial LLL criterion. The
orientation can therefore be computed in time $\TLLL(\Delta^2)$ by
using existing distributed LLL algorithms.
We also generalize the lower bound that was proven in \cite{binarylabelings}.

\begin{theorem}\label{thm:orientationsLower}
	Let $\Delta\geq 1$ be an integer, let $\rho_{1},\rho_{2}\geq 0$ be two parameters for which $\rho_{1}+\rho_{2} = 1/2$, and let $c>0$ be a sufficiently large constant. In $\Delta$-regular trees, computing an orientation for which every node $v$ either has outdegree at most $\rho_{1}\cdot\Delta - c\sqrt{\Delta}$ or indegree at most $\rho_{2}\cdot\Delta - c\sqrt{\Delta}$ requires $\Omega(\log_\Delta n)$ rounds in the \LOCAL model.
\end{theorem}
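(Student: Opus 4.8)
The plan is to prove \Cref{thm:orientationsLower} with the round elimination technique, generalizing the lower bound of \cite{binarylabelings} from the symmetric case $\rho_1=\rho_2=1/4$ to all $\rho_1,\rho_2\ge 0$ with $\rho_1+\rho_2=1/2$. I phrase the orientation problem in the node-edge-checkable (binary labeling) formalism of \cite{Brandt19}: half-edges carry a label in $\set{\mathrm{out},\mathrm{in}}$, each degree-$2$ edge node requires its two half-edges to be $\set{\mathrm{out},\mathrm{in}}$, and a degree-$\Delta$ node is satisfied iff its number of incident $\mathrm{out}$-labels lies in $[0,a]\cup[\Delta-b,\Delta]$, where $a=\lfloor\rho_1\Delta-c\sqrt{\Delta}\rfloor$ and $b=\lfloor\rho_2\Delta-c\sqrt{\Delta}\rfloor$; call this problem $\Pi(a,b)$ (a degenerate $0$-round-solvable or infeasible case is fine, since then the bound is vacuous). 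On a $\Delta$-regular tree every radius-$r$ neighborhood is itself a tree, so a $T=o(\log_\Delta n)$-round \LOCAL algorithm for $\Pi(a,b)$ can be pushed through the (randomized) automatic speedup theorem $T$ times, producing a $0$-round algorithm for the $T$-fold round-eliminated problem; a contradiction from such a $0$-round algorithm gives the $\Omega(\log_\Delta n)$ bound.

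First I would show that round elimination essentially preserves the class of ``threshold orientation'' problems: $\mathcal{R}(\Pi(a,b))$, after discarding harmless set-labels, is at least as hard as some $\Pi(a',b')$ of the same shape. Computing $\mathcal{R}$ here is a bounded combinatorial task (cf.\ \cite{binarylabelings}): a relaxed node must commit, on its $\Delta$ half-edges, either to at most $a$ potential $\mathrm{out}$'s or to at least $\Delta-b$ forced $\mathrm{out}$'s, which is again a threshold, and the induced edge-side constraint is again ``not both half-edges forced to the same single label.'' The invariant to maintain along $\Pi(a_0,b_0)=\Pi(a,b),\Pi(a_1,b_1),\dots$ is the analog of $\rho_1+\rho_2=1/2$: the interval $(a_t,\Delta-b_t)$ of out-degrees that leaves a node unsatisfied should still have width $\Omega(\sqrt{\Delta})$ and still contain $\Delta/2$. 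The genuinely new point relative to the symmetric case of \cite{binarylabelings} is that for $\rho_1\ne\rho_2$ the two thresholds drift at different rates under $\mathcal{R}$, so one must bound the drift of $a_t$ and of $\Delta-b_t$ separately and argue the total drift over the $\Omega(\log_\Delta n)$ steps stays inside the $c\sqrt{\Delta}$ slack.

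Next I would handle the base case: the problem reached after any $t\le c'\log_\Delta n$ steps admits no $0$-round randomized algorithm with success probability bounded away from a constant. This is anti-concentration: a $0$-round algorithm labels the half-edges at a degree-$\Delta$ node from node-local randomness, so the adversarially realized out-degree is a sum of $\Theta(\Delta)$ weakly dependent $\set{0,1}$ variables and hence lands within $O(\sqrt{\Delta})$ of its mean; since the forbidden interval around $\Delta/2$ still has width $\Omega(\sqrt{\Delta})$, the node is unsatisfied with constant probability once $c$ is a large enough constant---this is exactly where the hypothesis on $c$ is needed. Propagating this constant failure probability back through the $t$ round-elimination steps works because each step inflates the failure probability by at most a $\poly(\Delta)$ factor (the maintained alphabet is of bounded size); starting from the $n^{-\Omega(1)}$ failure probability of a high-probability algorithm, after $t=o(\log_\Delta n)$ steps it is still $n^{-\Omega(1)}\cdot\Delta^{o(\log_\Delta n)}=o(1)$, contradicting the base case and yielding the theorem. (A purely deterministic version follows from the same combinatorial core via the order-invariant form of round elimination.)

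The hard part is the round-elimination step for asymmetric thresholds: establishing that $\mathcal{R}$ maps the threshold-orientation family into itself up to relaxation and that the forbidden out-degree band around $\Delta/2$ neither loses $\Delta/2$ nor shrinks below width $\Omega(\sqrt{\Delta})$ over $\Omega(\log_\Delta n)$ iterations. Once that structural statement is in hand, the reduction to a $0$-round problem, the anti-concentration base case, and the failure-probability bookkeeping all follow standard patterns.
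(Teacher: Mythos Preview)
Your approach via round elimination is entirely different from the paper's, and there is a real gap in it.

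The paper does \emph{not} use round elimination. It proves the lower bound by an infeasibility argument on high-girth expanders: take a random $\Delta$-regular graph, modify $O(\sqrt{n})$ edges to get girth $\Omega(\log_\Delta n)$ (this is the step borrowed from \cite{binarylabelings,Alon10}), and show that no orientation of the required kind exists on that graph. Concretely, if $C$ is the set of nodes with outdegree $\le\rho_1\Delta-c\sqrt{\Delta}$ and $V\setminus C$ the rest, the degree constraints force $|E(C)|$ to be large, while Friedman's eigenvalue bound $\lambda_n\le\Delta+3\sqrt{\Delta}$ forces $|E(C)|\le\frac{|C|(n-|C|)}{n}(\Delta+3\sqrt{\Delta})+O(\sqrt{n})$; combining the two yields $\rho_1+\rho_2>1/2$ once $c$ is large enough, a contradiction. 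Since nodes in a $\Delta$-regular tree cannot distinguish it from this high-girth graph in $o(\log_\Delta n)$ rounds, the bound follows. In particular, the symmetric $\rho_1=\rho_2=1/4$ lower bound in \cite{binarylabelings} that you cite is also proved this way, not by round elimination.

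The gap in your plan is the ``bounded drift'' step. You want the forbidden out-degree interval $(a_t,\Delta-b_t)$ to survive $\Omega(\log_\Delta n)$ applications of $\mathcal{R}$ while staying inside the $c\sqrt{\Delta}$ slack. But $\Delta$ is fixed and $n$ is unbounded, so ``total drift $\le c\sqrt{\Delta}$ over $\Omega(\log_\Delta n)$ steps'' forces the per-step drift to be zero for all sufficiently large $n$; in other words you need $\Pi(a,b)$ to be an exact fixed point (or to relax to one), not merely to drift slowly. You never argue this, and your own description (``the two thresholds drift at different rates'') suggests it is not. Any nonzero combinatorial drift exhausts the slack after $O(\sqrt{\Delta})$ iterations and yields only a lower bound that is a function of $\Delta$, not $\Omega(\log_\Delta n)$. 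Your $0$-round base case is also off: in the white-active picture the degree-$\Delta$ node chooses its own out-degree deterministically from its randomness and can always land in $[0,a]\cup[\Delta-b,\Delta]$, so failure occurs at the edge constraint (two adjacent nodes agreeing), not by anti-concentration of a sum. If you want to salvage the approach, the actual work is to show that $\Pi(a,b)$ relaxes to a genuine fixed point of $\mathcal{R}$ whenever $a+b<\Delta/2$; that is the missing structural lemma, and it is not a bookkeeping step.
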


To prove the theorem, we show how to construct a $\Delta$-regular graph with girth $\Omega(\log_{\Delta} n)$ for which an orientation for which every node has outdegree at most $\rho_{1}\cdot\Delta - c\sqrt{\Delta}$ or indegree at most $\rho_{2}\cdot\Delta - c\sqrt{\Delta}$ (where $\rho_{1}+\rho_{2} = 1/2$) does not exist if the constant $c$ is chosen sufficiently large. As in $o(\log_{\Delta} n)$ rounds, nodes cannot locally distinguish a $\Delta$-regular tree from a high-girth $\Delta$-regular graph, this implies that in $\Delta$-regular trees, we need $\Omega(\log_{\Delta} n)$ rounds to compute the desired orientation.

\Cref{thm:orientationsUpper,thm:orientationsLower} approximately
capture the edge orientations that can be computed in $O(\log\log n)$
randomized rounds in $\Delta$-regular trees or in $\poly\log\log n$
rounds in $\Delta$-regular graphs in the following way. Any locally
checkable orientation problem (without input) in $\Delta$-regular
graphs can be expressed as a list of allowed outdegrees. Let $d_1,
d_2, \dots, d_k$ be the outdegrees that a node of degree $\Delta$ is
allowed to have in a valid solution to a given problem. We can define
the $\theta$-smoothed version of this problem as the problem that
allows all outdegrees $d$ for which there is a value $d_i$ ($i\in
\set{1,\dots,k}$) for which $d\in [d_i-\theta,d_i+\theta]$. On the one
hand, \Cref{thm:orientationsUpper} shows that whenever there exists
$i,j\in \set{1,\dots,k}$ such that $d_i \leq
\Delta/2+O(\sqrt{\Delta\log\Delta})$, $d_j\geq\Delta/2 -
O(\sqrt{\Delta\log\Delta})$, and $d_j-d_i\leq
\Delta/2+O(\sqrt{\Delta\log\Delta})$ (we can have $i=j$), then an
$O(\sqrt{\Delta\log\Delta})$-smoothed version of the problem can be
solved in time $\poly\log\log n$. On the other hand,
\Cref{thm:orientationsLower} shows that whenever there does not exist
two such values, then for a small enough constant $c>0$, not even a
$c\sqrt{\Delta}$-smoothed version of the problem can be
solved in time $\poly\log\log n$.

\smallskip

The technical details of
\Cref{thm:orientationsUpper,thm:orientationsLower} appear in
\Cref{sec:unbalanced}. Further, in \Cref{sec:smallmsg}, we discuss to what extent our algorithms can be implemented with small messages.

\section{Model and Definitions}
\label{sec:model}

\subparagraph{Mathematical Definitions and Preliminaries.} For a (multi)graph $G=(V,E)$ and a node $v\in V$, we use $d_G(v)$ to denote the degree of $v$ (in the case of a multigraph, the number of incident edges of $v$). Whenever $G$ is clear from the context, we write $d(v)$ instead of $d_G(v)$. Further, whenever $G$ denotes the communication graph and $G$ is clear from the context, we use $n$ to denote the number of nodes of $G$ and $\Delta$ to denote the maximum degree of $G$.

\subparagraph{Communication Model.} In the \LOCAL
model~\cite{Linial1992,peleg00}, the nodes of an $n$-node graph
$G=(V,E)$ communicate in synchronous rounds over the edges of
$G$. The nodes are equipped with unique $O(\log n)$-bit IDs and in each
round, each node $v\in V$ can perform some arbitrary internal computation,
send a message of arbitrary size to each neighbor, and receive
the messages sent to it by its neighbors. The time (or round)
complexity of an algorithm is defined as the maximum number of rounds
required for all nodes to terminate. The time complexity is typically
expressed as a function of global network parameters such as $n$ or
the maximum degree $\Delta$. We
assume that those global parameters are known to the nodes of
$G$.\footnote{This assumption can usually be dropped or at least
  significantly relaxed 
  by only requiring the knowledge of some approximations of the parameters~(see, e.g., \cite{KormanSV13}).}

\subparagraph{Basic Subroutines.} As described in \Cref{sec:contributions}, our algorithms use the following three subroutines in our algorithms.
\begin{itemize}
	\item \emph{Sinkless Orientation:} The sinkless orientation problem on an $n$-node multigraph $G=(V,E)$ asks for an orientation of the edges $E$ such that every node $v\in V$ of degree $\geq 3$ has at least one outgoing edge. It was shown in \cite{GhaffariS17,Ghaffari2020ImprovedDD} that the sinkless orientation can be solved in $O(\log n)$ rounds deterministically and in $O(\log\log n)$ rounds with randomization.
	\item \emph{Balanced Orientation:} The problem of computing a balanced orientation of an $n$-node multigraph $G=(V,E)$ asks for an orientation of the edges $E$ such that every node $v\in V$ has outdegree $\in [d(v)/2-1, d(v)/2 + 1]$. If $\Delta$ is the maximum degree of $G$, it is shown in \cite{Ghaffari2020ImprovedDD} that a balanced orientation can be computed in time $O(\Delta\cdot \log\Delta\cdot\log^{1.71}\Delta)\cdot \TSO$.
	\item \emph{Constructive LLL:} The Lov\'asz Local Lemma (LLL) considers the following setting. A set $V_H$ of (bad) events that are defined as functions over some underlying product probability space. Let $H=(V_H,E_H)$ be the \emph{conflict graph} of those events, i.e., $H$ contains an edge between two events iff they are dependent (i.e., if they depend on intersecting sets of random variables in the underlying product space). The LLL states that if the maximum degree of $H$ is $d$ and the probability of each event is upper bounded by $p$ such that $ep(d+1)\leq 1$, then the probability that none of the events occurs is positive. The instance is said to have a polynomial LLL criterion if $p\leq d^{-\gamma}$ for a constant $\gamma>0$. If $\gamma$ is a sufficiently large constant, there are efficient distributed algorithms to determine a setting for the underlying random variables that avoids all the bad events. The best known round complexities for those algorithms are $O(\log n)$ for deterministic algorithms in tree-like dependency graphs~\cite{ChangHLPU20}, $O(\log\log n)$ for randomized algorithms in tree-like dependency graphs~\cite{ChangHLPU20}, $\tilde{O}\big(\log^4 n\big)$ for deterministic algorithms in general graphs~\cite{ChungPS17,FOCS18-derand,GGHIR23}, and $O\big(\frac{d}{\log d}\big) + \tilde{O}(\log^4\log n)$ for randomized algorithms in general graphs~\cite{ChungPS17,FOCS18-derand,GGHIR23,Davies23}.
\end{itemize}

\section{Generalized Balanced Degree Splitting}
\label{sec:balancedsplitting}

The objective of this section is to prove \Cref{thm:balancedsplitting}. We do this in several steps. We first prove that in asymptotically the same time as computing a sinkless orientation, one can compute a red-blue coloring of the half-edges of a graph with edge types such that each node sees at least one half-edge of each color.

\begin{lemma}\label{lem:reduce_to_SO}
	Let $G=(V,E)$ be an $n$-node multigraph without selfloops and assume that every edge $e\in E$ is either of type $\typeC$ or of type $\typeO$. Then, there is an $\TSO$-round distributed algorithm to compute a coloring of the half-edges of $G$ that is consistent with the edge types and for which every node $v\in V$ of degree $d(v)\geq 3$ has $\dred(v)\geq 1$ and $\dblue(v)\geq 1$.
\end{lemma}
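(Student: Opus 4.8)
The plan is to reduce the problem to computing a sinkless orientation of a suitable auxiliary multigraph. The starting observation is what the requirement means at a node $v$: a red half-edge at $v$ exists exactly when $v$ is the tail of some incident orientation edge \emph{or} $v$ is incident to some red coloring edge, and symmetrically for blue. In particular, if \emph{all} edges at $v$ are of type $\typeO$, then $\dred(v),\dblue(v)\ge 1$ says precisely that $v$ is neither a source nor a sink; hence the restriction of the lemma to graphs with only orientation edges is exactly the \emph{sinkless-and-sourceless orientation} problem. So a first step is to note that this problem can be solved in $\TSO$ rounds as well -- either because the round-elimination-based algorithms behind the $\TSO$ bound for sinkless orientation also produce an incoming edge at every node of degree $\ge 3$, or by a short local reduction to plain sinkless orientation. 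This already shows the lemma is morally as hard as sinkless orientation, so one cannot hope to do better than $\TSO$.

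For the general case I would build an auxiliary multigraph $G'$: replace every node $v$ of $G$ by two \emph{witness nodes} $v^{\red}$ and $v^{\blue}$, and replace every edge $e=\{u,v\}$ by a constant-size gadget attached to $u^{\red},u^{\blue},v^{\red},v^{\blue}$, designed so that (i) every orientation of the gadget's edges encodes a legal state of $e$ -- an orientation of $e$ if $e$ has type $\typeO$, a color of $e$ if $e$ has type $\typeC$ -- and (ii) the gadget contributes exactly one edge to each of $v^{\red}$ and $v^{\blue}$, and an edge is oriented out of $v^{\red}$ (resp.\ $v^{\blue}$) only when the encoded state of $e$ gives $v$ a red (resp.\ blue) half-edge. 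Then $|V(G')|=O(n+|E(G)|)\le n^{O(1)}$, every node of $G$ can simulate its $O(1)$ associated nodes of $G'$ with only an $O(1)$-factor increase in locality, and therefore a sinkless orientation of $G'$ is computable in $O(\log|V(G')|)=\TSO$ deterministic rounds, resp.\ $O(\log\log|V(G')|)=\TSO$ randomized rounds. Decoding every gadget yields a type-consistent half-edge coloring of $G$; and since $d(v)\ge 3$ forces $\deg_{G'}(v^{\red})=\deg_{G'}(v^{\blue})=d(v)\ge 3$, both witness nodes have an outgoing edge, which by (ii) certifies $\dred(v)\ge 1$ and $\dblue(v)\ge 1$.

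The main obstacle, and the technical heart of the proof, is the gadget design: one must guarantee that the decoding is valid for \emph{every} sinkless orientation of $G'$, not just for the intended ones. For an orientation edge $e=\{u,v\}$ the gadget must couple $u^{\red}$ with $v^{\blue}$ (for $e$ oriented $u\to v$) and $v^{\red}$ with $u^{\blue}$ (for $v\to u$), so that no sinkless orientation can route edges out of \emph{both} $u^{\red}$ and $v^{\red}$; for a coloring edge the gadget must instead couple $u^{\red}$ with $v^{\red}$ and $u^{\blue}$ with $v^{\blue}$ while forbidding any ``mixed'' orientation of the gadget that would force the two half-edges of $e$ to get different colors. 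Because sinkless orientation has no upper-degree or parity constraints, enforcing such couplings needs care -- the natural tools are low-degree helper nodes placed inside each gadget (which the sinkless-orientation subroutine ignores because their degree is below the threshold, yet which funnel the orientation so that ``$v^{\red}$ has an outgoing edge'' can only be realized through an edge consistent with a single legal state of $e$), together with, if necessary, first fixing the orientation edges via a sinkless-and-sourceless orientation and only then resolving the coloring edges so that no node whose sole remaining option is a shared coloring edge is left ``stuck''. Verifying correctness then reduces to a finite case analysis over the possible local orientations of each gadget; I expect getting this analysis exactly right (in particular for the coloring-edge gadgets and for mixed nodes that can draw on both edge types) to be where essentially all the work lies.
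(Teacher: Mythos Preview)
Your reduction has a structural gap that is not just a matter of working out details. The constraints you need each gadget to enforce are \emph{packing} constraints (``not both $u^{\red}$ and $v^{\red}$ have their gadget edge outgoing'' for a $\typeO$-edge; ``not both $u^{\red}$ and $v^{\blue}$ outgoing'' for a $\typeC$-edge), while sinkless orientation only gives \emph{covering} guarantees (every high-degree node has \emph{at least} one outgoing edge). Concretely: take any constant-size gadget with the four pendant edges to $u^{\red},u^{\blue},v^{\red},v^{\blue}$, and orient all four pendants into the gadget. Every internal node of degree $\ge 3$ still has at least one \emph{internal} edge (otherwise it would be a pendant-only node, and then you can only block all four being out simultaneously, not the problematic pairs), so you can complete this to a sinkless orientation of the gadget. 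Hence for any such gadget there exists a sinkless orientation with, say, both $u^{\red}$ and $v^{\red}$ outgoing, and your decoding is forced to output an illegal state of a $\typeO$-edge. Low-degree helper nodes do not help here precisely because they impose no constraint. The two-phase fallback (orient the $\typeO$-edges first, then the $\typeC$-edges) does not close the gap either: a node of degree $\ge 3$ may have $\le 2$ edges of type $\typeO$, so a sinkless-and-sourceless orientation on the $\typeO$-subgraph promises it nothing, and you are back to needing global consistency across the $\typeC$-edges.

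The paper avoids this obstacle by a completely different route. It first reduces to maximum degree~$3$ (each node keeps three marked edges; unmarked edges are freely colorable by the non-marking endpoint), then clusters the degree-$3$ graph via a $(3,2)$-ruling set into constant-diameter pieces. Sinkless orientation is applied not to the edges of $G$ but to the \emph{cluster graph}, in order to assign to every ``bad'' cluster (a tree of degree-$3$ nodes) two of its boundary edges; the actual half-edge coloring is then produced locally inside each cluster by a short sequential sweep (a path argument in bad clusters, a spanning tree rooted at a low-degree node or a cycle in good clusters). In short, the paper uses sinkless orientation only to break ties \emph{between clusters}, while the per-edge consistency is handled by local brute force inside each constant-diameter cluster---exactly the step that your per-edge gadget reduction tries, and fails, to encode into the sinkless-orientation instance itself.
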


\begin{proof}
	In the following, we describe each step of the algorithm separately.

	\subparagraph*{Reducing to Maximum Degree 3.} We first select
        a subgraph $G'=(V',E')$ of $G$ on which we solve the
        problem. The node set of $G'$ is $V'=V$ and to construct the
        edges $E'$, each $v \in V$ with $d_G(v) \geq 3$ arbitrarily marks $3$ of its edges. If an edge $e = \{u, v\}$ is marked by both $u$ and $v$, it is included in $E'$. Since an edge is added to $E'$ only if both of its endpoints have marked it and each node marks at most $3$ edges, the maximum degree of $G'$ is $\leq 3$. 
	
	Given a valid red-blue coloring of the half-edges of $G'$ \footnote{A valid red-blue coloring is a red-blue assignment on the half-edges of a graph that aligns with the edge types and satisfies the vertex constraints of the graph.}, we can construct a valid red-blue coloring on $G$ as follows. Nodes of degree $3$ in $G'$ already have at least one incident half-edge of each color. All other nodes $v$ of degree at least $3$ in $G$ have $3-d_{G'}(v)$ marked edges that are not in $G'$. Those edges are not marked by their other node and $v$ can therefore color them arbitrarily. If there are at least two such edges, $v$ colors them such that the two corresponding incident half-edges get different colors. If it has only one such edge, it already has incident half-edges of at least one color and it can color the remaining marked edge with the other color. The conversion from $G$ to $G'$ and computing a solution on $G$ when given a solution on $G'$ both clearly require only $O(1)$ rounds. Given a valid red-blue for $G'$, we can derive a valid red-blue coloring for the original graph $G$. Therefore, in the following, we describe how to create a valid red-blue coloring for $G'$.
	
	\subparagraph{Clustering the Graph.} Next, we compute a $(3,2)$-ruling set $S$ on $G'$. That is, we select a set of nodes $S$ such that for any two nodes $u,v\in S$, the distance between $u$ and $v$ in $G'$ is at least $3$ and for any node $u\not\in S$, there is a node in $S$ within distance $2$ in $G'$. Such a set $S$ can be computed in $O(\log^* n)$ rounds by running an MIS algorithm on $G'^2$~\cite{Kuhn2009}. Note that $G'^2$ is a bounded-degree graph and the \LOCAL model on $G'^2$ can be simulated with constant overhead in the \LOCAL model on $G$. Given $S$, we partition $G'$ into clusters. We build one cluster for each node $v\in S$, where each node $u\not\in S$ joins the cluster of the nearest $S$-node (ties broken arbitrarily).
	Note that since nodes in $S$ have distance at least $3$ from each other, each cluster consists at least its cluster center $v\in S$ and all neighbors $u$ of $v$ in $G'$. Further, since every node is at distance at most $2$ from a cluster center, the cluster of $v\in S$ is contained within $v$'s $2$-hop neighborhood.
	
	Clusters are called \emph{good} if they include a vertex with degree less than $3$ or if they contain a cycle, and \emph{bad} otherwise. Note that the subgraph induced by the nodes of a bad cluster is a tree of height at most $2$ (when rooted at the cluster center $v\in S$). Since any bad cluster centered at $v\in S$ contains at least the three neighbors of $v$, each bad cluster has at least six intercluster edges, where an intercluster edge is an edge that connects endpoints in two distinct clusters. In the following, we first color the internal edges of the bad clusters and the intercluster edges, and we afterwards color the internal edges of good clusters.

	\subparagraph{Assigning Intercluster Edges to Clusters.}
        Before coloring any edges, we assign each intercluster edge to
        one of its two incident clusters. To this end, we construct a multigraph $H=(V_H,E_H)$ as follows. The graph $H$ contains one node $v_C\in V_H$ for each good cluster $C$ of $G'$ and two nodes $v_{C,1}$ and $v_{C,2}$ for each bad cluster $C$ of $G'$. Further, the edge set $E_H$ contains one edge for each intercluster edge of $G'$. An intercluster edge $\set{u,v}\in E'$ is assigned to an edge in $E_H$ as follows. The edge is between two cluster nodes in $V_H$ corresponding to the clusters containing $u$ and $v$. For each bad cluster $C$, we make sure that both nodes $v_{C,1}$ and $v_{C,2}$ obtain at least $3$ edges. Note that this is possible because each bad cluster is incident to at least $6$ intercluster edges.  We then apply the sinkless orientation algorithm on this modified graph, which requires $\TSO$ time. As a result, any node of $H$ of degree at least $3$ is ensured to have at least one outgoing edge. We assign each edge in $E_H$ to the node in $V_H$ for which it is outgoing. This directly induces an assignment of the intercluster edges of $G'$ to one of the adjacent clusters. Note that because each bad cluster has two  nodes of degree $\geq 3$ in $H$, each bad cluster is assigned at least two of its intercluster edges.
	
	\subparagraph{Handling the Bad Clusters.} We can now assign the colors to all edges inside the bad clusters and to all intercluster edges. Recall that each bad cluster has at least two adjacent intercluster edges that are assigned to it. We arbitrarily name one of those edges $e_c$ and another one of those edges $e_{nc}$. We color the half-edges of all the intercluster edges that are assigned to a good cluster and the half-edges of all the intercluster edges that are assigned to a bad cluster and that are not selected as the edge $e_{nc}$ of this cluster. This coloring is done so that it is consistent with the types of the colored edges, but otherwise it is done arbitrarily. In each bad cluster, we now identify the (unique) path between $e_{nc}$ and $e_c$. We assume that the path includes $e_{nc}$, but it excludes $e_c$, \Cref{fig:bad-cluster} (a). We now $2$-color the half-edges on this path, starting from the edge incident to the endpoint of $e_c$ and moving towards $e_{nc}$, in accordance with the edge types in such a way that each inner node of the path has one adjacent red and one adjacent blue half-edge on the path. Because the cluster diameter is bounded, the path length is constant and thus, the $2$-coloring of the path can be done in $O(1)$ rounds. Assume that all the remaining edges inside the cluster are oriented towards the $e_c\text{-}e_{nc}$-path,  \Cref{fig:bad-cluster} (b). Those edges are now colored by going from the leaves of the internal cluster tree towards the $e_c\text{-}e_{nc}$-path. Consider such a remaining edge $\set{u,v}$ inside the cluster and assume that $v$ is the parent of $u$ (i.e., $v$ is the node closer to the $e_c\text{-}e_{nc}$-path). When coloring the edge $\set{u,v}$, node $u$ already has two colored half-edges. We can therefore color the edge $\set{u,v}$ so that node $v$ is happy, i.e., has two incident half-edges of different colors,  \Cref{fig:bad-cluster} (c). In this way, we can make all nodes happy except the nodes on the path. However, we have already made sure that the path nodes see both colors on their two edges on the path. In this way, we can therefore make all the nodes in the bad clusters happy.
	
		\begin{figure}
		\centering
		\begin{tikzpicture}[scale=0.72]
			\node at (0,0){\includegraphics[scale=.1]{./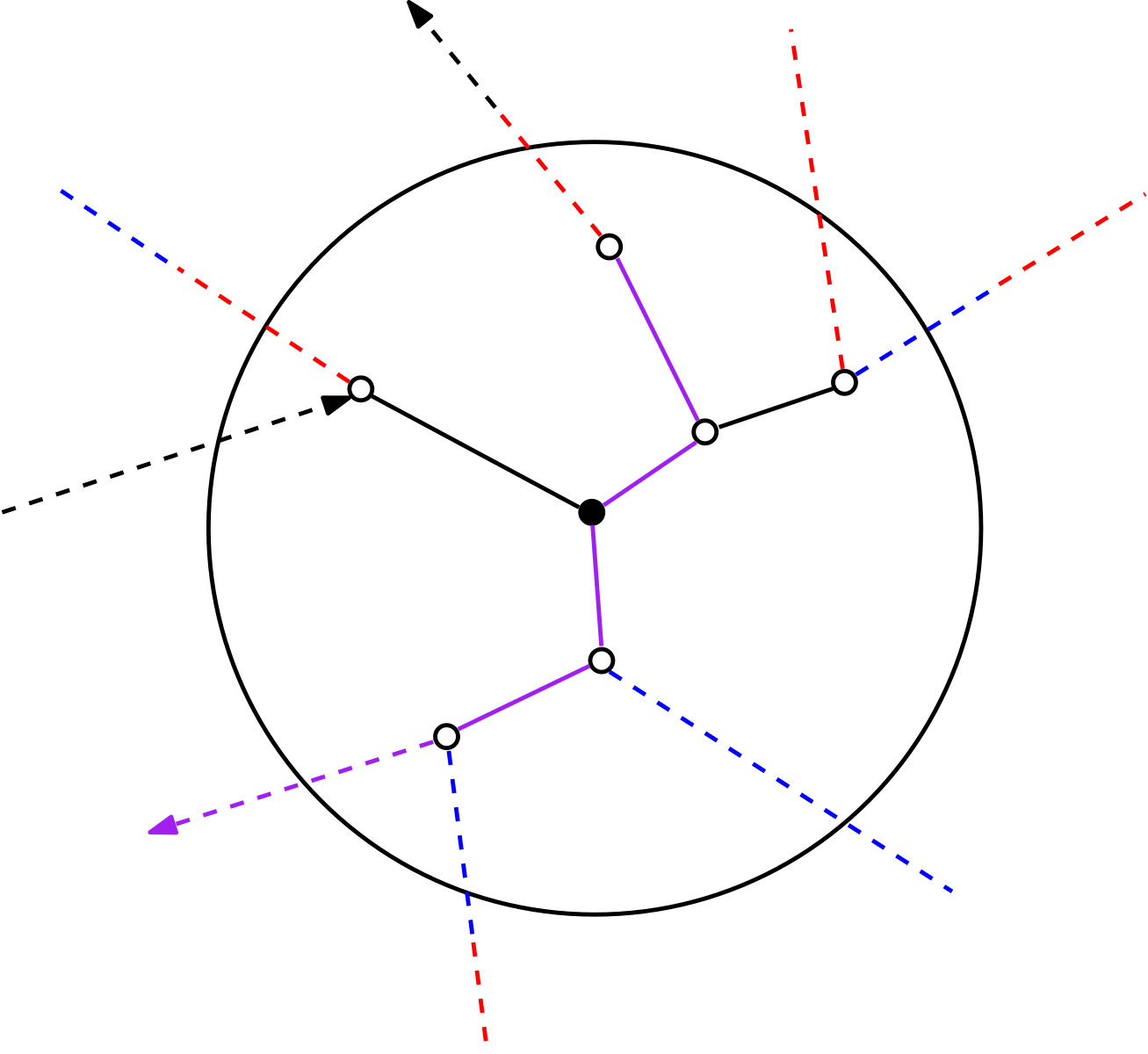}};
			\node at (-.5,.05) {$\typeO$};
			\node at (.55,.11) {$\typeO$};
			\node at (.3,-.4) {$\typeC$};
			\node at (-.3,-.7) {$\typeC$};
			\node at (-1.3,-1.1) {$\typeO$};
			\node at (-1.7,-1.8) {$e_{nc}$};
			\node at (1.1,.4) {$\typeO$};
			\node at (.3,.8) {$\typeO$};
			\node at (-.2,1.6) {$\typeO$};
			\node at (-.2,2.4) {$e_c$};
			\node at (0.15, -3) {$(a)$};
			\node at (6,0){\includegraphics[scale=.1]{./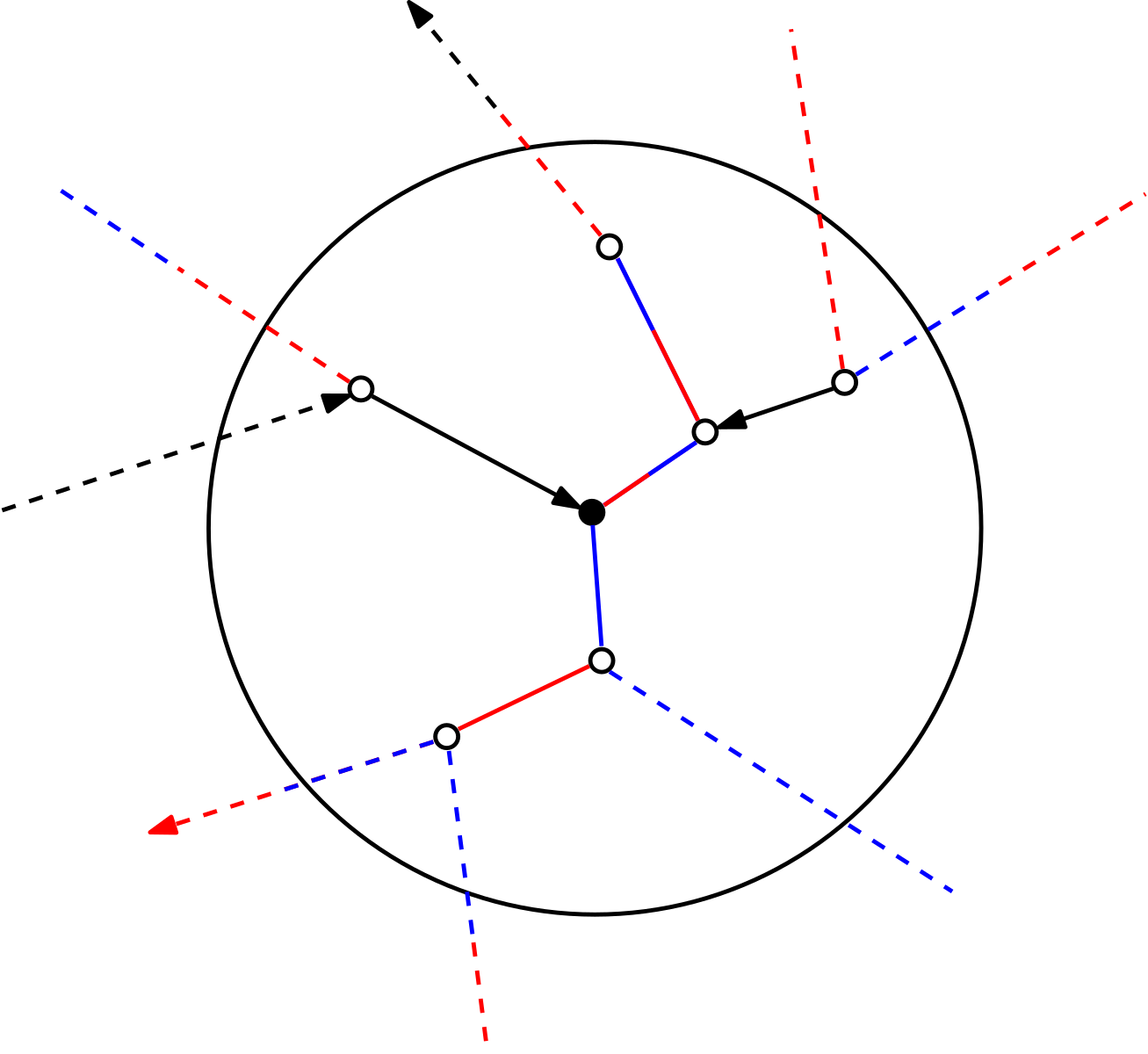}};
			\node at (5.45,.15) {$\typeO$};
			\node at (6.6,.11) {$\typeO$};
			\node at (6.3,-.4) {$\typeC$};
			\node at (5.7,-.7) {$\typeC$};
			\node at (4.7,-1.1) {$\typeO$};
			\node at (4.3,-1.8) {$e_{nc}$};
			\node at (7.1,.4) {$\typeO$};
			\node at (6.3,.8) {$\typeO$};
			\node at (5.8,1.6) {$\typeO$};
			\node at (5.8,2.4) {$e_c$};
			\node at (6.15, -3) {$(b)$};
			\node at (12,0){\includegraphics[scale=.1]{./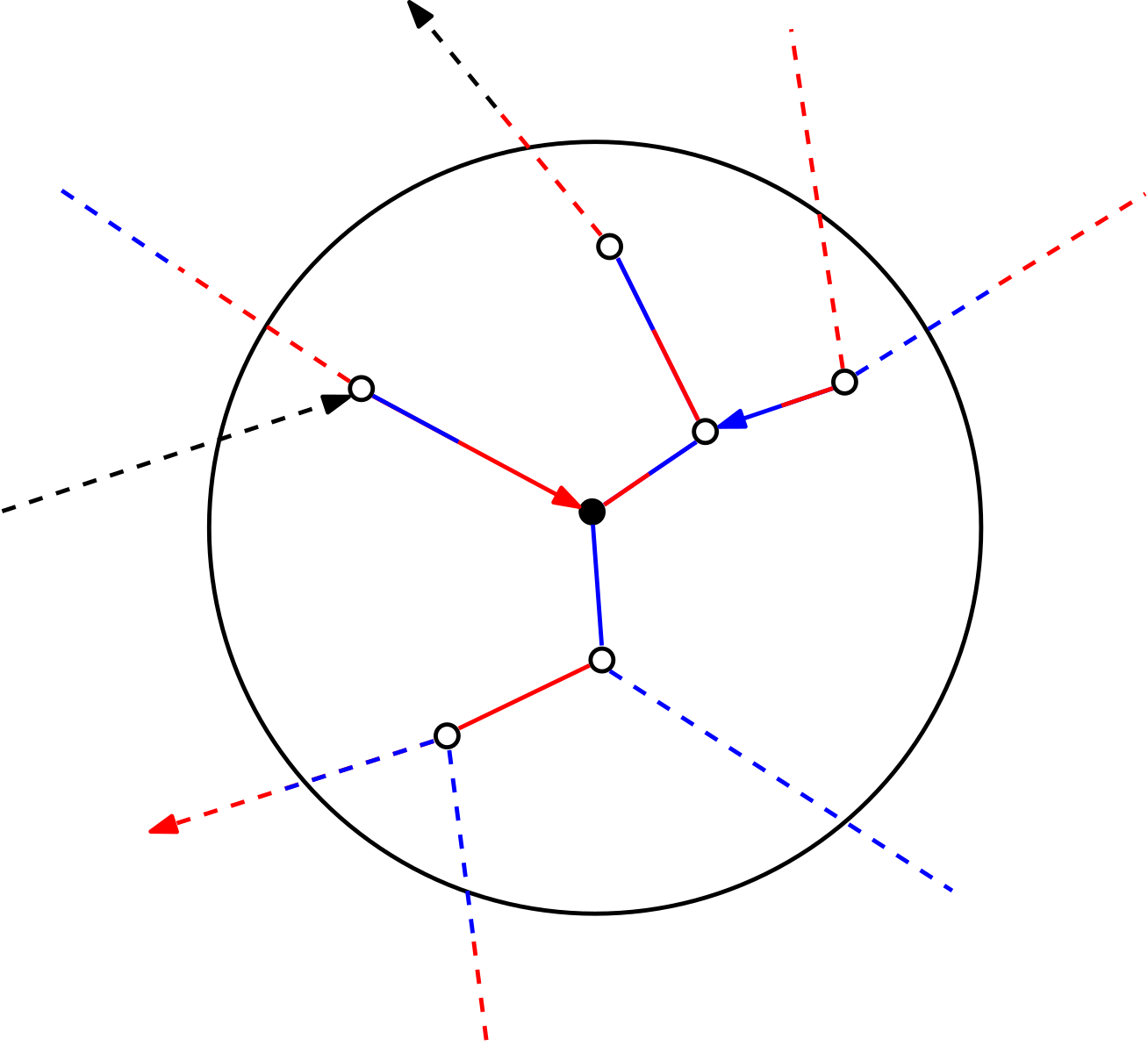}};
			\node at (11.45,.15) {$\typeO$};
			\node at (12.6,.11) {$\typeO$};
			\node at (12.3,-.4) {$\typeC$};
			\node at (11.7,-.7) {$\typeC$};
			\node at (10.7,-1.1) {$\typeO$};
			\node at (10.3,-1.8) {$e_{nc}$};
			\node at (13.1,.4) {$\typeO$};
			\node at (12.3,.8) {$\typeO$};
			\node at (11.8,1.6) {$\typeO$};
			\node at (11.8,2.4) {$e_c$};
			\node at (12.15, -3) {$(c)$};
		\end{tikzpicture}
		\caption{Coloring the edges of a bad cluster}
		\label{fig:bad-cluster}
	\end{figure}

	\subparagraph{Handling the Good Clusters.} We can move on to the good clusters. Note that at this point, all the intercluster edges are colored and it therefore only remains to color the internal edges of all the good clusters. Let us first assume that we have a good cluster that contains at least one node of degree at most $2$ and let $v_2$ be such a node. Node $v_2$ has a degree less than $3$ and nodes of degree less than $3$ do not have to satisfy any constraints. Therefore, all the edges incident to $v_2$ can be colored freely (from $v_2$'s point of view). We now construct a spanning tree $T$ of the internal edges of the cluster and we orient the edges of $T$ towards $v_2$. We arbitrarily color all edges in the cluster that do not belong to $T$. We then proceed to $2$-color the half-edges of $T$ starting from the leaves, moving towards $v_2$. Going up the oriented tree $T$, we can make each node happy except for the root of $T$, which is happy by default because its degree is less than $3$. This process is also completed within $O(1)$ rounds.
	
	We next turn our attention to good clusters that contain only nodes of degree $3$, but include a cycle. If a cluster contains only one cycle, we select it. If there are multiple cycles, we select an arbitrary one, \Cref{fig:good-cluster} (a). Subsequently, we contract the selected cycle into a single vertex $v$ and we construct a spanning tree $T$ of the remaining graph in the cluster. We arbitrarily color all edges in the cluster that do not belong to $T$ or the selected cycle. The tree edges are colored in the same way as before, \Cref{fig:good-cluster} (b). However, this time, the root node corresponds to the nodes of a cycle in $G'$ and thus, the root is not automatically happy. However, since each node on the cycle has degree $3$ and exactly two incident edges within the cycle, after coloring all the edges except for the cycle edges, each node of the cycle already has one colored incident edge. To color the edges of the cycle, we consistently orient the cycle in one of the two directions. Now, each node $v$ in the cycle is responsible for coloring its outgoing edge and it colors it to make itself happy, \Cref{fig:good-cluster} (c). This concludes the algorithm. Note that also the good clusters can clearly be processed in $O(1)$ rounds.   
	
	\begin{figure}
		\centering
		\begin{tikzpicture}[scale=0.85]
			\node at (0,0){\includegraphics[scale=.1]{./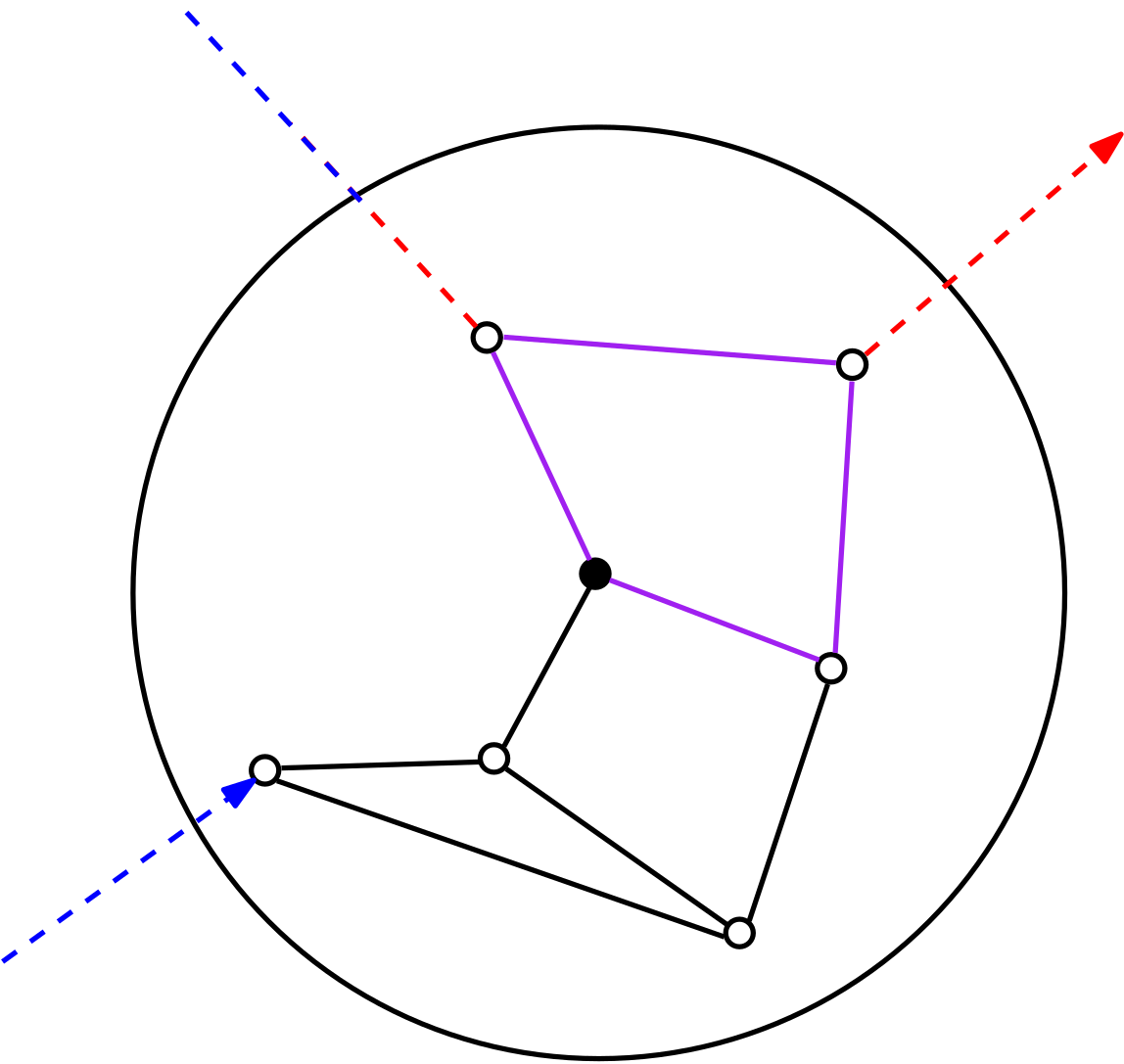}};
			\node at (5,0){\includegraphics[scale=.1]{./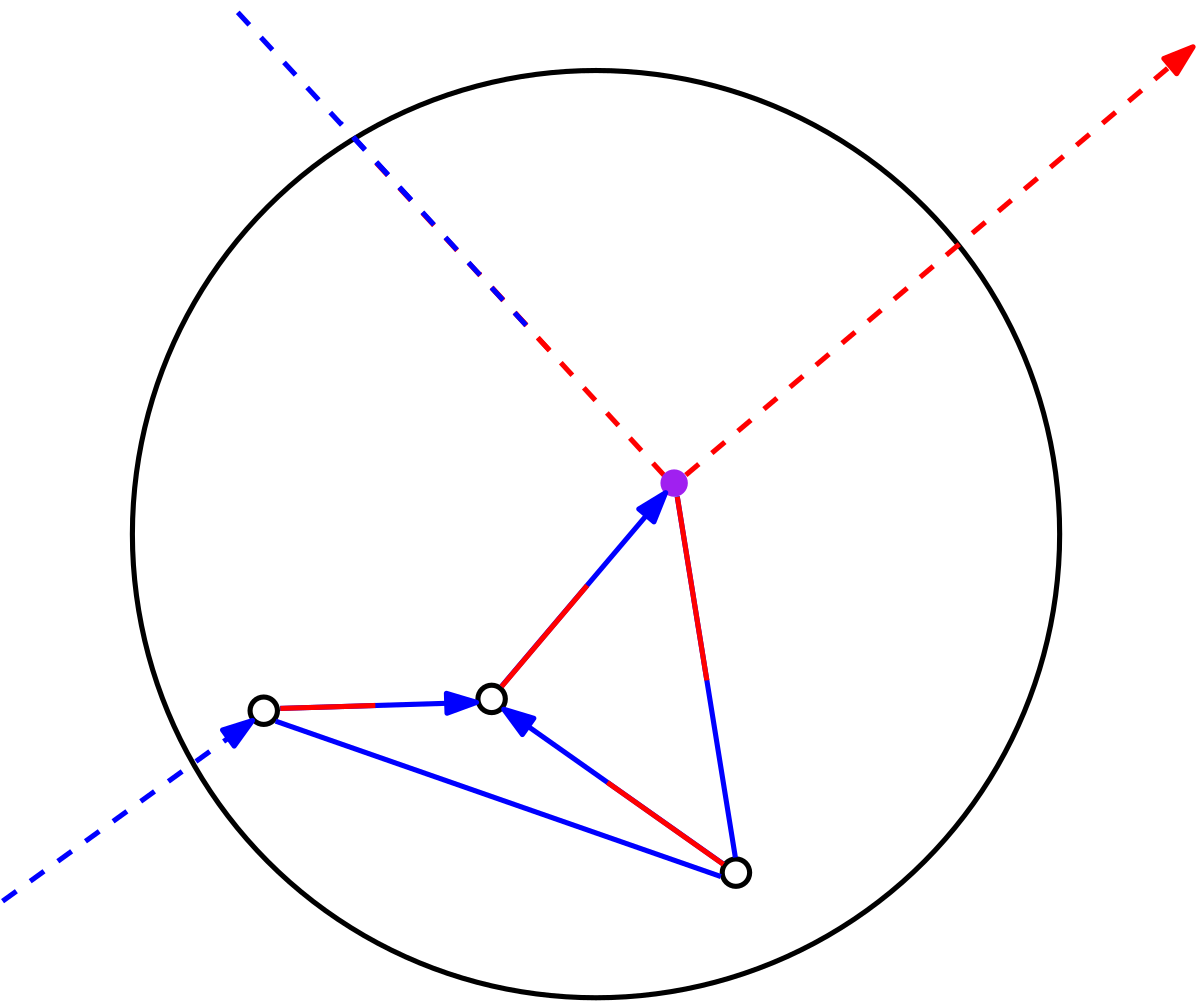}};
			\node at (10.2,0){\includegraphics[scale=.097]{./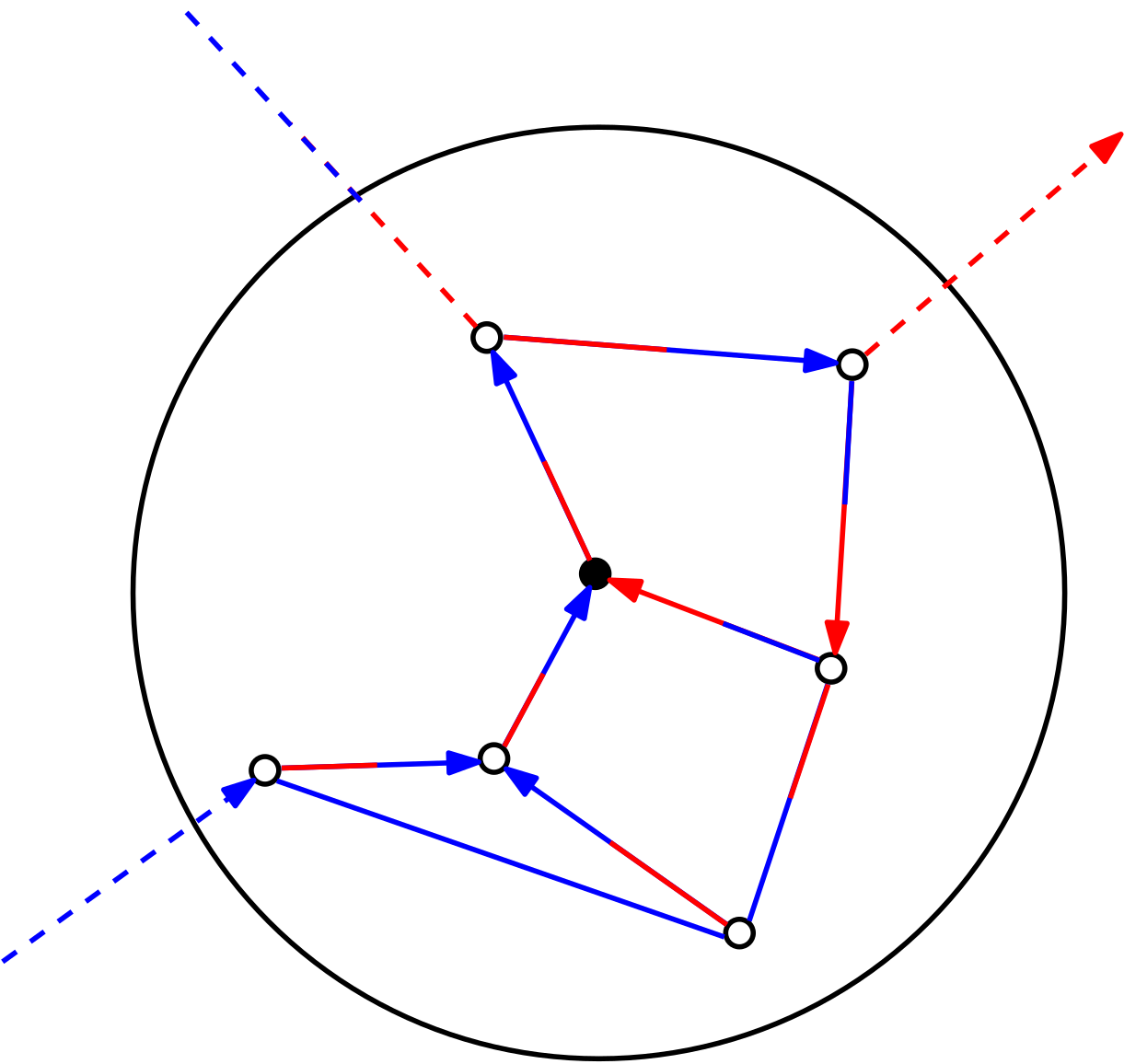}};
			\node at (-.4,1.25) {$\typeO$};
			\node at (1.75,1.45) {$\typeC$};
			\node at (-.35,.25) {$\typeO$};
			\node at (.4,1) {$\typeO$};
			\node at (1.4,0) {$\typeO$};
			\node at (.6,-.15) {$\typeO$};
			\node at (1.2,-1.1) {$\typeO$};
			\node at (.35,-1.1) {$\typeO$};
			\node at (-.25,-.5) {$\typeO$};
			\node at (-.7,-.75) {$\typeO$};
			\node at (-.55,-1.56) {$\typeC$};
			\node at (0.15,-3) {$(a)$};
			\node at (5.1,-.9) {$\typeO$};
			\node at (5.65,-.6) {$\typeO$};
			\node at (4.75,-.3) {$\typeO$};
			\node at (4.1,-.6) {$\typeO$};
			\node at (4.5,-1.5) {$\typeC$};
			\node at (6.8,1.65) {$\typeC$};
			\node at (4.7,1) {$\typeO$};
			\node at (5.15,-3) {$(b)$};
			\node at (9.8,1.25) {$\typeO$};
			\node at (12.05,1.5) {$\typeC$};
			\node at (9.95,.2) {$\typeO$};
			\node at (10.6,1.05) {$\typeO$};
			\node at (11.65,0) {$\typeO$};
			\node at (10.9,-.2) {$\typeO$};
			\node at (11.43,-1.1) {$\typeO$};
			\node at (10.45,-1.1) {$\typeO$};
			\node at (9.95,-.5) {$\typeO$};
			\node at (9.4,-.8) {$\typeO$};
			\node at (9.7,-1.7) {$\typeC$};
			\node at (10.35,-3) {$(c)$};
		\end{tikzpicture}
		\caption{Coloring the edges of a good cluster with cycle(s)}
		\label{fig:good-cluster}
	\end{figure}

	The most time-consuming part of the whole algorithm is the computation of the sinkless orientation of the cluster graph $H$. All other steps require either $O(1)$ or $O(\log^* n)$ rounds. The overall round complexity is therefore $\TSO$.
\end{proof}

We can now prove the first part of \Cref{thm:balancedsplitting}

\begin{lemma}\label{lemma:balancedsplitting1}
	Let $G=(V,E)$ be an $n$-node multigraph without selfloops and assume that every edge $e\in E$ is either of type $\typeC$ or of type $\typeO$. There is an $\TBO(\Delta)$-round \LOCAL algorithm for computing a red-blue coloring of the half-edges of $G$ such that for every node $v\in V$ of degree $d(v)$, it holds that
	\begin{equation}\label{eq:balancedsplitting3}
		\dred(v), \dblue(v) \in \left[\frac{d(v)}{2} - 1, \frac{d(v)}{2} + 1\right].
	\end{equation}
\end{lemma}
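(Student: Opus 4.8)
The plan is to reduce the problem to a single call of the balanced-orientation subroutine (which accounts for the $\TBO(\Delta)$ term) plus $O(\log^* n)$- and $O(1)$-round post-processing. First I build an auxiliary multigraph $G'$ on the same vertex set: every $\typeO$ edge of $G$ is kept, and every $\typeC$ edge $e=\{u,v\}$ is replaced by a path $u-w_e-v$ through a fresh node $w_e$ of degree $2$. This leaves the degree of every original node unchanged. Running the balanced-orientation subroutine on $G'$ yields an orientation in which every node, in particular every original node $v$, has out-degree in $[d(v)/2-1,\,d(v)/2+1]$.

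I then decode this orientation into a half-edge coloring using the convention ``outgoing half-edge equals \red'' at every node of $G'$. For a $\typeO$ edge of $G$ this is directly type-valid. For a $\typeC$ edge $e=\{u,v\}$: if $w_e$ is a pure sink then both $\{u,w_e\},\{v,w_e\}$ are outgoing from $u$ and $v$, so $e$ becomes monochromatic $\red$; if $w_e$ is a pure source, $e$ becomes monochromatic $\blue$. Under this decoding $\dred(v)$ equals the out-degree of $v$ in $G'$ for every original node, which already yields \Cref{eq:balancedsplitting3}; the colorings at nodes of degree at most $2$ (including every $w_e$) are unconstrained and can be made type-consistent locally.

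The crux is a $\typeC$ edge $e=\{u,v\}$ whose middle node $w_e$ is through-oriented, say $u\to w_e\to v$: then its two half-edges get colors $\red$ (at $u$) and $\blue$ (at $v$), violating the $\typeC$ constraint. Repairing $e$ to monochromatic moves $\dred$ at exactly one endpoint: recoloring $e$ all $\red$ raises $\dred(v)$ by $1$ (the head), while recoloring it all $\blue$ lowers $\dred(u)$ by $1$ (the tail). I would phrase the repair as a charging problem on the bounded-degree subgraph $M$ spanned by the through-oriented $\typeC$ edges: each edge of $M$ is charged either to its head (costing one unit of that node's upper slack) or to its tail (costing one unit of its lower slack), where an even-degree node has a free $\pm1$ around $d(v)/2$ while an odd-degree node must finish at $\lfloor d(v)/2\rfloor$ or $\lceil d(v)/2\rceil$. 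Since $M$ has bounded degree, once such a charging is known to exist it can be computed in $O(\log^* n)$ rounds — for instance by a ruling-set clustering of $M$ with local resolution inside each bounded-radius cluster, reusing the clustering machinery from the proof of \Cref{lem:reduce_to_SO} — which is dominated by $\TBO(\Delta)$.

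The hard part is to guarantee that a valid charging always exists, i.e., that one never gets stuck on a through-oriented $\typeC$ edge whose head has no upper slack and whose tail has no lower slack. I expect to control this already at the construction stage: for instance by first running \Cref{lem:reduce_to_SO} on the $\typeC$ edges so that each node starts from a coloring that already sees both colors and hence only a bounded per-node ``excess'' ever needs repairing, and by using that each through-oriented middle node contributes exactly one unit of imbalance split across its two endpoints, so that a discharging/counting argument (exploiting that the allowed interval has width $2$) establishes feasibility. Making this feasibility argument precise, and performing the repair locally, is where the substantive work lies; the reduction to balanced orientation and the decoding are routine.
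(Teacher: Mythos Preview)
Your approach has a genuine gap. The central claim that the through-oriented $\typeC$ subgraph $M$ has bounded degree is false: the balanced-orientation guarantee for a degree-$2$ middle node $w_e$ is merely outdegree $\in[0,2]$, so nothing prevents \emph{every} $\typeC$ edge at a node from being through-oriented, and $M$ can have degree up to $\Delta$. This already breaks the $O(\log^* n)$ repair plan.

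Worse, the repair problem itself is circular. Writing $z_e\in\{0,1\}$ for ``$e$ is repaired to $\red$'', a short computation shows that the constraint at $v$ becomes
\[
\sum_{e\in M:\,v\in e} z_e \;\in\; \bigl[a_v-g(v)-1,\;a_v-g(v)+1\bigr],
\]
where $a_v$ is the number of $M$-edges at $v$ for which $v$ is the tail and $g(v)=\dred^{init}(v)-d(v)/2\in[-1,1]$. This is an edge $2$-coloring problem on $M$ with a per-node target $a_v-g(v)$ that need not be near $\deg_M(v)/2$. In other words, your repair is an \emph{arbitrary} (not balanced) red-blue splitting on a graph of maximum degree up to $\Delta$ --- precisely the class of problems the paper only solves later (\Cref{thm:arbitrary2splitting}) at cost $O(\log\Delta)\cdot\TBO(\Delta)$, building on the present lemma. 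So the reduction loops back on itself; the suggested pre-pass with \Cref{lem:reduce_to_SO} on the $\typeC$ edges does not change this, since it controls only two half-edges per node, not the $\Theta(\Delta)$ potential through-edges.

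For comparison, the paper takes a completely different route: it never repairs. Instead it iteratively reduces the maximum degree by calling the balanced-orientation subroutine with decreasing accuracy $\eps_i$, pairing up two outgoing edges $\{v,u\},\{v,w\}$ at each node and \emph{splicing} them into a single virtual edge $\{u,w\}$ whose type is $\typeO$ if the two spliced edges had equal type and $\typeC$ otherwise. Any valid coloring of the virtual edge then decodes to one in which $v$'s two contributing half-edges have distinct colors, so $v$'s balance is preserved automatically, with no inconsistency to fix. After $O(\log\Delta)$ such splicing rounds the degree is $O(1)$ and \Cref{lem:reduce_to_SO} finishes; summing the costs with the chosen $\eps_i$ gives exactly $\TBO(\Delta)$.
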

\begin{proof}
	The high-level idea of the proof is the following. For
	$k=\lceil\log_2\Delta\rceil$, we compute a sequence of multigraphs
	$G=G_0, G_1, G_2, \dots, G_{k+2}$ of maximum degree
	$\Delta=\Delta_0,\Delta_1,\dots,\Delta_{k+2}$ such that the maximum
	degree $\Delta_{i+1}$ of $G_{i+1}$ is roughly
	$\Delta_{i+1}\approx\Delta_i/2$ and such that $\Delta_{k+2}\leq
	4$. The graphs are constructed such that solving the problem
	required by \Cref{eq:balancedsplitting3} on any graph $G_i$ solves
	the same problem on $G$.
	
	We first show how to construct the graphs $G_0,\dots,G_k$ and we show how to construct $G_{k+1}$ and $G_{k+2}$ at the end of the proof.	For each $i\in\set{0,\dots,k-1}$, graph $G_{i+1}$ is constructed from graph $G_{i}=(V_i,E_i)$ as follows. For every node $v\in V_i$, we use $d_i(v)$ to denote the degree of $v$ in $G_i$. For each $i \in\set{0,\dots,k-1}$, we fix $\eps_i := 2^{i/2}/(4\sqrt{\Delta})$. Note that since $k=\lceil\log_2\Delta\rceil$, $\eps_i\leq 1/4$ for all $i\in\set{1,\dots,k-1}$.
	
	We now use Theorem 1.1 of \cite{Ghaffari2020ImprovedDD} to compute an edge orientation of $G_{i}$ in which every node $v$ has outdegree at least $(1 - \eps_i)\cdot d_i(v)/2 - 1$. The time for computing this edge orientation is $\TBO(1/\eps_i)$ \LOCAL rounds on graph $G_i$. Node $v$ can therefore build at least $\frac{1-\eps_i}{4}\cdot d_i(v)-1$ pairs of outgoing edges. Consider one such pair of edges $\set{v,u}$ and $\set{v,w}$. Node $v$ splits off the two edges and creates a new virtual edge between nodes $u$ and $w$ instead. Note that because the two edges are outgoing, they can only be paired up by $v$ and not by either $u$ or $w$. By splitting off the two edges, the degree of $v$ drops by $2$. The type of the new edge $\set{u,w}$ depends on the types of edges $\set{v,u}$ and $\set{v,w}$. If edges $\set{v,u}$ and $\set{v,w}$ have the same type, then the type of $\set{u,w}$ is $\typeO$, otherwise, the type of edge $\set{u,w}$ is $\typeC$. In this way, any proper $\red$/$\blue$-labeling of the half-edges of $\set{u,w}$ can be translated to a proper $\red$/$\blue$-labeling of the half edges of $\set{v,u}$ and $\set{v,w}$ such that a) the two half-edges incident to $v$ are colored with different colors and b) the half-edges adjacent to $u$ and $w$ are colored in the same way as in $\set{u,w}$. From a solution to \Cref{eq:balancedsplitting3} in $G_{i+1}$, we can therefore map back to a solution of \Cref{eq:balancedsplitting3} in $G_i$. If we split off a virtual edge for each of the at least $\frac{1-\eps_i}{4}\cdot d_i(v)-1$ pairs of outgoing edges of $v$, the degree of $v$ drops by at least $\frac{1-\eps_i}{2}\cdot d_i(v)-2$ and we therefore have
	\begin{equation}\label{eq:degreedrop}
		d_{i+1}(v) \leq d_i(v) - \frac{1-\eps_i}{2}\cdot d_i(v) + 2 = \frac{1+\eps_i}{2}\cdot d_i(v) + 2
		\quad\text{and thus}\quad
		\Delta_{i+1} \leq \frac{1+\eps_i}{2}\cdot \Delta_i + 2.
	\end{equation}
	As a result, we can upper bound the maximum degree $\Delta_i$ of $G_i$ by
	\begin{equation}\label{eq:Deltabound}
		\Delta_i\leq \frac{\Delta}{2^i}\cdot e^{\sum_{j=0}^{i-1}\eps_i}+6
		\leq \frac{\Delta}{2^i} \cdot e^{\frac{1}{4\sqrt{\Delta}}\cdot \sum_{j=-\infty}^{k-1}2^{j/2}} + 6
		= \frac{\Delta}{2^i}\cdot e^{\frac{1}{4\sqrt{\Delta}}\cdot e^{(k-1)/2}\cdot (\sqrt{2}+1)} + 6
		< \frac{2\Delta}{2^i} + 6.
	\end{equation}
	We show the first inequality by induction on $i$. The remaining inequalities follow by plugging in the definitions of $\eps_i$ and $k$. The first inequality holds for $i=0$ since $\Delta_0=\Delta \leq e^0\Delta+6$. For $i\geq0$, \Cref{eq:degreedrop} together with the induction hypothesis implies that
	\[
	\Delta_{i+1} \leq \frac{1+\eps_i}{2}\cdot\Delta_i + 2 \leq
	\frac{e^{\eps_i}}{2}\cdot \frac{\Delta}{2^i}\cdot e^{\sum_{j=0}^{i-1}\eps_i}+\frac{1+\eps_i}{2}\cdot 6 + 2 
	< \frac{\Delta}{2^{i+1}}\cdot e^{\sum_{j=0}^{i}\eps_i} + 6.
	\]
	The last inequality follows because $\eps_i\leq 1/4$ and thus $(1+\eps_i)/2\cdot 6 + 2 < 6$.
	Because some of the edges of $G_{i+1}$ are virtual edges consisting of two edges of $G_i$, running one round of the \LOCAL model on $G_{i+1}$ can be done in $2$ rounds in the \LOCAL model on $G_i$. 
	
	The total time $T_k$ (in \LOCAL on $G$) to perform the $k=\lceil\log_2\Delta\rceil$ degree reduction steps is therefore
	\begin{eqnarray*}
		T_k & \leq & \sum_{i=0}^{k-1}2^i\cdot \TBO\left(\frac{1}{\eps_i}\right)\\
		& \leq & \alpha\cdot\sum_{i=0}^{k-1}2^i\cdot \frac{1}{\eps_i}\cdot \log\Delta\cdot\log^{1.71}\log\Delta\cdot\TSO\\
		& \leq & \alpha\cdot \sum_{i=0}^{k-1}2^{i/2}\cdot 4\sqrt{\Delta}\cdot \log\Delta\cdot\log^{1.71}\log\Delta\cdot\TSO\\
		& = & O\left(2^{(k-1)/2}\cdot 4\sqrt{\Delta}\cdot \log\Delta\cdot\log^{1.71}\log\Delta\cdot\TSO\right)\ =\ \TBO(\Delta).
	\end{eqnarray*}
	
	At the end, we get graph $G_k$ with maximum degree  $\Delta_k< 2\Delta/2^k+6\leq 8$ (because $k\geq \log_2\Delta$). We thus have $\Delta_k\leq 7$. Because $k=\lceil\log_2\Delta\rceil$, the virtual edges in $G_k$ consist of at most $2^k=O(\Delta)$ edges of $G$. Each round on $G_k$ can therefore be simulated in $O(\Delta)$ rounds on $G$.
	
	We finish the computation by first computing two additional graphs $G_{k+1}$ and $G_{k+2}$ in a similar way. In both cases, we compute an orientation in which every node of degree at least $5$ has outdegree at least $2$ and we then use this orientation in the same way as before to reduce the degree of each node of degree at least 5 by two. By using Lemma 2.6 in \cite{Ghaffari2020ImprovedDD}, such an orientation can be computed in time $\TSO$ on $G_k$ (for getting $G_{k+1}$) and on $G_{k+1}$ (for getting $G_{k+2}$). The round complexity on $G$ for those two steps is therefore $O(\Delta)\cdot\TSO=\TBO(\Delta)$. Note that in each step, the length of the virtual edges grows by at most a factor $2$ and therefore in $G_k$, $G_{k+1}$, and $G_{k+2}$, one \LOCAL round can be simulated in $O(\Delta)$ \LOCAL rounds on $G$. We now have a graph $G_{k+2}$ of maximum degree at most $4$ such that finding a valid $\red$/$\blue$-labeling of the half-edges that satisfies \Cref{eq:balancedsplitting3} on $G$ can be solved by finding such a  $\red$/$\blue$-labeling on $G_{k+2}$. The degree of this graph is now however small enough so that this can be achieved in time $\TSO$ on $G_{k+2}$ by applying \Cref{lem:reduce_to_SO}. This concludes the proof of the lemma.
\end{proof}

\Cref{lemma:balancedsplitting1} already proves the first part \Cref{eq:balancedsplitting1} of \Cref{thm:balancedsplitting}. The next lemma shows how to also prove the second part \Cref{eq:balancedsplitting2} of the theorem.

\begin{lemma}\label{lemma:balancedsplitting2}
	Let $G=(V,E)$ be an $n$-node multigraph without selfloops and assume that every edge $e\in E$ is of type $\typeC$. There is a $\TBO(\Delta)$-round \LOCAL algorithm for computing a red-blue coloring of the half-edges of $G$ such that for every node $v\in V$ of degree $d(v)$, it holds that
	\begin{equation}\label{eq:perfectsplit_rounddown}
		\dred(v) \in \set{\left\lfloor\frac{d(v)}{2}\right\rfloor, \left\lfloor\frac{d(v)}{2}\right\rfloor+1}
		\quad\text{and}\quad
		\dblue(v) \in \set{\left\lceil\frac{d(v)}{2}\right\rceil-1, \left\lceil\frac{d(v)}{2}\right\rceil}.
	\end{equation}
\end{lemma}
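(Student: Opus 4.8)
The plan is to follow the proof of \Cref{lemma:balancedsplitting1} essentially verbatim and replace only its base case by a stronger one. Running the degree-reduction of that proof on $G$ produces, in $\TBO(\Delta)$ rounds, a multigraph $\hat G$ of maximum degree at most $4$ whose edges carry types $\typeC$ or $\typeO$, and in which a node $v$ that during the reduction split off $p_v$ pairs of its outgoing edges ends up with $d_{\hat G}(v) = d_G(v) - 2p_v$. The key point is that \Cref{eq:perfectsplit_rounddown} is \emph{self-similar} under this reduction. Indeed, exactly as argued in \Cref{lemma:balancedsplitting1}, a contracted pair $\{v,u\},\{v,w\}$ (whose virtual edge $\{u,w\}$ is of type $\typeO$ iff $\{v,u\}$ and $\{v,w\}$ have the same type) always forces the two half-edges at $v$ to receive different colors; a one-line case analysis over the four type-combinations, using that $(v,\{v,u\})$ equals $(u,\{u,w\})$ resp.\ its complement according to whether $\{v,u\}$ is of type $\typeC$ resp.\ $\typeO$, confirms this. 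Hence each of the $p_v$ contracted pairs contributes exactly one $\red$ and one $\blue$ half-edge to $v$, so $\dred(v) = p_v + \dred^{\hat G}(v)$; since $\lfloor d_G(v)/2\rfloor = p_v + \lfloor d_{\hat G}(v)/2\rfloor$, a $\red/\blue$-labeling of $\hat G$ with $\dred^{\hat G}(v)\in\{\lfloor d_{\hat G}(v)/2\rfloor, \lfloor d_{\hat G}(v)/2\rfloor + 1\}$ for every $v$ pulls back to one of $G$ satisfying \Cref{eq:perfectsplit_rounddown} (the bounds on $\dblue$ follow from $\dred + \dblue = d$). It therefore suffices to solve this ``rounded split'' problem on the bounded-degree mixed-type graph $\hat G$ in $\TSO$ rounds.

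For this base case I would strengthen \Cref{lem:reduce_to_SO}. On $\hat G$ the constraint is local and involves only a constant number of degrees: degree-$1$ nodes are unconstrained, degree-$2$ nodes need $\dred\ge 1$, degree-$3$ nodes need $\dred\ge 1$ and $\dblue\ge 1$ (precisely the guarantee of \Cref{lem:reduce_to_SO}), and degree-$4$ nodes need $\dred\ge 2$ and $\dblue\ge 1$. First, I would remove every degree-$2$ node $v$ by contracting its two incident edges $\{v,a\},\{v,b\}$ (of types $t_1,t_2$) into a single edge $\{a,b\}$, of type $\typeC$ if $t_1\neq t_2$ and of type $\typeO$ if $t_1=t_2$ (self-loop/parallel-edge degeneracies handled separately); by the same case analysis as above, pulling back any valid coloring of $\{a,b\}$ gives $v$ one $\red$ and one $\blue$ half-edge, so $v$ is automatically satisfied. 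This leaves only degrees $1,3,4$. I would then run the clustering scheme of \Cref{lem:reduce_to_SO} unchanged---ruling set, clusters inside $2$-hop balls, assignment of intercluster edges via a sinkless orientation of the cluster graph $H$ with bad clusters split into several copies of degree $\ge 3$---and colour the cluster-internal edges from the leaves of the internal trees/cycles upward so that every node meets its target.

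The main obstacle is the degree-$4$ case. When processing the parent edge of a degree-$4$ node $u$, we may be forced to colour it $\red$ on $u$'s side to provide $u$'s second $\red$ edge, and through a type-$\typeO$ edge this can conflict with a demand of $u$'s parent; in \Cref{lem:reduce_to_SO} such conflicts are broken because the sinkless orientation of $H$ hands every high-degree node one ``escape'' (outgoing) edge it may colour freely, but a degree-$4$ node now needs \emph{two} such escape edges. I would secure this in one of two ways: either (i) by one more level of the orientation-and-contraction trick---computing in $\TSO$ rounds, via an additional clustering argument in the spirit of Lemma~2.6 of \cite{Ghaffari2020ImprovedDD}, an orientation in which every degree-$4$ node has outdegree $\ge 2$, splitting off one further pair at each such node, and thereby reducing $\hat G$ to maximum degree $3$ (where \Cref{lem:reduce_to_SO} applies essentially directly)---or (ii) by strengthening the cluster graph $H$ so that each degree-$4$ node is guaranteed two outgoing intercluster edges and adapting the intra-cluster colouring accordingly. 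Either way the additional work costs only $O(\Delta)\cdot\TSO = \TBO(\Delta)$ rounds on $G$, so the total round complexity remains $\TBO(\Delta)$.
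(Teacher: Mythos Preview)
Your approach is quite different from the paper's. The paper observes that for \emph{odd}-degree nodes the $\pm 1$ guarantee of \Cref{lemma:balancedsplitting1} already coincides with \Cref{eq:perfectsplit_rounddown}, and therefore reduces to the odd-degree case: compute a maximal matching $\mathcal{M}$ in the subgraph induced by the even-degree nodes (these matching edges will be coloured $\red$); the unmatched even-degree nodes then form an independent set and can locally pair up their incident edges into virtual type-$\typeO$ edges, which automatically gives them exactly $d(v)/2$ red half-edges. The resulting graph $G''$ has only odd-degree nodes, one applies \Cref{lemma:balancedsplitting1} to it, and a short case check for the three classes of nodes finishes the proof. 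There is no second pass through the degree reduction and no strengthened base case.

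Your route relies entirely on a strengthened base case, and that is where the gap lies. Your handling of degree-$2$ nodes in $\hat G$ is incomplete: contracting a single such node is fine, but $\hat G$ may contain arbitrarily long chains of degree-$2$ nodes (e.g., every node of degree $2$ in $G$ stays a degree-$2$ node through every step of the reduction). Contracting such a chain produces a virtual edge of unbounded length, so simulating even one round on the contracted graph can cost $\omega(\Delta)$ rounds on $G$, and the claimed $\TBO(\Delta)$ bound no longer holds. Your option~(i) does not escape this: splitting off one pair from each degree-$4$ node drops its degree to $2$, so after the split you have \emph{more} degree-$2$ nodes, not fewer, and \Cref{lem:reduce_to_SO} still provides no guarantee for them---so ``\Cref{lem:reduce_to_SO} applies essentially directly'' is not correct. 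To salvage this approach you would need a separate mechanism that enforces $\dred\ge 1$ along degree-$2$ chains of unbounded length in $\TSO$ rounds; the paper's matching trick is precisely such a mechanism, applied at the top level rather than in the base case.
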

\begin{proof}
	Note that for odd-degree nodes, the condition of \Cref{eq:perfectsplit_rounddown} is implied by the condition of \Cref{lemma:balancedsplitting1}. Our approach therefore is to reduce the problem on general graphs to the problem on graphs with only odd-degree nodes and to then use \Cref{lemma:balancedsplitting1}. For this, we use the fact that all edges of $G$ are of type $\typeC$ (i.e., we need to compute a red/blue-coloring of the edges of $G$). The reduction consists of two steps.
	
	Let $V_E\subseteq V$ be the set of nodes that have an even degree in $G$ and let $G[V_E]$ be the subgraph of $G$ induced by $V_E$. As the first step, we compute a  maximal matching $\mathcal{M}$ of $G[V_E]$. This can be done in $O(\Delta+\log^* n)$ rounds~(e.g., \cite{Kuhn2009}). In the end, the edges of $\mathcal{M}$ will be labeled $\red$ (i.e., both their half-edges are labeled $\red$). We now however first remove the edges of $\mathcal{M}$ from the graph. Let $V_M\subseteq V_E$ be the set of nodes that are incident to an edge in the matching $\mathcal{M}$, let $E':=E\setminus \mathcal{M}$, and let $G'=(V,E')$. Note that in graph $G'$, all nodes in $V_M$ now have an odd degree (because they lost exactly one of their edges) and the only nodes of even degree are the nodes $V_E':=V_E\setminus V_M$. Because of the maximality of $\mathcal{M}$, the nodes in $V_E'$ form an independent set of $G'$. Therefore, the nodes in $V_E'$ can pair up their edges and create a virtual edge of type $\typeO$ out of each of those pairs. Assume that $v\in V_E'$ and that we pair up two edges $\set{v,u}$ and $\set{v,w}$ to create the virtual edge $\set{u,w}$ of type $\typeO$. The half-edges of $\set{u,w}$ correspond to the edges $\set{v,u}$ and $\set{v,w}$ and when computing a valid assignment of labels to $\set{u,w}$, we therefore color one of the two edges of $v$ red and the other one blue. Node $v$ therefore gets exactly the same number of red and blue edges and it therefore clearly satisfies the requirement of \Cref{eq:perfectsplit_rounddown}.
	
	The next graph $G''=(V'',E'')$ is now built as follows. The set of nodes of $G''$ is $V''=V\setminus V_E'=(V\setminus V_E)\cup V_M$ and the set of edges consists of all edges of $E'$ that are not incident to some node in $V_E'$ and it consists of the virtual edges that are created when removing the nodes in $V_E'$. Note that $G''$ can be a multigraph even if $G$ and $G'$ are simple graphs. Note also that $G''$ has only odd-degree nodes. All nodes that have an odd degree in $G$, i.e., the nodes $V\setminus V_E$ have the same degree in $G''$. All nodes in $V_M$ have their original (even) degree minus $1$ (from removing $\mathcal{M}$).
	
	In order to compute the labeling of $G$, we now apply \Cref{lemma:balancedsplitting1} to graph $G''$ (note that one \LOCAL round in $G''$ can be simulated in $2$ \LOCAL rounds in $G$). Let $v\in V''$ be a node of $G''$ and let $d''(v)$ be the (odd) degree of $v$ in $G''$. \Cref{lemma:balancedsplitting1}  implies that the red and blue degree of $v$ in $G''$ are either $\lfloor d''(v)/2\rfloor$ or $\lceil d''(v)/2\rceil = \lfloor d''(v)/2\rfloor+1$. For the nodes $v\in V\setminus V_E$, we have $d(v)=d''(v)$ and thus also $\dred(v),\dblue(v)\in \set{\lfloor d(v)/2\rfloor, \lfloor d(v)/2\rfloor+1}$ as required by \Cref{eq:perfectsplit_rounddown}. For $v\in V_M$, we have $d(v)=d''(v)+1$ and since the edges of $\mathcal{M}$ are colored red, we have $\dred(v)\in \set{\lfloor d''(v)/2\rfloor+1, \lfloor d''(v)/2\rfloor+2} = \set{\lfloor d(v)/2\rfloor, \lfloor d(v)/2\rfloor +1}$. This completes the proof.
\end{proof}

Note that by switching the names of the colors, the algorithm of \Cref{lemma:balancedsplitting2} can also be used to guarantee the following condition.
\begin{equation}\label{eq:perfectsplit_roundup}
	\dred(v) \in \set{\left\lceil\frac{d(v)}{2}\right\rceil-1, \left\lceil\frac{d(v)}{2}\right\rceil}
	\quad\text{and}\quad
	\dblue(v) \in \set{\left\lfloor\frac{d(v)}{2}\right\rfloor, \left\lfloor\frac{d(v)}{2}\right\rfloor+1}.
\end{equation}

\section{Arbitrary Unoriented Degree Splittings}
\label{sec:arbitrary2splitting}

We next prove \Cref{thm:arbitrary2splitting}, i.e., that the problems $\Pi(y)$ (cf.~\Cref{def:arbitrary2splitting}) can be solved in $O(\log\Delta)\cdot\TBO(\Delta)$ rounds. We first recap the definition of the problem family $\Pi(y)$ for $y\in \set{0,\dots,\Delta-1}$. The problem asks for a red-blue coloring of the edges of an $n$-node graph $G=(V,E)$ of maximum degree $\Delta$. An algorithm solves $\Pi(y)$ if the following holds.
\begin{enumerate}
	\item For every node $v\in V$ of degree $\Delta$, the number of red edges is in $\set{y, y+1}$. Just requiring this subproblem is also defined as problem $\Pi_\Delta(y)$.
	\item For every degree $d<\Delta$, there exists a value $x_d$ such that the number of red edges of all nodes of degree $d$ is in $\set{x_d-1,x_d,x_d+1}$.
	\item The value of $x_d$ for $d<\Delta$ must satisfy the
          following conditions. Let $\tau_d:=\frac{d}{2\Delta}$,
          $\tilde{x}_d:=\frac{d}{\Delta}\cdot\left(y+\frac{1}{2}\right)$,
          $\alpha_d:=\left\lfloor\tilde{x}_d\right\rfloor$, and
          $\beta_d:=\tilde{x}_d-\alpha$.
         \item If $\beta_d\leq \tau_d$, $x_d = \alpha_d$, if $\beta_d
           \in (\tau_d, 1-\tau_d)$, $x_d\in \set{\alpha_d,
             \alpha_d+1}$, and if $\beta_d \geq 1 - \tau_d$, $x_d=\alpha_d+1$.
\end{enumerate}

Because this will turn out to be more convenient, as a first step, we extend the definition of $\Pi(y)$ to all $y\in \mathbb{Z}$ as follows. For $y\in \mathbb{Z}$, we define
\(
\Pi(y) := \Pi(y \mod (\Delta-1))
\),
where we assume that $y \mod (\Delta-1)$ maps to the range $\set{0,\dots,\Delta-2}$. We note that this changes the definition of $\Pi(\Delta-1)$ to be equal to $\Pi(0)$. Note however that $\Pi(\Delta-1)$ requires all edges to be red and $\Pi(0)$ requires all edges to be blue and up to renaming the colors, the two problems are indeed equivalent. Two problems are generally equivalent if they can be obtained from each other by just switching the name of the two colors. The following lemma makes this formal. 

\begin{lemma}\label{lemma:Piequivalence}
	For any $y \in \mathbb{Z}$, the problems $\Pi(y)$ and $\Pi(\Delta - 1 - y) = \Pi(- y)$ are equivalent.
\end{lemma}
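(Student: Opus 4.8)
The plan is to show that the color-swap map---sending any $\red/\blue$-coloring of the edges to the coloring obtained by exchanging the two color names---is a bijection between valid solutions of $\Pi(y)$ and valid solutions of $\Pi(\Delta-1-y)$. Under this map, a node $v$ of degree $d$ with red degree $\dred(v)$ is sent to a coloring in which its red degree becomes $d - \dred(v)$. So the task reduces to checking that the numerical constraints defining $\Pi(y)$ transform exactly into those defining $\Pi(\Delta-1-y)$ under $r \mapsto d - r$. I would also note at the outset that $\Pi(-y)=\Pi(\Delta-1-y)$ is immediate from the extended definition $\Pi(y):=\Pi(y \bmod (\Delta-1))$, since $-y \equiv \Delta-1-y \pmod{\Delta-1}$; so it suffices to treat $\Pi(\Delta-1-y)$, and we may assume $y\in\{0,\dots,\Delta-2\}$.

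First I would handle the degree-$\Delta$ constraint (part a / problem $\Pi_\Delta$): in $\Pi(y)$ a degree-$\Delta$ node has $\dred(v)\in\{y,y+1\}$, and after the swap this set becomes $\{\Delta-y,\Delta-y-1\}=\{y',y'+1\}$ where $y'=\Delta-1-y$, which is exactly the degree-$\Delta$ constraint of $\Pi(\Delta-1-y)$. Then I would handle the degree-$d$ constraints for $d<\Delta$. Let $x_d,\tilde x_d,\alpha_d,\beta_d,\tau_d$ be the quantities for $\Pi(y)$ and let $x_d',\tilde x_d',\alpha_d',\beta_d',\tau_d'$ be those for $\Pi(\Delta-1-y)$. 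Note $\tau_d'=\tau_d=\tfrac{d}{2\Delta}$, and
\[
\tilde x_d' = \frac{d}{\Delta}\Bigl(\Delta-1-y+\tfrac12\Bigr) = d - \frac{d}{\Delta}\Bigl(y+\tfrac12\Bigr) = d - \tilde x_d .
\]
Since $d$ is an integer, if $\beta_d>0$ then $\alpha_d' = \lfloor d-\tilde x_d\rfloor = d-\alpha_d-1$ and $\beta_d' = 1-\beta_d$; if $\beta_d=0$ then $\alpha_d'=d-\alpha_d$ and $\beta_d'=0$. In the main case $\beta_d>0$: the swap sends a node with $\dred(v)\in\{x_d-1,x_d,x_d+1\}$ to one with red degree in $\{d-x_d-1,d-x_d,d-x_d+1\}$, so I must verify this equals $\{x_d'-1,x_d',x_d'+1\}$, i.e.\ $x_d'=d-x_d$, consistently with the three-way rule applied to $(\alpha_d',\beta_d',\tau_d')$. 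Using $\alpha_d'=d-\alpha_d-1$, $\beta_d'=1-\beta_d$, $\tau_d'=\tau_d$: the condition $\beta_d\le\tau_d$ (forcing $x_d=\alpha_d$) becomes $\beta_d'\ge 1-\tau_d'$ (forcing $x_d'=\alpha_d'+1=d-\alpha_d=d-x_d$, correct); the condition $\beta_d\ge 1-\tau_d$ (forcing $x_d=\alpha_d+1$) becomes $\beta_d'\le\tau_d'$ (forcing $x_d'=\alpha_d'=d-\alpha_d-1=d-x_d$, correct); and the free case $\beta_d\in(\tau_d,1-\tau_d)$ maps to $\beta_d'\in(\tau_d',1-\tau_d')$, where both problems allow $x_d$ resp.\ $x_d'$ to range over a two-element set, and one checks the two choices correspond under $x_d'=d-x_d$. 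The boundary subcase $\beta_d=0$ (so $\tilde x_d$ an integer, $x_d=\alpha_d$ forced since $0\le\tau_d$) is even simpler: then $\tilde x_d'=d-\alpha_d$ is an integer, $\beta_d'=0$, and $x_d'=\alpha_d'=d-\alpha_d=d-x_d$, again matching.

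The only mildly delicate point---and the step I expect to require the most care---is the free case $\beta_d\in(\tau_d,1-\tau_d)$: there $\Pi(y)$ does not pin down $x_d$, so one has to argue the equivalence at the level of \emph{solutions} rather than of the parameter $x_d$. The clean way is to observe that $\Pi(y)$ should really be read as: a coloring is valid iff there exists a choice of $x_d$ (satisfying the stated rule) such that every degree-$d$ node's red degree lies in $\{x_d-1,x_d,x_d+1\}$; then the color swap turns a witnessing vector $(x_d)_d$ for $\Pi(y)$ into the witnessing vector $(d-x_d)_d$ for $\Pi(\Delta-1-y)$ by the computation above, and vice versa since the swap is an involution. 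Assembling these pieces---$\Pi(-y)=\Pi(\Delta-1-y)$ by the modular convention, the involutive bijection via color swap, and the parameter correspondence $y\leftrightarrow\Delta-1-y$, $x_d\leftrightarrow d-x_d$---completes the proof, and I would remark that it also re-derives the claimed equivalence $\Pi(\Delta-1)\equiv\Pi(0)$ as the special case $y=0$.
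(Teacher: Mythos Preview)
Your proposal is correct and follows essentially the same approach as the paper: both argue via the color-swap involution, reduce to $y\in\{0,\dots,\Delta-2\}$, verify the degree-$\Delta$ constraint directly, compute $\tilde{x}_d' = d-\tilde{x}_d$, and then split into the integer ($\beta_d=0$) and non-integer ($\beta_d>0$) cases with the same three-way subcase analysis on $\beta_d$ versus $\tau_d$. Your explicit framing of the free case via a witnessing vector $(x_d)_d$ that transforms to $(d-x_d)_d$ is a slightly cleaner way to phrase what the paper does by exhibiting the correspondence between the two allowed values in each problem.
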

\begin{proof}
	Assume that $y\in \set{0,\dots,\Delta-2}$. In problem $\Pi_\Delta(y)$, the number of red edges of nodes of degree $\Delta$ is required to be in $\set{y,y+1}$. In problem $\Pi(\Delta-1-y)$, the number of red edges of nodes of degree $\Delta$ is required to be in $\set{\Delta-y-1, \Delta-y}$ and the number of blue edges is thus required to be in $\set{y, y+1}$.
	
	Let us also consider nodes of degree $d<\Delta$. We use the variables $\tilde{x}_d$, $x_d$, $\tau_d$, $\alpha_d$, and $\beta_d$ for the problem $\Pi(y)$ and $\tilde{x}_d'$, $x_d'$, $\tau_d'$, $\alpha_d'$, and $\beta_d'$ for the problem $\Pi(\Delta-1-y)$. Note that $\tilde{x}_d:=\frac{d}{\Delta}\cdot\left(y+\frac{1}{2}\right)$ and
	\[
	\tilde{x}_d' = \frac{d}{\Delta}\cdot \left(\Delta-1-y+\frac{1}{2}\right) =
	\frac{d}{\Delta}\cdot\left(\Delta-y-\frac{1}{2}\right) = d-\tilde{x}_d.
	\]
	It is sufficient to show that if $x_d$ is unique, then $x_d'$ is unique and we have $x_d'=d-x_d$. In this case the number of red edges in $\Pi(y)$ is in $\set{x_d-1,x_d,x_d+1}$ and the number of red edges in $\Pi(\Delta-1-y)$ is in $\set{d-x_d-1, d-x_d, d-x_d+1}$ and thus the number of blue edges is in $\set{x_d-1,x_d,x_d+1}$. This implies that the two problems are equivalent. If $x_d$ can take two different values $x_{d,1}$ and $x_{d,2}$, then we show that also $x_d'$ can take two different values and we have $x_{d,1}'=d-x_{d,1}$ and $x_{d,2}'=d-x_{d,2}$. This then again implies that the two problems are equivalent because the allowed settings for the blue edges in $\Pi(\Delta-1-y)$ is the same as the allowed setting for the red edges in $\Pi(y)$. We need to make a case distinction based on whether $\tilde{x}_d$ is an integer or not. 
	\paragraph{The value $\tilde{x}_d$ is an integer.}
	In this case, we have $\alpha_d=\tilde{x}_d$ and thus $\alpha_d'=d-\alpha_d$. We further have $\beta_d=\beta_d'=0$. We therefore have $x_d=\tilde{x}_d=\alpha_d$ and $x_d'=\tilde{x}_d'=\alpha_d'=d-\alpha_d$ as required.
	
	\paragraph{The value $\tilde{x}_d$ is not an integer.}
	In this case, we have $\alpha_d'=\lfloor \tilde{x}'_d\rfloor$ and thus $\alpha_d'=\lfloor d-\tilde{x}_d\rfloor = d-\alpha_d-1$. We further have $\beta_d = \tilde{x}_d-\alpha_d$ and $\beta_d'=d-\tilde{x}_d -\alpha_d' = \alpha_d+1 - \tilde{x}_d=1-\beta_d$. We also clearly have $\tau_d'=\tau_d$ as this value only depends on $d$ and $\Delta$. We make a further case distinction based on the value of $\beta_d$
	\begin{itemize}
		\item If $\beta_d\leq \tau_d$, we have $x_d=\alpha_d$. For problem $\Pi(\Delta-1-y)$, we have $\beta_d'=1-\beta_d$ and thus $\beta_d'\geq 1-\tau_d$. We therefore have $x_d'=\alpha_d'+1=d-\alpha_d=d-x_d$. 
		\item If $\beta_d\geq 1-\tau_d$, we have $x_d=\alpha_d+1$. For problem $\Pi(\Delta-1-y)$, we have $\beta_d'=1-\beta_d$ and thus $\beta_d'\leq \tau_d$. We therefore have $x_d'=\alpha_d'=d-\alpha_d-1=d-x_d$.
		\item Finally, if $\tau_d<\beta_d\leq 1-\tau_d$, we have $x_d\in \set{\alpha_d, \alpha_d+1}$. We further have $\beta_d'=1-\beta_d$ and thus also $\beta_d'\in(\tau_d, 1-\tau_d)$. We therefore have $x_d'\in \set{\alpha_d', \alpha_d'+1} = \set{d-1-\alpha_d', d-\alpha_d'}$. If $x_d=\alpha_d$, then $x_d'=\alpha_d'+1=d-\alpha_d' =d-x_d$ and if $x_d=\alpha_d+1$, then $x_d'=\alpha_d'=d-1-\alpha_d'=d-x_d$.
	\end{itemize}
	Hence, we get that $\Pi(\Delta-1-y)$ is in all cases equivalent to $\Pi(y)$.
	
\end{proof}

We are now ready to show that $\Pi(y)$ can be computed in time $O(\log\Delta)$ steps, where each step is an application of \Cref{lemma:balancedsplitting2}. The high level idea is the following. In the next two technical lemmas, we show that if a solution to $\Pi(2y)$ is given, then we can compute a solution to $\Pi(y)$ in time $\TBO(\Delta)$ (by applying \Cref{lemma:balancedsplitting2}). Depending on the value of $y$, we can further also compute a solution to $\Pi(y)$ if either a solution of $\Pi(2y+1)$ or of $\Pi(2y-1)$ is given. This will be sufficient to show that if we do $k$ consecutive applications of \Cref{lemma:balancedsplitting2}, it is possible to get to $\Pi(y)$ from $\Pi(y')$ for $2^k$ consecutive $y'$-values. We now first show how to get from a solution to $\Pi(2y)$ to a solution to $\Pi(y)$.

\begin{lemma}\label{lemma:pi(2y)}
	Let $G = (V, E)$ be an $n$-node multigraph without selfloops. For all  $y\in \set{0,\dots,\Delta-2}$, given a solution for $\Pi(2y)$ on $G$, we can compute a solution for $\Pi(y)$ in $\TBO(\Delta)$ rounds.
\end{lemma}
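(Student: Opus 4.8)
The plan is to read off a solution of $\Pi(y)$ from the given solution of $\Pi(2y)$ with a single invocation of \Cref{lemma:balancedsplitting2}. First a normalization. Write $r:=2y\bmod(\Delta-1)\in\{0,\dots,\Delta-2\}$; the given solution is a solution of $\Pi(r)$. If $r<2y$ (equivalently $2y\ge\Delta-1$), swap the two colors of the solution, which by \Cref{lemma:Piequivalence} turns it into a solution of $\Pi(\Delta-1-r)=\Pi(2(\Delta-1-y))$, and plan to swap colors once more at the very end. Since $2(\Delta-1-y)\le\Delta-1$, this lets us assume — possibly after replacing $y$ by $\Delta-1-y$ and postponing one color swap — that every degree-$\Delta$ node of the given solution has red degree in $\{2y,2y+1\}$ with $0\le 2y\le\Delta-1$. (The swaps cost no rounds.)

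Now let $G_{\red}$ be the subgraph of $G$ consisting of the edges colored $\red$ by the given solution, regard all its edges as type $\typeC$, and apply \Cref{lemma:balancedsplitting2} to $G_{\red}$; this splits the edges of $G_{\red}$ into a new $\red$ class and a new $\blue$ class. The output coloring of $G$ keeps every originally-$\blue$ edge $\blue$ and gives every originally-$\red$ edge its color from the split, so the final red degree of a node $v$ equals its red degree in the split of $G_{\red}$, which by \Cref{lemma:balancedsplitting2} lies in $\{\lfloor d_R(v)/2\rfloor,\lfloor d_R(v)/2\rfloor+1\}$, where $d_R(v):=d_{G_{\red}}(v)$ is the red degree of $v$ under the given solution. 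As $G_{\red}$ has maximum degree at most $\Delta$ and a round on $G_{\red}$ is a round on $G$, this costs $\TBO(\Delta)$ rounds. For a node $v$ of degree $\Delta$ we have $d_R(v)\in\{2y,2y+1\}$, hence $\lfloor d_R(v)/2\rfloor=y$ and the final red degree lies in $\{y,y+1\}$, as $\Pi_\Delta(y)$ demands.

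It remains to handle the nodes of degree $d<\Delta$. Let $(x_1',\dots,x_{\Delta-1}')$ be the vector witnessing the given $\Pi(2y)$-solution; a node $v$ of degree $d$ has $d_R(v)\in\{x_d'-1,x_d',x_d'+1\}$, so its final red degree lies in $\bigcup_{j\in\{x_d'-1,x_d',x_d'+1\}}\{\lfloor j/2\rfloor,\lfloor j/2\rfloor+1\}$, and a short case check on the parity of $x_d'$ shows this set is contained in $\{x_d-1,x_d,x_d+1\}$ for the choice $x_d:=\lceil x_d'/2\rceil$. So it only remains to verify that $x_d=\lceil x_d'/2\rceil$ is an admissible vector entry for $\Pi(y)$ in the sense of \Cref{def:arbitrary2splitting}. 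For that I would first reformulate the three-way $\beta_d$-condition: for any parameter $z$, an integer $j$ is an admissible value of the $d$-th vector entry for $\Pi(z)$ if and only if $|\tilde x_d^{(z)}-j|<1-\tau_d$, where $\tilde x_d^{(z)}:=\tfrac{d}{\Delta}(z+\tfrac12)$ and $\tau_d=\tfrac{d}{2\Delta}<\tfrac12$ (an elementary unfolding, using $\tau_d<\tfrac12$ to exclude degenerate overlaps). Using the identity $\tilde x_d^{(y)}=\tfrac12(\tilde x_d^{(2y)}+\tau_d)$ and writing $\lceil x_d'/2\rceil=\tfrac12(x_d'+\delta)$ with $\delta\in\{0,1\}$, the triangle inequality yields
\[
\bigl|\tilde x_d^{(y)}-x_d\bigr|\;\le\;\tfrac12\bigl(|\tilde x_d^{(2y)}-x_d'|+|\tau_d-\delta|\bigr)\;<\;\tfrac12\bigl((1-\tau_d)+(1-\tau_d)\bigr)\;=\;1-\tau_d,
\]
using that $x_d'$ is admissible for $\Pi(2y)$, i.e.\ $|\tilde x_d^{(2y)}-x_d'|<1-\tau_d$, and that $|\tau_d-\delta|\le1-\tau_d$ because $\tau_d<\tfrac12$. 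Hence $x_d$ is admissible for $\Pi(y)$, which finishes the proof.

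The step I expect to be the main obstacle is precisely this last verification: recognizing that the degrees produced by \Cref{lemma:balancedsplitting2} always cluster around $\lceil x_d'/2\rceil$, and that this value always lands inside the somewhat delicate $\tau_d$-window prescribed by \Cref{def:arbitrary2splitting}. The cleanest route through it, as sketched, is the reformulation (admissible $\iff|\tilde x_d^{(z)}-j|<1-\tau_d$), after which only the identity $\tilde x_d^{(y)}=\tfrac12(\tilde x_d^{(2y)}+\tau_d)$ and one triangle inequality are needed. The surrounding parts — reducing to a single call of \Cref{lemma:balancedsplitting2}, the degree-$\Delta$ computation, the color-swap normalization handling parameters $2y$ outside $\{0,\dots,\Delta-2\}$, and the round count — are routine.
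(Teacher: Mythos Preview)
Your proof is correct and follows the same algorithm as the paper: normalize via a color swap when $2y\ge\Delta-1$, apply \Cref{lemma:balancedsplitting2} to the red subgraph, and swap back if needed. The analysis for degree-$\Delta$ nodes is identical. Where your argument genuinely diverges is in handling nodes of degree $d<\Delta$: the paper carries out a five-case analysis on the value of $\beta_d$ (distinguishing $2\beta_d<\tau_d$, $0\le 2\beta_d-\tau_d<1$ with three subcases, and $2\beta_d-\tau_d\ge 1$), verifying in each case that the resulting $\dred(v)$ lands in an admissible window. Your route is considerably cleaner: the reformulation ``$j$ admissible for $\Pi(z)$ iff $|\tilde x_d^{(z)}-j|<1-\tau_d$'' collapses the case analysis into a single inequality, and the identity $\tilde x_d^{(y)}=\tfrac12(\tilde x_d^{(2y)}+\tau_d)$ together with one triangle inequality finishes the job uniformly. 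This buys brevity and makes the structure transparent (and would port immediately to the companion lemma for $\Pi(2y\pm 1)$), whereas the paper's case analysis is more concrete but repetitive.

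One shared boundary issue worth flagging: when $\Delta$ is odd and $y=(\Delta-1)/2$, the normalization lands on $2y_{\text{new}}=\Delta-1$, where the modular convention $\Pi(\Delta-1)=\Pi(0)$ makes the identity between $\tilde x_d^{(2y)}$ and the actual problem parameters break down. The paper's proof has the same gap (its claim $z\le(\Delta-2)/2$ fails when $k=0$). This does not affect the main theorem, since $y=\lfloor\Delta/2\rfloor$ is handled there as a base case, but strictly speaking the lemma statement for that single value of $y$ is not covered by either argument.
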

\begin{proof}
	We first describe the transformation from $\Pi(2y)$ to $\Pi(y)$ and we afterwards prove that the resulting problem satisfies all the requirements of problem $\Pi(y)$. We distinguish two cases.
	\begin{itemize}
		\item $y \le \frac{\Delta - 2}{2}$:
		\begin{enumerate}
			\item Let $E'$ be the set of red edges in the given solution to $\Pi(2y)$ and let $G'$ be the subgraph of $G$ induced by the edges in $E'$
			\item We apply \Cref{lemma:balancedsplitting2} (with \Cref{eq:perfectsplit_rounddown}) to $G'$. The edges that are colored red in the resulting red-blue labeling of $G'$ remain red and the remaining edges of $E'$ together with the edges in $E\setminus E'$ become blue in the final labeling.
		\end{enumerate}
		\item $y > \frac{\Delta - 2}{2}$:
		\begin{enumerate}
			\item We first switch the colors in the given solution to $\Pi(2y)$
			\item Let $E'$ be the set of red edges in the current solution to $\Pi(2y)$ and let $G'$ be the subgraph of $G$ induced by the edges in $E'$
			\item We apply \Cref{lemma:balancedsplitting2} (with \Cref{eq:perfectsplit_rounddown}) to $G'$. The edges that are colored red in the resulting red-blue labeling of $G'$ remain red and the remaining edges of $E'$ together with the edges in $E\setminus E'$ become blue.
			\item We again switch the colors for the final labeling of the edges.
		\end{enumerate}
	\end{itemize}
	
	For the analysis, we also consider the two cases.
	\begin{itemize}
		\item $y \le \frac{\Delta - 2}{2}$: For a node $v$, let $\dred'(v)$ be its degree in $G'$ (or equivalently, $\dred'(v)$ is the number of red edges in the initial solution to $\Pi(2y)$). We further use $\dred(v)$ to denote the number of red edges after executing the algorithm. We thus need to show that the values of $\dred(v)$ satisfy the requirements of problem $\Pi(y)$. Note that for every $v\in V$, by \Cref{eq:perfectsplit_rounddown}, the number of red edges after executing the algorithm is
		\begin{equation}\label{eq:dredbound1}
			\dred(v) \in \set{\left\lfloor\frac{\dred'(v)}{2}\right\rfloor, \left\lfloor\frac{\dred'(v)}{2}\right\rfloor + 1}.
		\end{equation}
		We first concentrate on nodes $v$ that have degree $\Delta$ in $G$. Since the initial red-blue labeling is a solution to $\Pi(2y)$, we know that for each such node $v$, $\dred'(v)\in \set{2y, 2y+1}$. Note that $\lfloor (2y)/2\rfloor = \lfloor (2y+1)/2\rfloor = y$. We therefore have $\dred(v)\in\set{y, y+1}$ as it is required for nodes of degree $\Delta$ in problem $\Pi_\Delta(y)$ (and thus in problem $\Pi(y)$).
		
		Let us now consider some node $v$ of degree $d<\Delta$ in $G$. We use the following definition from \Cref{def:arbitrary2splitting}:
		\begin{equation}\label{eq:Piy_vars}
			\tilde{x}_d := \frac{d}{\Delta}\cdot\left(y+\frac{1}{2}\right),\quad
			\alpha_d := \lfloor \tilde{x}_d \rfloor,\quad \beta_d := \tilde{x}_d - \alpha_d,\quad\text{and}\quad\tau_d := \frac{d}{2\Delta}.
		\end{equation}
		
		For the problem $\Pi(2y)$, we also define the corresponding value as follows.
		\begin{equation}\label{eq:Pi2y_vars}
			\tilde{x}_d' := \frac{d}{\Delta}\cdot\left(2y+\frac{1}{2}\right),\quad
			\alpha_d' := \lfloor \tilde{x}_d' \rfloor,\quad\text{and}\quad \beta_d' := \tilde{x}_d' - \alpha_d'.
		\end{equation}
		Note that we have
		\begin{equation}
			\label{eq:farsrelation}
			\tilde{x}_d = \alpha_d + \beta_d\quad\text{and}\quad
			\tilde{x}_d' = \alpha_d' + \beta_d' = 2\alpha_d' + 2\beta_d' - \tau_d.
		\end{equation}
		We make a case distinction based on the value of $\beta_d$.
		\begin{itemize}
			\item $2\beta_d < \tau_d$: In this case, we know that $\alpha_d' = 2\alpha-1$ and $\beta_d'=1+2\beta_d-\tau_d\geq 1-\tau_d$. Note that this implies that $x_d' = \alpha_d'+1=2\alpha_d$. Since $\beta_d<\tau$, we further know that $x_d = \alpha_d$. We therefore have $\dred'(v)\in\set{2\alpha_d-1,2\alpha_d, 2\alpha_d+1}$ and we need $\dred(v)\in \set{\alpha_d-1, \alpha_d, \alpha_d+1}$. By applying \Cref{eq:dredbound1}, we get that $\dred(v) \geq \left\lfloor \frac{2\alpha_d-1}{2}\right\rfloor = \alpha_d-1$ and $\dred(v)\leq \left\lfloor \frac{2\alpha_d+1}{2}\right\rfloor +1 = \alpha_d+1$.
			\item $0\leq 2\beta_d - \tau < 1$. In this case, we know that $\alpha_d' = 2\alpha_d$ and $\beta_d' = 2\beta_d -\tau_d$. We make a further case distinction
			\begin{itemize}
				\item $\beta_d \leq \tau_d$: In this case, we have $\beta_d' = 2\beta_d - \tau_d \leq \tau_d$. We therefore know that $x_d' =\alpha_d'=2\alpha_d$ and we have $x_d =\alpha_d$. We therefore have $\dred'(v)\in \set{2\alpha_d-1,2\alpha_d, 2\alpha_d+1}$ and we need $\dred(v)\in\set{\alpha_d-1, \alpha_d, \alpha_d+1}$. By \Cref{eq:dredbound1}, we have $\dred(v)\geq \lfloor (2\alpha_d - 1)/2\rfloor = \alpha_d-1$. We similarly have $\dred(v)\leq \lfloor (2\alpha_d + 1)/2\rfloor+1=\alpha_d+1$.
				\item $\tau_d<\beta_d<1-\tau_d$: Note that since $\beta_d$ is between $\tau_d$ and $1-\tau_d$, in this case, we only need to show that either $x_d=\alpha_d$ and thus $\dred(v)\in\set{\alpha_d-1,\alpha_d,\alpha_d+1}$ or $x_d = \alpha_d+1$ and thus $\dred(v)\in \set{\alpha_d,\alpha_d+1, \alpha_d+2}$. Independently of what the value of  $\beta_d'$ is, we know that $x_d' = 2\alpha_d$ or $x_d' = 2\alpha_d+1$. In the first case, we get $\dred(v)\geq \lfloor (2\alpha_d-1)/2\rfloor = \alpha_d-1$ and $\dred(v)\leq \lfloor(2\alpha_d+1)/2\rfloor+1=\alpha_d+1$. In the second case, we get $\dred(v)\geq \lfloor (2\alpha_d)/2\rfloor = \alpha_d$ and $\dred(v)\leq \lfloor(2\alpha_d+2)/2\rfloor+1=\alpha_d+2$. We therefore get a valid solution for $\Pi(y)$ in either case.
				\item $\beta_d\geq 1-\tau_d$: In this case, we have $\beta_d'=2\beta_d-\tau_d \geq 2 -3\tau_d > 1-\tau_d$ (because $\tau_d<1/2$). We therefore know that $x_d'=\alpha_d'+1=2\alpha_d+1$ and thus $\dred'(v)\in\set{2\alpha_d, 2\alpha_d+1, 2\alpha_d+2}$. As above, in this case, we get $\dred(v)\in\set{\alpha_d, \alpha_d+1, \alpha_d+2}$, which is what we need because $\beta_d\geq 1-\tau_d$.
			\end{itemize}
			\item $2\beta_d-\tau_d \geq 1$: In this last case, we have $\beta_d'=2\beta_d-\tau_d-1$ and $\alpha_d'=2\alpha_d+1$. We cannot bound $\beta_d'$ in this case and we therefore only know that $x_d'\in\set{2\alpha_d+1, 2\alpha_d+2}$ and thus $\dred'(v)\in\set{2\alpha_d,\dots,2\alpha_d+3}$. This implies that $\dred(v) \geq \lfloor(2\alpha_d)/2\rfloor=\alpha_d$ and $\dred(v)\leq \lfloor (2\alpha_d+3)/2\rfloor+1 = \alpha_d+2$.
			Since $2\beta_d\geq 1+\tau_d$, we have $\beta\geq (1+\tau_d)/2 >\tau_d$ and it is thus fine if $x_d=\alpha_d+1$ and thus $\dred(v)\in\set{\alpha_d,\alpha_d+1,\alpha_d+2}$.
		\end{itemize}
		
		\item $y > \frac{\Delta - 2}{2}$: Note that $2y\geq \Delta-1$. Assume that $2y=\Delta-1 + k$ for some $k\in \set{0,\dots,\Delta-3}$. By definition, we have $\Pi(2y)=\Pi(k)$. As step (1), we switch the colors, we move from $\Pi(k)$ to the equivalent $\Pi(\Delta-1-k)$ (cf.\ \Cref{lemma:Piequivalence}). That is, before defining $G'$ and applying degree splitting part to $G'$, the set of red edges satisfies the conditions of $\Pi(\Delta-1-k)$. Note that $\Delta-1-k$ has the same parity as $\Delta-1+k$ and since $\Delta-1+k=2y$, we know that $\Delta-1-k$ is even. We can therefore write the value as $\Delta-1-k=2z$ for some $0\leq z\leq (\Delta-2)/2$. Steps (2) and (3) are now the same as steps (1) and (2) in the first case. With the same analysis as before, we can therefore conclude that the red-blue labeling after step (3) is a valid solution to $\Pi(z)$. Since we again switch the colors in step (4), by \Cref{lemma:Piequivalence}, at the end, the red-blue labeling satisfies the conditions of $\Pi(\Delta-1-z)$. We have
		\[
		\Delta - 1 - z = \Delta - 1 - \frac{\Delta - 1 - k}{2} = \frac{\Delta -1 + k}{2} = y.
		\]
	\end{itemize}
	This concludes the proof of the lemma.
\end{proof}

We next show that a solution to $\Pi(2y+1)$ or to $\Pi(2y-1)$ (depending on the value of $y$) can be also used to compute a solution to $\Pi(y)$. The algorithm and also its proof are similar to the case of \Cref{lemma:pi(2y)}. 

\begin{lemma}\label{lemma:pi(2yp1)}
	Let $G = (V, E)$ be an $n$-node multigraph without selfloops. For all $0\leq y\leq (\Delta-3)/2$, given a solution for $\Pi(2y+1)$ on $G$, we can compute a solution for $\Pi(y)$ in $\TBO(\Delta)$ rounds. Further, for all $(\Delta-3)/2<y\leq \Delta-2$, given a solution for $\Pi(2y-1)$ on $G$, we can compute a solution for $\Pi(y)$ in $\TBO(\Delta)$ rounds.
\end{lemma}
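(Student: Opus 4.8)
The plan is to follow the proof of \Cref{lemma:pi(2y)} almost line for line, the only difference being that, since the source index $2y\pm1$ is now odd, we invoke \Cref{lemma:balancedsplitting2} through its round-up guarantee \Cref{eq:perfectsplit_roundup} instead of \Cref{eq:perfectsplit_rounddown}.

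For the first statement, given a solution to $\Pi(2y+1)$ (note that $2y+1\le\Delta-2$ in this range, so no wraparound occurs), I would let $E'$ be its set of red edges, set $G':=G[E']$ (whose maximum degree is at most $\Delta$), apply \Cref{lemma:balancedsplitting2} to $G'$ via \Cref{eq:perfectsplit_roundup}, keep the edges that are colored red in the resulting sub-labeling red, and recolor all other edges of $G$ blue; this costs $\TBO(\Delta)$ rounds. A node $v$ of degree $\Delta$ in $G$ has degree $\dred'(v)\in\set{2y+1,2y+2}$ in $G'$, and since $\lceil(2y+1)/2\rceil=\lceil(2y+2)/2\rceil=y+1$, \Cref{eq:perfectsplit_roundup} yields $\dred(v)\in\set{y,y+1}$, precisely the requirement of $\Pi_\Delta(y)$. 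For a node $v$ of degree $d<\Delta$, I would introduce the parameters $\tilde{x}_d',\alpha_d',\beta_d'$ associated with $\Pi(2y+1)$, record the identity $\tilde{x}_d'=2\tilde{x}_d+\tau_d$ (immediate from $\tilde{x}_d'=\frac{d}{\Delta}\left(2y+\frac{3}{2}\right)$ compared with \Cref{eq:Piy_vars}), and then run the same case distinction on the fractional part $\beta_d$ as in the proof of \Cref{lemma:pi(2y)}, only with the roles of ``round up'' and ``round down'' swapped, to verify that $\dred(v)\in\set{\lceil\dred'(v)/2\rceil-1,\lceil\dred'(v)/2\rceil}$ always lands in $\set{x_d-1,x_d,x_d+1}$ for a value $x_d$ that meets the $\tau_d$/$\alpha_d$/$\beta_d$ conditions of $\Pi(y)$.

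For the second statement I would reduce to the first one using the color-switching equivalence of \Cref{lemma:Piequivalence}: swapping the two colors in a solution to $\Pi(2y-1)$ gives a solution to $\Pi(\Delta-1-(2y-1))$, which by the periodic extension equals $\Pi(2z+1)$ for $z:=\Delta-1-y$, and the assumption $y>(\Delta-3)/2$ gives $z<(\Delta+1)/2$. If moreover $z\le(\Delta-3)/2$ — equivalently $y\ge(\Delta+1)/2$ — then the first statement applied to this solution produces a solution to $\Pi(z)=\Pi(\Delta-1-y)$, and swapping the colors back yields the desired solution to $\Pi(y)$. The remaining values are the one or two $y$ with $(\Delta-3)/2<y<(\Delta+1)/2$, namely $y=(\Delta-1)/2$ if $\Delta$ is odd and $y\in\set{(\Delta-2)/2,\Delta/2}$ if $\Delta$ is even; for each of these the target problem $\Pi(y)$ is ``near balanced'' and, as one checks directly using $\tilde{x}_d=\frac{d}{2}\pm\tau_d$ and the resulting $\tau_d$-conditions, it can simply be produced from scratch by applying \Cref{lemma:balancedsplitting2} to $G$ itself — with \Cref{eq:perfectsplit_rounddown} when $y=\lfloor\Delta/2\rfloor$ and with \Cref{eq:perfectsplit_roundup} when $y=\lceil\Delta/2\rceil-1$ — so the input solution is not even used in these cases.

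I expect the bulk of the work to be the $\beta_d$ case analysis in the first statement, which parallels the multi-way split in the proof of \Cref{lemma:pi(2y)} and is by far the longest part, together with checking the handful of boundary values near $y=\Delta/2$ in the second statement; the round complexity is immediate, since each transformation above is a single application of \Cref{lemma:balancedsplitting2} plus $O(1)$ rounds of bookkeeping.
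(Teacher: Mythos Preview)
Your proposal is correct and follows essentially the same route as the paper: apply \Cref{lemma:balancedsplitting2} with the round-up variant \Cref{eq:perfectsplit_roundup} to the red subgraph of the $\Pi(2y+1)$ solution, verify $\Pi_\Delta(y)$ via $\lceil(2y+1)/2\rceil=\lceil(2y+2)/2\rceil=y+1$, and run the same $\beta_d$-based case split for nodes of smaller degree using the relation $\tilde{x}_d'=2\tilde{x}_d+\tau_d$; the second statement is then obtained by the color-swap reduction through \Cref{lemma:Piequivalence}, exactly as the paper does.

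One small point worth noting: the paper's own analysis of the second part is carried out only for $y\ge(\Delta+1)/2$ (where $2y-1\ge\Delta$ and the wraparound calculation goes through cleanly), silently leaving the one or two values with $(\Delta-3)/2<y<(\Delta+1)/2$ to be absorbed by the base cases in the proof of \Cref{thm:arbitrary2splitting}. Your explicit treatment of these boundary values---observing that $\Pi(\lfloor\Delta/2\rfloor)$ and $\Pi(\lceil\Delta/2\rceil-1)$ can be produced directly from \Cref{lemma:balancedsplitting2} applied to $G$ itself---makes the lemma self-contained and is a slight improvement in exposition over the paper.
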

\begin{proof}
	We again first describe the transformation from $\Pi(2y+1)$ to $\Pi(y)$ and we afterwards prove that the resulting problem satisfies all the requirements of problem $\Pi(y)$. W.l.o.g., we assume that $y\in \set{0,\dots,\Delta-2}$. We distinguish two cases.
	\begin{itemize}
		\item $y \le \frac{\Delta - 3}{2}$:
		\begin{enumerate}
			\item Let $E'$ be the set of red edges in the given solution to $\Pi(2y+1)$ and let $G'$ be the subgraph of $G$ induced by the edges in $E'$
			\item We apply \Cref{lemma:balancedsplitting2} (with \Cref{eq:perfectsplit_roundup}) to $G'$. The edges that are colored red in the resulting red-blue labeling of $G'$ remain red and the remaining edges of $E'$ together with the edges in $E\setminus E'$ become blue in the final labeling.
		\end{enumerate}
		\item $y > \frac{\Delta - 3}{2}$:
		\begin{enumerate}
			\item We first switch the colors in the given solution to $\Pi(2y+1)$
			\item Let $E'$ be the set of red edges in the current solution to $\Pi(2y+1)$ and let $G'$ be the subgraph of $G$ induced by the edges in $E'$
			\item We apply \Cref{lemma:balancedsplitting2} (with \Cref{eq:perfectsplit_roundup}) to $G'$. The edges that are colored red in the resulting red-blue labeling of $G'$ remain red and the remaining edges of $E'$ together with the edges in $E\setminus E'$ become blue.
			\item We again switch the colors for the final labeling of the edges.
		\end{enumerate}
	\end{itemize}
	
	For the analysis, we also consider the two cases.
	\begin{itemize}
		\item $y \le \frac{\Delta - 3}{2}$: For a node $v$, let $\dred'(v)$ be its degree in $G'$ (or equivalently, $\dred'(v)$ is the number of red edges in the initial solution to $\Pi(2y)$). We further use $\dred(v)$ to denote the number of red edges after executing the algorithm. We thus need to show that the values of $\dred(v)$ satisfy the requirements of problem $\Pi(y)$. Note that for every $v\in V$, by \Cref{eq:perfectsplit_roundup}, the number of red edges after executing the algorithm is
		\begin{equation}\label{eq:dredbound2}
			\dred(v) \in \set{\left\lceil\frac{\dred'(v)}{2}\right\rceil-1, \left\lceil\frac{\dred'(v)}{2}\right\rceil}.
		\end{equation}
		We first concentrate on nodes $v$ that have degree $\Delta$ in $G$. Since the initial red-blue labeling is a solution to $\Pi(2y+1)$, we know that for each such node $v$, $\dred'(v)\in \set{2y+1, 2y+2}$. Note that $\lceil (2y+1)/2\rceil = \lceil (2y+2)/2\rceil = y+1$. We therefore have $\dred(v)\in\set{y, y+1}$ as it is required for nodes of degree $\Delta$ in problem $\Pi_\Delta(y)$ (and thus in problem $\Pi(y)$).
		
		Let us now consider some node $v$ of degree $d<\Delta$ in $G$. As in the proof of \Cref{lemma:pi(2y)}, we use the following definition from \Cref{def:arbitrary2splitting}:
		\begin{equation}\label{eq:Piy_vars2}
			\tilde{x}_d := \frac{d}{\Delta}\cdot\left(y+\frac{1}{2}\right),\quad
			\alpha_d := \lfloor \tilde{x}_d \rfloor,\quad \beta_d := \tilde{x}_d - \alpha_d,\quad\text{and}\quad\tau_d := \frac{d}{2\Delta}.
		\end{equation}
		
		For the problem $\Pi(2y+1)$, we define the corresponding values as follows.
		\begin{equation}\label{eq:Pi2yp1_vars}
			\tilde{x}_d' := \frac{d}{\Delta}\cdot\left(2y+\frac{3}{2}\right),\quad
			\alpha_d' := \lfloor \tilde{x}_d' \rfloor,\quad\text{and}\quad \beta_d' := \tilde{x}_d' - \alpha_d'.
		\end{equation}
		Note that we have
		\begin{equation}
			\label{eq:farsrelation2}
			\tilde{x}_d = \alpha_d + \beta_d\quad\text{and}\quad
			\tilde{x}_d' = \alpha_d' + \beta_d' = 2\alpha_d' + 2\beta_d' + \tau_d.
		\end{equation}
		We again make a case distinction based on the value of $\beta_d$.
		\begin{itemize}
			\item $2\beta_d +\tau_d < 1$: In this case, we know that $\alpha_d' = 2\alpha$ and $\beta_d'=2\beta_d+\tau_d\geq 1-\tau_d$. Note that this implies that $x_d' \in \set{\alpha_d',\alpha_d'+1}=\set{2\alpha_d,2\alpha_d+1}$ and thus $\dred'(v)\in\set{2\alpha_d-1,\dots,2\alpha_d+2}$. We thus have $\dred(v)\geq \lceil(2\alpha_d-1)/2\rceil-1 = \alpha_d-1$ and $\dred(v)\leq\lceil(2\alpha_d+2)/2\rceil = \alpha_d+1$. This implies that we need $x_d=\alpha_d$, which is fine as long as $\beta_d<1-\tau_d$. This however holds because we have $\beta_d<(1-\tau_d)/2 <1/2<1-\tau_d$.
			\item $1\leq 2\beta_d + \tau_d < 2$. In this case, we know that $\alpha_d' = 2\alpha_d+1$ and $\beta_d' = 2\beta_d +\tau_d-1$. We make a further case distinction.
			\begin{itemize}
				\item $\beta_d \leq \tau_d$: In this case, we have $\beta_d' \leq 2\tau_d + \tau_d-1<\tau_d$ because $\tau_d<1/2$. We therefore know that $x_d' =\alpha_d'=2\alpha_d+1$ and we need that $x_d =\alpha_d$. We thus have $\dred'(v)\in \set{2\alpha_d,2\alpha_d+1, 2\alpha_d+2}$ and we need $\dred(v)\in\set{\alpha_d-1, \alpha_d, \alpha_d+1}$. By \Cref{eq:dredbound2}, we have $\dred(v)\geq \lceil (2\alpha_d)/2\rceil-1 = \alpha_d-1$. We similarly have $\dred(v)\leq \lceil (2\alpha_d + 2)/2\rceil=\alpha_d+1$.
				\item $\tau_d<\beta_d<1-\tau_d$: Note that since $\beta_d$ is between $\tau_d$ and $1-\tau_d$, in this case, we only need to show that either $x_d=\alpha_d$ and thus $\dred(v)\in\set{\alpha_d-1,\alpha_d,\alpha_d+1}$ or $x_d = \alpha_d+1$ and thus $\dred(v)\in \set{\alpha_d,\alpha_d+1, \alpha_d+2}$. Independently of what the value of  $\beta_d'$ is, we know that $x_d' = 2\alpha_d+1$ or $x_d' = 2\alpha_d+2$. In the first case, we get $\dred(v)\geq \lceil (2\alpha_d)/2\rceil-1 = \alpha_d-1$ and $\dred(v)\leq \lceil(2\alpha_d+2)/2\rceil=\alpha_d+1$. In the second case, we get $\dred(v)\geq \lceil (2\alpha_d+1)/2\rceil-1 = \alpha_d$ and $\dred(v)\leq \lceil(2\alpha_d+3)/2\rceil=\alpha_d+2$. We therefore get a valid solution for $\Pi(y)$ in either case.
				\item $\beta_d\geq 1-\tau_d$: In this case, we have $\beta_d'\geq 2\beta_d-\tau_d -1> 1-\tau_d$ (because $\beta_d<1$). We therefore know that $x_d'=\alpha_d'+1=2\alpha_d+2$ and thus $\dred'(v)\in\set{2\alpha_d+1, 2\alpha_d+2, 2\alpha_d+3}$. As above, in this case, we get $\dred(v)\in\set{\alpha_d, \alpha_d+1, \alpha_d+2}$, which is what we need because $\beta_d\geq 1-\tau_d$.
			\end{itemize}
			\item $2\beta_d+\tau_d \geq 2$: In this last case, we have $\alpha_d'=2\alpha_d+2$ and $\beta_d'=2\beta_d+\tau_d-2<\tau_d$ (because $\beta_d<1$). We further have $\beta_d\geq (2-\tau_d)/2=1-\tau_d/2>1-\tau_d$. We therefore know that $\dred'(v)\in\set{\alpha_d'-1,\alpha_d',\alpha_d'+1}=\set{2\alpha_d+1,2\alpha_d+2,2\alpha_d+3}$ and we need $\dred(v)\in\set{\alpha_d, \alpha_d+1,\alpha_d+2}$. By using \Cref{eq:dredbound2}, we get $\dred(v)\geq \lceil(2\alpha_d+1)\rceil -1 = \alpha_d$ and $\dred(v)\leq \lceil(2\alpha_d+3)\rceil = \alpha_d+1$ as needed.
		\end{itemize}
		
		\item $y \ge \frac{\Delta+1}{2}$: Note that since $y\geq (\Delta+1)/2$, we have $2y-1\geq \Delta$. Assume that $2y-1=\Delta-1 + k$ for some $k\in \set{1,\dots,\Delta-4}$. By definition, we have $\Pi(2y-1)=\Pi(k)$. As step (1), we switch the colors, we move from $\Pi(k)$ to the equivalent $\Pi(\Delta-1-k)$ (cf.\ \Cref{lemma:Piequivalence}). That is, before defining $G'$ and applying degree splitting part to $G'$, the set of red edges satisfies the conditions of $\Pi(\Delta-1-k)$. Note that $\Delta-1-k$ has the same parity as $\Delta-1+k$ and since $\Delta-1+k=2y-1$, we know that $\Delta-1-k$ is odd. We can therefore write the value as $\Delta-1-k=2z+1$ for some $0\leq z\leq (\Delta-3)/2$. Steps (2) and (3) are now the same as steps (1) and (2) in the first case. With the same analysis as before, we can therefore conclude that the red-blue labeling after step (3) is a valid solution to $\Pi(z)$. Since we again switch the colors in step (4), by \Cref{lemma:Piequivalence}, at the end, the red-blue labeling satisfies the conditions of $\Pi(\Delta-1-z)$. We have
		\[
		\Delta - 1 - z = \Delta - 1 - \frac{\Delta - 2 - k}{2} = \frac{\Delta + k}{2} = y.
		\]
	\end{itemize}
	This concludes the proof of the lemma.
\end{proof}

We are now ready to prove \Cref{thm:arbitrary2splitting}, i.e., that for every $y\in \set{0,\dots,\Delta-2}$, problem $\Pi(y)$ can be computed in time $O(\log\Delta)\cdot\TBO(\Delta)$.

\begin{proof}[\textbf{Proof of \Cref{thm:arbitrary2splitting}.}]
	We first note that for some values of $y$, $\Pi(y)$ is straightforward to obtain. First, $\Pi(0)$ can be solved in $0$ rounds by just coloring all edges blue. In this scenario, $\Pi_{\Delta}(0)$ is satisfied and for all vertices with degree \(d < \Delta\), \(\tilde{x}_d = \frac{d}{2\Delta} < \frac{1}{2}\) leads to \(x_d = 0\), thereby satisfying \(\Pi(0)\). Further, by using \Cref{lemma:balancedsplitting2} (i.e., the second part claim of \Cref{thm:balancedsplitting}), we can compute a red-blue coloring of the edges that satisfies \Cref{eq:perfectsplit_rounddown} (or one that satisfies \Cref{eq:perfectsplit_roundup}). This approach directly satisfies \(\Pi_{\Delta}(y)\) for \(y = \left\lfloor\frac{\Delta}{2}\right\rfloor\). For vertices of degree \(d < \Delta\), the value of \(\tilde{x}_d\) is determined as follows, based on whether \(\Delta\) is odd or even:
	
	\begin{itemize}
		\item For even \(\Delta\): \(\tilde{x}_d = \frac{d}{\Delta}\left( \frac{\Delta}{2} + \frac{1}{2}\right) = \frac{d}{2} + \frac{d}{2\Delta}\). If \(d\) is odd, then \(x_d = \frac{d - 1}{2}\) ($\frac{d}{2\Delta} < \frac{1}{2}$), and the values \(\left\lfloor\frac{d}{2}\right\rfloor\) and \(\left\lfloor\frac{d}{2}\right\rfloor + 1\) fall within the set \(\left\{ \frac{d - 3}{2}, \frac{d -1}{2}, \frac{d + 1}{2} \right\}\). For even \(d\), \(x_d = \frac{d}{2}\), with \(\left\lfloor\frac{d}{2}\right\rfloor\) and \(\left\lfloor\frac{d}{2}\right\rfloor + 1\) belonging to the set \(\left\{ \frac{d}{2} - 1, \frac{d}{2}, \frac{d}{2} + 1 \right\}\).
		\item For odd \(\Delta\): \(\tilde{x}_d = \frac{d}{\Delta}\cdot\frac{\Delta}{2} = \frac{d}{2}\). If \(d\) is odd, then \(x_d = \frac{d - 1}{2}\) and if $d$ is even \(d\), \(x_d = \frac{d}{2}\), which are the two cases that were already covered in the previous case.
	\end{itemize}
	
	Hence, for $y=\big\lfloor\frac{\Delta}{2}\big\rfloor$, problem $\Pi(y)$ follows from \Cref{eq:perfectsplit_rounddown}. By \Cref{lemma:Piequivalence}, we further know that problems $\Pi\left(\big\lfloor\frac{\Delta}{2}\big\rfloor\right)$ and $\Pi\left(\big\lceil\frac{\Delta}{2}\big\rceil-1\right)$ are equivalent and therefore also $\Pi\left(\big\lceil\frac{\Delta}{2}\big\rceil-1\right)$ can be solved in time $\TBO(\Delta)$.
	
	In the case, where $y$ does not fit one of the straightforward scenarios, we proceed as follows. For some integer $k$, let $y_1, \dots, y_k$ be $k$ consecutive integers such that for all $i\in \set{1,\dots,k-1}$, $y_{i+1}=y_i+1$. Assume that we want to solve one of the problems $\Pi(y_1),\dots,\Pi(y_k)$. If for some $i$, $y_i\mod (\Delta-1)\in \set{0, \lceil \Delta/2\rceil -1, \lfloor \Delta/2\rfloor}$, we know that the task of solving one of the $\Pi(y_i)$ can be completed in time at most $\TBO(\Delta)$. Assume therefore that
	\begin{equation}
		\label{eq:notyetsolved}
		\forall i\,:\, y_i\mod (\Delta-1)\not\in \set{0, \left\lceil \frac{\Delta}{2}\right\rceil -1, \left\lfloor\frac{\Delta}{2}\right\rfloor}.
	\end{equation}
	We show that in this case, one can construct a sequence $y_1',\dots,y_{2k}'$ of $2k$ consecutive integers (again $y_{i+1}'=y_{i}'+1$) such that given a solution to $\Pi(y_j')$ for some $j\in \set{1,\dots,2k}$, it is possible to solve one of the problems $\Pi(y_i)$ for $i\in \set{1,\dots,k}$ in time $\TBO(\Delta)$. Note that if \Cref{eq:notyetsolved} holds, then for all $i\in \set{1,\dots,k}$, it either holds that $y_i \mod (\Delta-1) \le \frac{\Delta - 3}{2}$  or $y_i \mod (\Delta-1) \ge \frac{\Delta + 1}{2}$. We consider the two cases separately.
	\begin{itemize}
		\item If $y_i \mod (\Delta-1) \le \frac{\Delta - 3}{2}$ for all $i$, then \Cref{lemma:pi(2y),lemma:pi(2yp1)} imply that for $i\in \set{1,\dots,k}$, we can compute $\Pi(y_i)$ in time $\TBO(\Delta)$ when given a solution to $\Pi(2y_i)$ or $\Pi(2y_i+1)$. For $y_1'=2y_1$ and $y_{2k}'=2y_k+1$, given a solution to one of the problems $y_1',\dots,y_{2k}'$, we can therefore compute one of the problems $\Pi(y_i)$ for $i\in \set{1,\dots,k}$ in time $\TBO(\Delta)$.
		\item If $y_i \mod (\Delta-1) \ge \frac{\Delta + 1}{2}$ for all $i$, then \Cref{lemma:pi(2y),lemma:pi(2yp1)} imply that for $i\in \set{1,\dots,k}$, we can compute $\Pi(y_i)$ in time $\TBO(\Delta)$ when given a solution to $\Pi(2y_i-1)$ or $\Pi(2y_i)$. For $y_1'=2y_1-1$ and $y_{2k}'=2y_k$, given a solution to one of the problems $y_1',\dots,y_{2k}'$, we can therefore compute one of the problems $\Pi(y_i)$ for $i\in \set{1,\dots,k}$ in time $\TBO(\Delta)$.
	\end{itemize}
	The theorem now follows by repeating this process until the obtained sequence contains one of the problems in $\set{\Pi(0), \Pi\big(\lceil \Delta/2\rceil -1\big), \Pi\big(\lfloor \Delta/2\rfloor\big)}$. As the sequence of problems that suffice to solve $\Pi(y)$ grows by a factor of $2$ in each step, we only need to repeat the process at most $O(\log\Delta)$ times. Hence, $\Pi(y)$ can be computed in at most $O(\log\Delta)\cdot\TBO(\Delta)$ rounds.
\end{proof}

\section{Unbalanced Orientations}
\label{sec:unbalanced}

The goal of this section is to prove \Cref{thm:orientationsUpper,thm:orientationsLower}. We start with \Cref{thm:orientationsUpper}. 

\begin{proof}[\textbf{Proof of \Cref{thm:orientationsUpper}.}]
	Recall that for a graph $G=(V,E)$ and $\rho_1,\rho_2\in[0,1/2]$ such that $\rho_1+\rho_2\geq 1/2$, we need to show that there is a $\TLLL(\Delta^2)$-round algorithm to compute an edge orientation in which every $v\in V$ has outdegree at most $\rho_1\cdot\deg(v) + O(\sqrt{\Delta\log\Delta})$ or indegree at most $\rho_2\cdot\deg(v)+O(\sqrt{\Delta\log\Delta})$. To prove this, we show that the construction of such an edge orientation can be phrased directly as an LLL problem with polynomial criterion. To see this, we define parameters
	\[
	\eta := \frac{1/2 - \rho_2}{1-\rho_1 - \rho_2}\quad\text{and}\quad
	\nu := \rho_1 + \rho_2 - \frac{1}{2}.
	\]
	We define the following (independent) $0$/$1$-random variables: A random variable $X_v\in \set{0,1}$ with $\Pr(X_v=1)=\eta$ for every node $v\in V$, as well as random variables $Y_e,Z_e\in \set{0,1}$ with $\Pr(Y_e=1)=\nu$ and $\Pr(Z_e=1)=1/2$ for every edge $e\in E$. We use those random variables to define a random edge orientation of $G$. For every edge $e=\set{u,v}\in E$ with $\mathrm{ID}(u)<\mathrm{ID}(v)$, we define the orientation of $e$ as follows.
	\begin{itemize}
		\item If $X_u=X_v$, $e$ is oriented from $u$ to $v$ if $Z_e=1$ and it is oriented from $v$ to $u$ otherwise.
		\item If $X_u=0$ and $X_v=1$, $e$ is oriented from $u$ to $v$ if $Y_e=0$ and it is oriented from $v$ to $u$ otherwise.
		\item If $X_u=1$ and $X_v=0$, $e$ is oriented from $u$ to $v$ if $Y_e=1$ and it is oriented from $v$ to $u$ otherwise.
	\end{itemize}
	In the following, let $\gamma>0$ be a constant that we will fix to be sufficiently large. For every node $v\in V$, let $d_0(v)$ be the number of neighbors $u$ of $v$ for which $X_u=0$ and let $d_1(v)$ be the number of neighbors $u$ of $v$ for which $X_u=1$. Note that for every node $v$, we have $\E[d_0(v)]=(1-\eta)\dot d(v)$ and $\E[d_1(v)]=\eta\cdot d(v)$. Note that $d_0(v)$ and $d_1(v)$ are both binomially distributed. By \Cref{lemma:chernoff} (see below), for a sufficiently large constant $c>0$, we therefore have
	\begin{equation}\label{eq:vertexsplitting1}
		\Pr\left(
		\left|d_0(v) - (1-\eta)\cdot d(v)\right| \leq c\sqrt{\Delta\log\Delta}
		\right) \leq \frac{1}{\Delta^{\gamma}}
	\end{equation}
	and
	\begin{equation}\label{eq:vertexsplitting2}
		\Pr\left(
		\left|d_1(v) - \eta\cdot d(v)\right| \leq c\sqrt{\Delta\log\Delta}
		\right) \leq \frac{1}{\Delta^{\gamma}}.
	\end{equation}
	In the following, we fix the assignment of variables $X_v$ for all $v\in V$ and only consider the conditional probability space over the random choice of the edge random variables. Let $V_0$ and $V_1$ be the set nodes $v\in V$ that choose $X_v=0$ and $X_v=1$, respectively. We first consider the case of a node in $v\in V_0$. Let $u$ be a neighbor of node $v$ and w.l.o.g., assume that $\mathrm{ID}(u)<\mathrm{ID}(v)$. Let $e=\set{u,v}$ be an edge between $u$ and $v$. Edge $e$ is an incoming edge for $v$ if either $X_u=0$ and $Z_e=1$ or $X_u=1$ and $Y_e=1$. Let $D_{in}^{(0)}(v)$ be the indegree of a node $v\in V_0$. We have
	\begin{equation}\label{eq:condexp0}
		\E\big[D_{in}^{(0)}(v)\big]  = \frac{1}{2}\cdot d_0(v) + \nu\cdot d_1(v).
	\end{equation}
	We now consider the case of a node in $v\in V_1$. Let $u$ be a neighbor of node $v$ and w.l.o.g., assume that $\mathrm{ID}(u)<\mathrm{ID}(v)$. Let $e=\set{u,v}$ be an edge between $u$ and $v$. Edge $e$ is an outgoing edge for $v$ if which either $X_u=1$ and $Z_e=0$ or $X_u=0$ and and $Y_e=1$. Let $D_{out}^{(1)}(v)$ be the indegree of a node $v\in V_1$. We have
	\begin{equation}\label{eq:condexp1}
		\E\big[D_{out}^{(1)}(v)\big]  = \frac{1}{2}\cdot d_1(v) + \nu\cdot d_0(v).
	\end{equation}
	In the conditional probability space, where the $X_v$-variables are already fixed, the two random variables $D_{in}^{(0)}(v)$ and $D_{out}^{(1)}(v)$ are sums of two (independent) Binomial random variables. We can therefore use \Cref{lemma:chernoff} to show that with probability at least $1-\Delta^{-\gamma}$, for a sufficiently large constant $c>0$, both variables deviate at most $c\sqrt{\Delta\log\Delta}$ (additively) for their expectations as given in \Cref{eq:condexp0,eq:condexp1}. Together with \Cref{eq:vertexsplitting1,eq:vertexsplitting2}, this implies that also in the original non-conditional probability space, $D_{in}^{(0)}(v)$ and $D_{out}^{(1)}(v)$ deviate at most $c\sqrt{\Delta\log\Delta}$ from their expectation  with probability at least $1-\Delta^{-\gamma}$. The expectations of $D_{in}^{(0)}(v)$ and $D_{out}^{(1)}(v)$ can be computed as follows.
	\begin{eqnarray*}
		\E\big[D_{in}^{(0)}(v)\big] & = & \frac{1}{2}\cdot\E\big[d_0(v)\big] + \nu\cdot \E\big[d_1(v)\big]\ =\ 
		\frac{1-\eta}{2}\cdot d(v) + \eta\cdot \nu\cdot d(v)\\
		& = & \left(\frac{1/2-\rho_1}{2(1-\rho_1-\rho_2)} + \frac{(1/2-\rho_2)\cdot(\rho_1+\rho_2-1/2)}{1 - \rho_1  -\rho_2}
		\right)\cdot d(v)\ =\ \rho_2\cdot d(v),\\
		\E\big[D_{out}^{(1)}(v)\big] & = & \frac{1}{2}\cdot\E\big[d_1(v)\big] + \nu\cdot \E\big[d_0(v)\big]\ =\ 
		\frac{\eta}{2}\cdot d(v) + (1-\eta)\cdot \nu\cdot d(v)\\
		& = & \left(\frac{1/2-\rho_2}{2(1-\rho_1-\rho_2)} + \frac{(1/2-\rho_1)\cdot(\rho_1+\rho_2-1/2)}{1 - \rho_1  -\rho_2}
		\right)\cdot d(v)\ =\ \rho_1\cdot d(v).
	\end{eqnarray*}
	With probability $1 - \Delta^{-\gamma'}$ for a $\gamma'$ that we can choose, there therefore exists a constant $c'>0$ such that for every node $v\in V$ that picks $X_v=0$, the indegree is $\rho_2\cdot d(v) \pm c'\cdot\sqrt{\Delta\log\Delta}$ and for every node $v\in V$ that picks $X_v=1$, the outdegree is $\rho_1\cdot d(v) \pm c'\cdot\sqrt{\Delta\log\Delta}$.
	
	We now have a setting to which we can directly apply a constructive LLL algorithm (with polynomial LLL criterion) to obtain an orientation in which every node either has indegree $\rho_2\cdot d(v) \pm c'\cdot\sqrt{\Delta\log\Delta}$ or it has outdegree $\rho_1\cdot d(v) \pm c'\cdot\sqrt{\Delta\log\Delta}$. We create a bad event for every node such that the bad event for node $v$ occurs if $v$ does not satisfy the required criterion. The event for node $v$ only depends on the random variables $Y_e$ and $Z_e$ of the edges of node $v$ and on the random variables $X_v$, and well as $X_u$ for the neighbors of $v$. The events of two nodes $u$ and $v$ are therefore dependent if the two nodes are within distance $2$ in the graph and otherwise they are independent. The dependency degree of the graph of bad events is therefore $O(\Delta^2)$. The desired orientation can therefore be computed in time $\TLLL(\Delta^2)$ as claimed.
\end{proof}

The following simple technical lemma is used in the above proof.

\begin{lemma}\label{lemma:chernoff}
	Let $n$ and $k$ be positive integers such that $k\leq n$ and let $p\in [0,1]$ be a probability (which can depend on $n$ or $k$). Consider a random variable $X\sim\mathrm{Binomial}(k,p)$. For every constant $c>0$, there exists a constant $c'>0$ such that
	\[
	\Pr\left(|X-pk| \leq c'\cdot\sqrt{n\log n}\right) \leq \frac{1}{n^c}.
	\]
\end{lemma}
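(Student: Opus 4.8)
Read literally, the stated bound cannot hold: the quantity $\Pr(|X-pk|\le t)$ is nondecreasing in $t$ and tends to $1$ as $t\to\infty$, so already for $p$ equal to a fixed constant and $k=n$ we have $\Pr(|X-pk|\le c'\sqrt{n\log n})\to 1$ for every constant $c'>0$, which is incompatible with an upper bound of $n^{-c}$. The inequality inside the probability must therefore read ``$\ge$'' rather than ``$\le$''; indeed, the deviation bound $\Pr(|X-pk|\ge c'\sqrt{n\log n})\le n^{-c}$ is exactly the form in which the lemma is invoked when deriving \eqref{eq:vertexsplitting1}, \eqref{eq:vertexsplitting2}, \eqref{eq:condexp0}, and \eqref{eq:condexp1}. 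I therefore describe the proof of this corrected statement, which is entirely standard.

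First I would write $X=\sum_{i=1}^{k}B_i$ as a sum of $k\le n$ independent $\{0,1\}$-valued Bernoulli$(p)$ variables, so that $\E[X]=pk$ and each summand lies in $[0,1]$. The one point worth a moment's thought is that $p$ is allowed to depend on $n$ and $k$, so the concentration rate must not degrade as $p\to 0$; this rules out using the multiplicative Chernoff bound directly and makes Hoeffding's inequality the convenient tool. Hoeffding gives
\[
\Pr\!\left(|X-pk|\ge t\right)\ \le\ 2\exp\!\left(-\frac{2t^2}{k}\right)\ \le\ 2\exp\!\left(-\frac{2t^2}{n}\right),
\]
where the second step uses $k\le n$. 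Substituting $t=c'\sqrt{n\log n}$ turns the right-hand side into $2\exp(-2c'^{2}\log n)=2\,n^{-2c'^{2}}$ (if $\log$ denotes the natural logarithm; another base only rescales the exponent by a constant).

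It then remains only to choose the constant: taking, say, $c':=\sqrt{c+1}$ yields $2\,n^{-2(c+1)}\le n^{-c}$ for every integer $n\ge 2$ (since $n^{\,c+2}\ge 2$), and for $n=1$ the claim is trivial because $\sqrt{n\log n}=0$ forces the right-hand side of the lemma to be $1$ while any probability is at most $1$. This finishes the argument. There is no genuine obstacle: the proof is a one-line application of Hoeffding's inequality, and the only things that need care are spotting the sign typo in the statement and ensuring the chosen $c'$ is uniform in $p$.
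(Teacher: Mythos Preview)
Your proposal is correct and matches the paper's proof essentially line for line: the paper also invokes the Hoeffding bound $\Pr(|X-\E[X]|\ge t)\le 2e^{-2t^2/n}$, handles $k<n$ by padding with $n-k$ dummy zero variables (equivalent to your direct use of $k\le n$ in the exponent), and then sets $t=c'\sqrt{n\log n}$. Your observation that the inequality in the displayed probability is a typo for ``$\ge$'' is apt and is implicitly confirmed by the paper's own proof.
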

\begin{proof}
	By a standard Chernoff-Hoeffding bound~\cite{hoeffding}, we know that if $X$ is a sum of $n$ Bernoulli random variables, then for every $t>0$, we have
	\[
	\Pr\big(|X-\E[X]|\geq t\big) \leq 2\cdot e^{-2t^2/n}.
	\]
	The same bound also holds if $X$ is the sum of $k<n$ independent Bernoulli random variables, because we can always add $n-k$ additional dummy variables that are $0$ with probability $1$. The claim of the lemma therefore follows directly by choosing $t=c\sqrt{n\log n}$ for a sufficiently large constant $c>0$.
\end{proof}

We next prove \Cref{thm:orientationsLower} and show that the orientations that are achieved by \Cref{thm:orientationsUpper} are in some sense essentially optimal.

\begin{proof}[\textbf{Proof of \Cref{thm:orientationsLower}.}]
	Suppose there is an algorithm that orients the edges of  $\Delta$-regular trees in $o(\log_{\Delta}n)$ rounds such that each vertex either has outdegree at most $\rho_{1}\cdot\Delta - c\sqrt{\Delta}$ or has indegree at most $\rho_{2}\cdot \Delta - c\sqrt{\Delta}$, where $\rho_{1} + \rho_{2} = \frac{1}{2}$. This implies that for any $\Delta$-regular graph $G$ of girth $\Omega(\log_\Delta n)$, this problem can be also solved in $o(\log_{\Delta}n)$ rounds because nodes can locally not distinguish between the two cases. We will show that for  $\rho_{1} + \rho_{2} = \frac{1}{2}$ and sufficiently large $c$, there are $\Delta$-regular graphs with girth $\Omega(\log_\Delta n)$ for which no such orientation exists.
	
	
	For this proof, we need to recap some basic spectral graph theory. The adjacency matrix of an undirected $n$-node graph $G=(V,E)$, denoted by $A(G)$, is an $n\times n$ squared matrix, where for each pair of node $u, v\in V$, $A_{u,v} = A_{v,u}= 1$ if $\set{u,v} \in E$ and $A_{u,v} = A_{v,u}= 0$ otherwise. The degree matrix of $G$, denoted by $D(G)$, is a diagonal $n \times n$ matrix, where for each vertex $v \in V$, $D_{v,v} = d(v)$. In addition, the Laplacian matrix of $G$, denoted by $L(G)$, is defined by $L(G) := D(G) - A(G)$. In the following for simplicity we use $A$, $D$, and $L$ whenever we refer to $A(G)$, $D(G)$, and $L(G)$ and $G$ is clear from the context. Let $\mu_1\geq\dots\geq \mu_n$ be the eigenvalues of $A$ and let $0=\lambda_1\leq\dots\leq \lambda_n$ be the eigenvalues of $L$. We note that all those eigenvalues are in $\mathbb{R}$ and if $G$ is $\Delta$-regular, then for each $i$, it holds that $\lambda_i = \Delta - \mu_i$. Friedman~\cite{Friedman2004APO} showed that for the smallest eigenvalue $\mu_n$ of the adjacency matrix of a random $\Delta$-regular graph, w.h.p.\ it holds that $|\mu_n|\leq 2\sqrt{\Delta - 1} + \eps$, for any $\eps > 0$ and thus $\mu_n\geq -2\sqrt{\Delta-1}-\eps\geq -3\sqrt{\Delta}$ for a large enough $\Delta$. We therefore know that the largest eigenvalue $\lambda_n \le \Delta + 3\sqrt{\Delta}$.
	
	The largest eigenvalue of a Laplacian matrix can be computed as the following:
	\[
	\lambda_n := \max_{\vec{x}\in\mathbb{R}^n}\frac{\vec{x}^{\top}L\vec{x}}{\vec{x}^{\top}\vec{x}} = \max_{\vec{x}\in\mathbb{R}^n, \vec{x}^{\top}\vec{x} = 1}\vec{x}^{\top}L\vec{x}.
	\]
	The quadratic form $\vec{x}^{\top}L\vec{x}$ can be conveniently written as
	\[
	\vec{x}^{\top}L\vec{x} = \sum_{\set{u,v}\in E(G)}(x_u - x_v)^2,
	\]
	where $x_v$ is the coordinate of $\vec{x}$ corresponding to node $v \in V$. For a vector $\vec{x}$ with $\vec{x}^T\vec{x}=1$, we therefore have $\sum_{\set{u,v}\in E(G)}(x_u - x_v)^2 \le \lambda_n$. Let $C$ be a subset of the vertices, $C \subset V$ of the graph $G$, where $|C| = k < n$. Furthermore, let $E(C)$ be the set of edges that are incident to a vertex in $C$ and a vertex in $V\setminus C$. In addition, for each vertex $v \in C$, let $x_v = w$ and for each vertex $v \in V\setminus C$ let $x_v = -y$ where $w$ and $y$ are both positive numbers. Following the above equation we have the following:
	\[
	\vec{x}^{\top}L\vec{x} = \sum_{\set{u,v}\in E(G)}(x_u - x_v)^2 = |E(C)|\cdot (w + y)^2.
	\]
	Therefore, in a random $\Delta$-regular graph, $|E(C)|(w + y)^2 \le \lambda_n \le \Delta + 3\sqrt{\Delta}$. As we will see, based on this, one can show that a random $\Delta$-regular graph does not have an orientation where every node either has outdegree at most $\rho_{1}\cdot\Delta - c\sqrt{\Delta}$ or indegree at most $\rho_{2}\cdot \Delta - c\sqrt{\Delta}$, where $\rho_{1} + \rho_{2} = \frac{1}{2}$. However, we need to show this for a $\Delta$-regular graph with girth $\Omega(\log_\Delta n)$, which is not true for a uniformly random $\Delta$-regular graph. With a large probability, such a graph however has at most $O(\sqrt{n})$ short cycles and as described in \cite{binarylabelings,Alon10}, it is possible to change $O(\sqrt{n})$ edges to turn a uniformly random $\Delta$-regular graph into a $\Delta$-regular graph of girth $\Omega(\log_\Delta n)$. In the following, we assume that $G$ is such a $\Delta$-regular graph. Clearly, when exchanging at most $O(\sqrt{n})$ edges, the size of every cut in a graph can shrink or grow by at most $O(\sqrt{n})$. We can therefore assume that $G$ is a high-girth $\Delta$-regular graph for which for some constant $\gamma>0$,
	\[
	(|E(C)|-\gamma\sqrt{n})(w + y)^2 \le \lambda_n \le \Delta + 3\sqrt{\Delta}.
	\]
	We have to choose the values $w$ and $y$ such that $k\cdot w^2  + (n-k)\cdot y^2=1$. We choose $w$ and $y$ as follows.
	\[
	w = \sqrt{\frac{n-k}{nk}}\quad\text{and}\quad y =\sqrt{\frac{k}{n(n-k)}}.
	\]
	Note that those values satisfy $k\cdot w^2  + (n-k)\cdot y^2=1$. One can show that they also maximize the value of $(w+y)^2$ subject to this constraint.
	For this choice of $w$ and $y$, we now get
	\[
	(w+y)^2 = \frac{1}{n}\left(\frac{k}{n - k} + \frac{n - k}{k}\right) + 2\sqrt{\frac{k(n - k)}{n^2k(n-k)}} = \frac{n}{k(n - k)}.
	\]
	We therefore get
	\[
	(|E(C)|-\gamma\sqrt{n})(w + y)^2  = (|E(C)|-\gamma\sqrt{n})\cdot \frac{n}{k(n-k)} \le \Delta + 3\sqrt{\Delta}.
	\]
	This leads to the following upper bound on the cut size $|E(C)|$:
	\begin{equation}\label{eq:cutupperbound1}
		|E(C)| \le \frac{k(n - k)}{n}\cdot(\Delta + 3\sqrt{\Delta}) + \gamma\sqrt{n}.
	\end{equation}
	In the following, we will assume that for some given $\rho_1,\rho_2$ such that $\rho_1+\rho_2=1/2$, it is possible to  compute an orientation in which every node either has outdegree at most $\rho_1\Delta - c\sqrt{\Delta}$ or indegree at most $\rho_2\Delta - c\sqrt{\Delta}$. We will assume that $|C|=k$ and that $C$ is the set of nodes for which the outdegree is at most $\rho_1\Delta - c\sqrt{\Delta}$ and that $V\setminus C$ is the set of nodes for which the indegree is at most $\rho_2\Delta - c\sqrt{\Delta}$. We will see that this leads to a contradiction, independently of the size $k$ of $C$.
	
	First, observe that we need $\rho_1,\rho_2 \geq c/\sqrt{\Delta}$. Assume that this is not the case and that, e.g., $\rho_1<c/\sqrt{\Delta}$. Then, the requirement becomes that all nodes either have outdegree $<0$, which is clearly impossible, or indegree $<\Delta/2$, which is not possible for all nodes. We next argue that both $k$ and $n-k$ cannot be too small. For a given orientation, let $d_{out}(v)$ be the outdegree of some node $v$. In a $\Delta$-regular graph in which $k$ nodes have outdegree at most $\rho_1\Delta$, we have
	\[
	\frac{n\Delta}{2} = \sum_{v\in V} d_{out}(v) \leq k\cdot\rho_1\Delta + (n-k)\cdot \Delta.
	\]
	This implies that
	\begin{equation}\label{eq:k-nk-lower}
		n-k \geq n\cdot\left(1 - \frac{1}{2(1-\rho_1)}\right) \stackrel{(\rho_1 \leq 1/2-c/\sqrt{\Delta})}{\geq}
		n\cdot\left(1-\frac{1}{1 + \frac{2c}{\sqrt{\Delta}}}\right) \geq n\cdot\frac{c}{\sqrt{\Delta}}.
	\end{equation}
	In the last inequality, we use that $2c\leq\sqrt{\Delta}$. Note that otherwise, the requirement that nodes must have outdegree at most $\rho_1\Delta -c\sqrt{\Delta}$ or indegree at most $\rho_2\Delta - c\sqrt{\Delta}$ becomes trivially impossible. We analogously also get the same lower bound for $k$.
	
	In the following let $G_C$ be the subgraph of $G$ induced by the nodes in $C$ and let $G_{V\setminus C}$ be the subgraph of $G$ induced by the nodes in $V\setminus C$. Because the outdegree of every nodes in $C$ is at most $\rho_1\Delta - c\sqrt{\Delta}$, the average degree in $G_C$ is at most $2(\rho_1\Delta - c\sqrt{\Delta})$. Similarly, the average degree in $G_{V\setminus C}$ is at most $2(\rho_2\Delta - c\sqrt{\Delta})$. Because every node in $C$ has degree $\Delta$, on average, the nodes in $C$ have at least $\Delta-2(\rho_1\Delta-c\sqrt{\Delta})$ edges in the cut $E(C)$ and analogously, the nodes in $V\setminus C$ on average have at least $\Delta - 2(\rho_2 -c\sqrt{\Delta})$ edges in the cut $E(C)$. We therefore have the following two lower bounds on the size of $E(C)$:
	\begin{eqnarray*}
		|E(C)| & \geq & k\cdot \big(\Delta-2(\rho_1\Delta-c\sqrt{\Delta})\big),\\
		|E(C)| & \geq & (n-k)\cdot \big(\Delta-2(\rho_2\Delta-c\sqrt{\Delta})\big).
	\end{eqnarray*}
	If we combine the two inequalities with \Cref{eq:cutupperbound1}, we get
	\begin{eqnarray*}
		\Delta-2(\rho_1\Delta-c\sqrt{\Delta}) & \leq\ \frac{|E(C)|}{k}\ \leq &  \frac{n - k}{n}\cdot(\Delta + 3\sqrt{\Delta}) + \gamma\cdot\frac{\sqrt{n}}{k},\\
		\Delta-2(\rho_2\Delta-c\sqrt{\Delta}) & \leq\ \frac{|E(C)|}{n-k}\ \leq & \frac{k}{n}\cdot(\Delta + 3\sqrt{\Delta}) + \gamma\cdot\frac{\sqrt{n}}{n-k}.
	\end{eqnarray*}
	By using the lower bound on $k$ and $n-k$ from \Cref{eq:k-nk-lower}, we get that $\gamma\sqrt{n}/k \leq \gamma\sqrt{n\Delta}/(cn) \leq \gamma/c$ and similarly that $\gamma\sqrt{n}/(n-k)\leq \gamma/c$. As we can choose the constant $c$ sufficiently large, we assume that $c\geq 2\gamma$ such that $\gamma/c\leq 1/2\leq \sqrt{\Delta}/2$. By combining the two inequalities above, we then get
	\[
	2\Delta - 2(\rho_1+\rho_2)\Delta + 2c\sqrt{\Delta} \leq \Delta + 4\sqrt{\Delta}.
	\]
	Solving for $\rho_1+\rho_2$, we get
	\[
	\rho_1 + \rho_2 \geq \frac{1}{2} + (c-2)\cdot\sqrt{\Delta} \stackrel{(\text{if }c>2)}{>} \frac{1}{2}.
	\]
	Hence, if we choose $c>2$, we get a contradiction to the assumption that $\rho_1+\rho_2=1/2$. This concludes the proof.
\end{proof}

\section{Implementation with Small Messages}\label{sec:smallmsg}
The primary focus of this paper is on exploring the randomized
complexity of LCL problems in the \LOCAL model, where messages can be
of arbitrary size. We thus do not specifically discuss algorithms for
the \CONGEST model, where messages have to consist of $O(\log n)$
bits. We however believe that none of our algorithms inherently rely on the unbounded bandwidth offered by the \LOCAL model. In fact, we expect that all our algorithms can be adapted to work in the \CONGEST model with minimal modifications. We next briefly sketch the main ideas of how to achieve this.

The degree splitting algorithm, for example, is built on a reduction to multiple instances of the sinkless orientation problem. To adapt this algorithm to the \CONGEST model, we thus need to ensure that both the reduction process and the sinkless orientation problem can be handled within the message size constraints of the \CONGEST model. We are aware of the existence of an $O(\log n)$-round deterministic and an $O(\log\log n)$-round randomized sinkless orientation algorithm for the \CONGEST model~\cite{randomizedSO}. Moreover, using the general result of \cite{MU21}, a $(\log \log n)^{O(1)}$-round randomized \CONGEST sinkless orientation algorithm is straightforward. Each step of the degree splitting algorithm operates on a virtual graph where several instances of the sinkless orientation are executed in parallel on edge-disjoint subgraphs. Since the nodes of the virtual graph are a subset of the nodes of the communication graph, and the edges consist of edge-disjoint paths, this simulation can be efficiently executed without increasing the message size. Given sinkless orientation algorithms for the \CONGEST model that are as efficient as the \LOCAL model counterparts, we therefore expect that the algorithms underlying the results of \Cref{thm:balancedsplitting,thm:arbitrary2splitting} can be transformed to work in the \CONGEST model at no additional asymptotic cost.

For the edge orientation algorithms that depend on black-box
applications of LLL algorithms, an efficient \CONGEST implementation can likely be achieved by leveraging recent generic \CONGEST algorithms for LLL problems, such as those in \cite{MU21,HMP24}.

\section{Conclusions}

In this paper, we explored the randomized complexity of a subclass of
distributed binary labeling problems in the \LOCAL model. We provided
in some cases tight and in some cases almost tight characterizations
of the randomized complexity of several natural problems, including
red-blue edge colorings and various edge orientation tasks. In
particular, we showed that certain edge coloring and orientation
problems with a deterministic complexities of $\Omega(\log n)$ can be
solved exponentially faster using randomization, achieving time
$O(\log \log n)$ in bounded-degree trees and time
$\tilde{O}(\Delta\cdot\log^4 \log n)$ in general graphs.

We hope that our paper provides a foundation for several interesting
directions for future research. One immediate avenue is to explore the
extension of our techniques to the full set of binary labeling
problems that have been studied in \cite{binarylabelings} (which
includes coloring and orientation problems in
hypergraphs). Characterizing the randomized complexities of binary
labeling problems is a natural next step since they form the only
general class of problems for which a full characterization of the
deterministic complexity is known. Additionally, degree splittings
might be useful for highlevel problems beyond standard edge coloring
for which they have so far mostly been used. We hope that these
directions will lead to deeper insights into the power of
randomization for distributed graph algorithms.
\newpage
\bibliographystyle{alpha}
\bibliography{arxiv}

\end{document}